\documentclass[12pt,english]{article}
\usepackage{libertine}
\usepackage[T1]{fontenc}

\addtolength{\textwidth}{2cm}
\addtolength{\textheight}{6cm}
\usepackage{tabularx}
\usepackage{rotating}
\usepackage{booktabs}
\usepackage{graphicx}
\usepackage{subfig}
\usepackage{caption}
\usepackage{enumerate, ushort}
\usepackage{verbatim}
\usepackage{amsmath}
\usepackage{amssymb,bbm}
\usepackage{amsthm,amsfonts,amscd,graphics} 
\usepackage{graphicx}
\usepackage{epsfig}  
\usepackage{color}
\usepackage{setspace}
\usepackage[multiple]{footmisc}
\usepackage{enumitem}
\usepackage{adjustbox}
\usepackage{multirow}

\usepackage{setspace}
\usepackage[authoryear]{natbib}
\onehalfspacing
\usepackage[unicode=true,pdfusetitle,
 bookmarks=true,bookmarksnumbered=false,bookmarksopen=false,
 breaklinks=false,pdfborder={0 0 1},backref=page,colorlinks=true,citecolor=darkblue,linkcolor=maroon,urlcolor=maroon]
 {hyperref}
\usepackage[nameinlink,noabbrev,capitalize]{cleveref}

\makeatletter

\usepackage{tikz}
\usetikzlibrary{shapes, arrows,positioning,automata,backgrounds}
\tikzstyle{terminator} = [rectangle, draw, text centered, rounded corners, minimum height=2em]
\tikzstyle{pool} = [rectangle, draw, text centered, rounded corners, minimum height=6em]
\tikzstyle{box} = [rectangle, draw, text centered, rounded corners, gray, minimum height=8em, minimum width=14em]
\tikzstyle{connector} = [draw, -latex']

\usepackage{color}
\usepackage{transparent}
\usepackage{subfig}
\PassOptionsToPackage{usenames,dvipsnames,svgnames}{xcolor}
\usepackage{natbib}
\usepackage{epstopdf}
\usepackage{adjustbox}

\setlength{\topmargin}{-.2in}
\setlength{\oddsidemargin}{-0cm}
\setlength{\evensidemargin}{-1cm}
\setlength{\textwidth}{16.3cm}
\setlength{\textheight}{22.3cm}

\newtheorem{lem}{Lemma}
\newtheorem{thm}{Theorem}
\newtheorem{prop}{Proposition}
\newtheorem{corol}{Corollary}
\newtheorem{assum}{Assumption}
\newtheorem{defn}{Definition}
\newtheorem{example}{Example}

\definecolor{darkgreen}{HTML}{006400}
\definecolor{navyblue}{HTML}{000080}
\definecolor{purple}{HTML}{800080}
\definecolor{darkgray}{HTML}{505050}
\definecolor{brown}{HTML}{A52A2A}
\definecolor{teal}{HTML}{008080}
\definecolor{readablered}{HTML}{B22222}
\definecolor{darkblue}{rgb}{0.0,0,.6}
\definecolor{maroon}{rgb}{0.68,0,0}
\definecolor{darkgreen}{rgb}{0,0.369,0.086}
\definecolor{gray}{rgb}{.5,.5,.5}
\definecolor{darkred}{rgb}{.6,0,0}
\definecolor{darkgreen}{rgb}{0,.6,0}
\definecolor{aliceblue}{rgb}{0.94, 0.97, 1.0}
\definecolor{bluegray}{rgb}{0.4, 0.6, 0.8}
\definecolor{shadecolor}{rgb}{.94,.97,1}

\newcommand*\diff{\mathop{}\!\mathrm{d}}

\newcommand{\question}[1]{} 

   \renewcommand{\hat}{\widehat}
   \usepackage[a4paper, margin=1in]{geometry}
        \usepackage{booktabs}
        \usepackage{amsmath, amssymb, amsfonts}
        \usepackage{graphicx}
        \usepackage{siunitx}
        \usepackage{caption}
        \usepackage{tabularx}
        \usepackage{xcolor}
        \usepackage{hyperref}
        \usepackage[utf8]{inputenc}
        \usepackage[T1]{fontenc}
        \usepackage{booktabs}

\title{Good Data and Bad Data:\\ The Welfare Effects of Price Discrimination}

\author{Maryam Farboodi\thanks{MIT Sloan, NBER and CEPR. Email: farboodi@mit.edu.} \and Nima Haghpanah\thanks{Yale School of Management. Email: nima.haghpanah@gmail.com.} \and Ali Shourideh\thanks{Tepper School of Business, Carnegie Mellon University. Email: ashourid@andrew.cmu.edu.}}
\date{\today{\thanks{We thank Nageeb Ali, Dirk Bergemann, Ben Brooks, Alessandro Bonatti, Roberto Corrao, Stephen Morris,  Jonathan Parker, Maryam Saeedi, Roland Strausz, Alexander Wolitzky, Jidong Zhou, and audiences at various seminars and conferences for their thoughts and comments.}}}

\begin{document}
\onehalfspacing

\maketitle  


 \thispagestyle{empty} 

\begin{abstract}

We study how a monopolist’s use of consumer data for price discrimination affects welfare. To answer this question, we develop a model of market segmentation subject to residual uncertainty. We fully characterize when data usage monotonically increases or decreases welfare or when the effect is non-monotone. The characterization reduces the problem to one with only two demand curves, and gives a condition for the two-demand-curves case that highlights that information affects welfare in three distinct ways. In the non-monotone case, we provide tight bounds on the welfare effects of information and identify the best local direction for providing additional information.
\end{abstract}

\newpage
\section{Introduction} \label{sec:intro}
The rise of big data technologies, allowing firms to collect detailed consumer data to estimate their willingness to pay, has reignited the longstanding debate on the welfare implications of price discrimination. The prevalence of these practices has raised  concerns among policymakers about big tech’s exploitation of consumer data.  A letter that followed a Senate hearing on May 2, 2024, reflects these concerns:
\begin{quote}
    As more consumers shop online, large tech platforms have access to vast stores of personal data [\ldots] that can be exploited by corporations to set prices based on the time of day, location, or even the electronic device used by a consumer.\footnote{See the hearing's \href{https://www.banking.senate.gov/newsroom/majority/brown-demands-answers-on-amazon-and-walmarts-use-of-so-called-dynamic-pricing}{follow-up letter to Amazon} for details.}
\end{quote}
The Federal Trade Commission also recently issued an order seeking information from several companies about their ``segmentation solutions'' that categorize consumers based on location, demographics, and credit history and set different prices for the same good or service.\footnote{See \href{https://www.ftc.gov/system/files/ftc_gov/pdf/sp6b_order_surv_pricing.pdf} {the order} for details.}


We study the welfare effects of price discrimination through the lens of information economics.
There is a given set of consumer types, each associated with a downward-sloping demand curve representing a population of consumers with heterogeneous willingness to pay.  
The seller has access to some information structure that maps types to signal realizations, allowing her to segment the market and charge a profit-maximizing price for each segment.  
Importantly, the information is only about consumer types and cannot distinguish between consumers of the same type. 
Even if the seller perfectly observes types, she faces residual uncertainty in the form of a downward-sloping demand curve for each type and cannot implement first-degree price discrimination.
This residual uncertainty reflects the practical limitations, legal and technological, that sellers face in perfectly predicting individual willingness to pay.  

By providing a framework to study all market segmentations based on consumer types, our work bridges the classical and modern approaches to price discrimination.   The classical approach to this problem, pioneered by \citet{P20}'s foundational work, starts with the same primitives but, translated to our setting, compares only the two extreme cases where the seller is either fully informed or fully uninformed about the types.  Because this literature  compares two extremes, it only provides sufficient conditions for price discrimination to affect welfare positively or negatively, whereas our focus on the smooth class of all information structures allows us to obtain sharp results and complete characterizations. On the other hand, the modern approach to this problem, pioneered by the seminal work of \citet{BBM15}, studies all market segmentations in a setting where the seller has access to signals about consumer \emph{willingness to pay}, which may overstate the seller’s ability to extract surplus in scenarios with constraints on market segmentation.  Our framework merges these two approaches and allows us to study all market segmentations that are subject to residual uncertainty.

Within our framework, we ask two questions.  First, does refining the seller's segmentation monotonically increase or decrease welfare (a convex combination of consumer and producer surplus), or is the effect non-monotone?
In the first two cases, information is ``per se'' good or bad: Simply knowing the fact that the seller is collecting information is enough to specify whether its effect is positive or negative, without the need for knowing \emph{what} the information is.  This distinction is useful because it is often difficult to perfectly monitor and control how firms use consumer data, so it is desirable to know when doing so is necessary.  When the effect is non-monotone, information has the \emph{potential} to be beneficial or harmful, and further analysis might be needed.  Our second question delves deeper into the non-monotone case: How good or bad can information be, and what are the best and the worst {ways to provide additional information?}


\paragraph{First result.} We first characterize when welfare is monotonically increasing, monotonically decreasing, or non-monotone in information, using three conditions.  The first condition, which we call partial inclusion, says the demand curves are not too far apart in the sense that the optimal monopoly price of each of them is in the interior of any other's domain of prices.  The second condition is a spanning one that says the set of all demand curves is decomposable into at most two basis demand curves.  If either of the first two conditions do not hold, welfare is not monotone in information.  If they both hold, the third condition gives an expression on the two basis demand curves that specifies whether welfare is monotonically increasing, monotonically decreasing, or non-monotone.

Examining this expression reveals a novel insight, namely that information affects welfare in three ways.  To understand these three effects, consider an example.  Suppose there are two types, type $L$ and type $H$, representing low and high income consumers.  Suppose half of the consumers have type $L$, the other half have type $H$, and if the seller could perfectly price discriminate based on the type, she would choose a higher price for type $H$ consumers.  Without any information, the seller offers some uniform price $p$ to all consumers.

Now suppose the seller can use geography as a proxy for income to price discriminate. There are two locations, each representing a segment, where the first location contains $\tfrac{2}{3}$ of type $L$ consumers and $\tfrac{1}{3}$ of type $H$ consumers, and the second location contains the remaining $\tfrac{1}{3}$ of type $L$ consumers and $\tfrac{2}{3}$ of type $H$ consumers.   The seller lowers the price to some $p_1 < p$ for the first location, and increases the price to some $p_2 > p$ for the second location.  \autoref{tab:the three effects simplified} shows, for each type and each price, the fraction of consumers of that type that face that price, before and after information is provided.
\begin{table}[htbp!]
	\def\arraystretch{1.8}
	\centering
			\begin{tabular}{lcccc}
				& & $p_1$ & $p$ & $p_2$\\				
				\cline{3-5}
				\multirow{2}{*}{Before} & Type $L$ & 0 & 1 & 0 \\\cline{3-5}
				& Type $H$ & 0 & 1 & 0 \\\hline \hline
				\multirow{2}{*}{After} & Type $L$ & $\frac{2}{3}$  & 0 & $\frac{1}{3}$ \\\cline{3-5}
				& Type $H$ &$\frac{1}{3}$  & 0 & $\frac{2}{3}$ \\	\cline{3-5}
			\end{tabular}
            \caption{The fraction of consumers of each type facing each price before and after information is provided.}
            \label{tab:the three effects simplified}
		\end{table}

The three effects of information are reflected in the table.  The first effect is the \emph{within-type price change} effect: For each type, information disperses prices, with some consumers facing a higher price and some facing a lower price.  The second effect is the \emph{cross-types price change} effect: The price drop applies asymmetrically, to more type $L$ consumers than type $H$ ones.  The third effect is the \emph{price curvature} effect: The size of the price drop might not be equal to the size of the price increase.  The larger the price drop compared to the price increase, the larger the positive effect of providing information.

Our three effects of information are related to those identified in the classical literature that compares the total surplus of full information to that of no information. 
This literature points out that a sufficient condition for price discrimination to decrease total surplus is that the ``output effect'' is negative, that is, the quantity produced by the seller decreases \citep{R69,Sch81,V85}.  This is because the ``misallocation effect'', that is, the fact that the quantity is sold at non-uniform prices, further distorts the allocation and decreases total surplus.  
Our price-curvature effect corresponds to the output effect because they are both about how price discrimination affects the aggregate decisions of the firm (average price versus total quantity). Similarly, our cross-types price change effect corresponds to the misallocation effect because they are both about how price discrimination treats types non-uniformly.
A distinction, however, is that our effects are expressed in terms of prices (and not quantities), which is crucial for obtaining a unified characterization for arbitrary welfare functions (and not just total surplus).
A second distinction is our within-type price change effect, which has no counter-part and arises because we study partial information: Because types are not fully separated, even consumers of a given type are affected by price discrimination non-uniformly.  Finally, our characterization specifies how to add up the three effects by giving a single expression that identifies the net effect, whereas the classical argument only specifies sufficient conditions for total surplus to decrease.\footnote{\citet{ACV10} uses a different logic to also obtain sufficient conditions under which price discrimination increases welfare.}

Another insight of our characterization is that price discrimination generally has the potential to be beneficial even without ``opening new markets''.  To elaborate, suppose the optimal monopoly price of type $H$ is higher than the highest price in the support of type $L$ so our partial inclusion condition is violated.  If almost all consumers are of type $H$, the seller will optimally exclude type $L$ consumers.  Then partially separating the two types opens a new market and prompts the seller to lower the price to one segment without increasing it in the other. This is indeed an important force in \citet{BBM15}.  They demonstrate the power of this force by showing that as long as there is any exclusion, some information can achieve full efficiency while giving all the gains to consumers.  
The classical literature, on the other hand, imposes a non-exclusion assumption to ask if price discrimination can be beneficial even if this force is shut down.  Our result shows that it generally can. 
When there is no exclusion, so no new markets are opened,  as long as one of the two other conditions of the result are violated, information does not affect welfare monotonically so there exists some information that increases welfare.  

\paragraph{Second result.} Our second result studies the non-monotone case further.  It provides bounds on welfare gains and losses from information and identifies the best and the worst {directions for providing additional information}.  A key role is played by the value function that specifies the welfare of each posterior belief (once the seller prices optimally for that belief), and the Hessian matrix of this value function.  The largest possible gain of information is proportional to the largest eigenvalue of the Hessian matrix, and the best possible direction to provide additional information is in the direction of the corresponding eigenvector.  The largest possible loss and the worst directions are similarly related to the smallest eigenvalue and its eigenvector.  

Notably, although eigenvalues and eigenvectors of arbitrary matrices are often complex objects that cannot be expressed in closed form, we give closed forms for them in our setting.  This is because the Hessian itself in our setting is a low-rank matrix with a separable form.  In fact, the Hessian has at most two non-zero eigenvalues.  Examining the Hessian and its eigenvalues reveals and generalizes the same three effects of information from our two-type analysis.

The result has a simple intuition.  An observation based on \citet{KaG11} is that information monotonically increases welfare if and only if the value function mentioned above is convex.  So it is natural to expect that if the value function is almost but not exactly convex, then the potential harms of information are small.  Eigenvalues of the Hessian of the value function measure its curvature along their corresponding eigenvectors.  The largest eigenvector represents the direction along which the function is ``most convex'', and providing information along that direction provides the largest value proportional to the curvature of the value function which is measured by the eigenvalue.  A similar argument applies to the lowest eigenvalue and its connection to the largest possible harm of information.\footnote{This intuition more generally applies to Bayesian persuasion problems in which the receiver's action is specified by a first-order condition (for example, \citealp{KCW24}).}


The bounds in our second result provide a \emph{quantitative} analogue to our first, qualitative result, which characterizes when these bounds are both positive, both negative, or have opposite signs. These bounds could therefore inform a quantitative framework to regulate the use of consumer data in price discrimination.  Analogous to the guidelines commonly used in antitrust and merger cases, such a framework could specify thresholds for the potential gains and losses from data usage, and use these bounds to determine whether such practices should be encouraged or discouraged regardless of their form, or whether additional scrutiny, monitoring, or restrictions are warranted. In this latter case, our result on the best and worst ways to provide additional information helps guide the design of appropriate restrictions on data collection and usage.

The rest of the paper is organized as follows. \autoref{sec:litreview} reviews the literature. \autoref{sec:model} describes our model. \autoref{sec:result} defines our welfare-monotonicity properties and gives our first main result, \autoref{thm:NSC}, which provides the necessary and sufficient conditions for information to be monotonically good or bad for welfare, or for the effect to be non-monotone. \autoref{sec:gen} states our second main result, \autoref{thm:bounds}, which provides a tight bound on the impact of information on welfare when the impact is globally non-monotone and identifies the marginally optimal segmentation. \autoref{sec:conc} concludes.  



\subsection{Related Literature} \label{sec:litreview}
Our work is related to the modern literature on price discrimination that studies all segmentations, pioneered by \citet{BBM15}. Our analysis does not directly apply to the setting of \citet{BBM15} because we make smoothness assumptions to replace the seller's profit-maximization problem with its first-order condition.  Nonetheless, we show that our results are consistent with \citet{BBM15} as we approach their setting within ours. \citet{BBM15} and \citet{KaZ23} identify all possible pairs of consumer and producer surplus. \citet{Yan22} studies how a profit-maximizing intermediary sells segmentations to a producer for price discrimination. \citet{HiV21}, \citet{HS21}, \citet{HS23}, \citet{Ass24}, and \citet{BHW24} study market segmentation with a multi-product seller.  In a general framework, \citet{BHV23} study when consumer surplus is convex or concave in a market's belief, which allows them to characterize when consumers are better or worse off if the market receives additional information about their types. Our conceptual contribution to this literature is to introduce a framework in which segmentations are constrained by some residual uncertainty.\footnote{A working paper version \citet{BBM13} of \citet{BBM15} studies consumer- and producer-surplus pairs in some examples with partial information about types. \citet{StY25} also study a setting in which there is a separation between a consumer's type (which they call characteristic) and willingness to pay.  They study segmentations in which different types are offered the same distribution of prices.  Because we ask whether refining segmentations monotonically affects welfare, our work is also related to those that study information orders in persuasion (\citealp{CuS22}, \citealp{BFK24}).}

Our work also relates to the large classical literature on monopolistic third-degree price discrimination.
As discussed earlier, the classical approach to this problem compares uniform pricing to full segmentation of a given set of demand curves.  These papers typically either focus on studying total surplus \citep{V85,ACV10,C16} or consumer surplus \citep{C12,AC15}.  Our general approach allows us to characterize welfare-monotonicity for any weighted combination of consumer and producer surplus, including as special cases consumer and total surplus.  Additionally, our framework allows us to study intermediate forms of price discrimination where the types are partially separated.  Most importantly, whereas the papers in the literature give only sufficient conditions for welfare to rise or fall, we give a complete characterization of our welfare-monotonicity conditions.  As we show with examples, our conditions do not imply nor are implied by those in the literature.  On one hand, our conditions are stronger because we seek to characterize a stronger property, but on the other hand, the conditions in the literature are stronger because they are only sufficient for a ranking of the welfare of full segmentation versus uniform pricing.  Nonetheless, the conditions can be qualitatively compared, as we outline in the introduction.

Methodologically, our work builds on the recent literature that uses duality in Bayesian persuasion \citep{Kol18,DM19,DK24,KCW24}.\footnote{See also \citet{IST22}, \citet{SmY24}, \citet{saeedi2020optimal}, and \citet{saeedi2024getting}.}  In particular, we convert our welfare-monotonicity properties to a class of Bayesian persuasion problems that seek to identify when no-information maximizes or minimizes welfare \emph{for all} prior distributions over the given set of demand curves.  Because the seller's profit-maximization problem in our setting is identified by its first-order condition, we take the strong-duality results of \citet{Kol18} and \citet{KCW24} off-the-shelf to solve these Bayesian persuasion problem.  We then show that for no-information to be a solution to the entire class of Bayesian-persuasion problems, our spanning condition must hold.  The main technical result in \citet{KCW24} gives a sufficient optimality condition, the ``twist'' condition, that is reminiscent of, but different from, our spanning condition.  Our characterization identifies an additional condition on two demands capturing the three effects of information, which has no counter-part in \citet{KCW24}, that together give necessary \emph{and sufficient} conditions for our welfare-monotonicity properties.

\section{Model} \label{sec:model}

A seller produces a product at a constant marginal cost that is normalized to zero. There is a unit mass of unit-demand consumers, where each consumer has a type and a willingness to pay for the product.  The set of types $\Theta$ is a compact subset of a linear normed vector space (which holds trivially if $\Theta$ is finite).  The types are distributed according to a full-support prior distribution $\mu_0 \in \Delta \Theta$. Each type $\theta \in \Theta$ represents a group of consumers with heterogeneous willingness to pay. The demand curve $D(p,\theta)$ specifies the fraction of type $\theta$ consumers whose willingness to pay for the product is at least $p$.\footnote{Our model has other interpretations.  One is that the seller faces a single consumer with a random type drawn from $\mu$. The consumer has multi-unit demands with non-linear utility for quantity and her utility-maximization problem for a given price induces a demand curve for each type.  One can also combine the two interpretations and think about each demand as representing a population of consumers with multi-unit demands.}   We denote the family of demand curves by $\mathcal{D} = \{D(p,\theta)\}_{\theta \in \Theta}$.   The primitive of our model is a pair $(\mathcal{D},\mu_0)$, that is, the family of demand curves and the prior distribution over them.

We make the following assumption on the demand curves throughout the paper.
\begin{assum}\label{ass1}
For all types $\theta\in\Theta$, the demand curve $D(\cdot,\theta): \mathbb{R}_+\rightarrow \mathbb{R}_+$ satisfies the following properties:
\begin{enumerate}
    \item There exists an interval $I(\theta) = \left[\underline{p}(\theta),\overline{p}(\theta)\right]$, $0\leq \underline{p}(\theta) < \overline{p}(\theta) \leq \infty$, such that $D(p,\theta)$ is differentiable  and strictly decreasing in $p$, $D_p(p,\theta) < 0$, for values  $p\in I(\theta)$.  These intervals need not be identical across types.
    \item For values of $p<\underline{p}(\theta)$, $D(p,\theta) = D(\underline{p}(\theta),\theta)$, and for values of $p>\overline{p}(\theta)$, $D(p,\theta) = 0$.
    \item For values of $p\in I(\theta)$, the revenue function $R(p,\theta) = p D(p,\theta)$ is strictly concave.\footnote{This assumption allows us to replace the seller's profit-maximization problem by its first-order condition.  \citet{KCW24} show that the first-order approach remains valid under a slight relaxation of strict concavity, roughly assuming concavity up to a normalization, so our results go through with this weaker assumption.  We state our assumption with strict concavity because it is more straightforward and is commonly used in the literature.}
    \item There exists $p(\theta)\in I(\theta)$ such that $R_p (p(\theta),\theta) =0$.
\end{enumerate}
\end{assum}

The above assumption implies that $p(\theta)$ is the optimal (revenue-maximizing) price set by the seller when she faces only type $\theta$ consumers.

Demand curves that are step functions, each representing a population of unit-demand consumers who all have the same willingness to pay, corresponds to the model in \citet{BBM15} but violates \autoref{ass1} (because such a demand curve is not strictly decreasing over a non-empty interval).
Later we show how our analysis confirms their results by studying demand curves that pointwise approach step functions while maintaining \autoref{ass1}.

A market $\mu \in \Delta \Theta$ is a probability distribution over types.   A segmentation $\sigma \in \Delta \Delta \Theta$ is a Bayes-plausible distribution over markets, that is, $\mathbb{E}_\sigma[\mu] = \mu_0$, and $S(\mu_0)$ denotes the set of all segmentations.  The interpretation is that the seller has access to some information structure that reveals a signal about the type of the buyer.  The seller then forms a posterior $\mu$ after each signal realization and chooses a profit-maximizing price for that posterior.  The seller uses a pricing rule $p: \Delta \Theta \rightarrow \mathbb{R}_+$ that specifies an optimal price for every possible posterior, breaking ties if necessary,
	\begin{align*}
	p(\mu) \in \arg \max_{p \in \mathbb{R}_+}  \int_{\Theta} R(p,\theta) \diff \mu(\theta), \forall \mu \in \Delta \Theta.
\end{align*}

We study the welfare that each segmentation induces.  For this, let 
\begin{align*}
	CS(p,\theta) = \int_{p}^{\overline{p}(\theta)} D(z,\theta) \diff z
\end{align*}
denote the surplus of type $\theta$ consumers from facing price $p$.  The $\alpha$-surplus of price $p$ for type $\theta$ consumers is a weighted average of consumer surplus and producer surplus with weights $\alpha \in (0,1]$ and $1-\alpha$,
\begin{align*}
	V^\alpha(p,\theta) := \alpha \cdot CS(p,\theta) + (1-\alpha)\cdot R(p,\theta).
\end{align*}
The special case of $\alpha = 1$ corresponds to consumer surplus and the special case of $\alpha = \tfrac{1}{2}$ corresponds to total surplus.\footnote{We assume $\alpha > 0$ because our properties of interest are trivial if the entire weight is on producer surplus (we explain why after defining the properties).}   The $\alpha$-surplus of a segmentation is the expectation of the $\alpha$-surplus of all price and type pairs in the segmentation, given by
\begin{align*}
	V^{\alpha}(\sigma) := \int_{\Delta \Theta}  \int_{\Theta} V^{\alpha}(p(\mu),\theta) \diff \mu(\theta)  \diff \sigma(\mu).
\end{align*}

\section{Welfare-Monotonicity Characterization and Implications} \label{sec:result}
Our first result characterizes when allowing the monopolist to refine segmentations monotonically affects welfare, and when the effect is non-monotone.  We first define these monotonicity properties and then give the result, its proof sketch, interpretation, and examples.

Consider two segmentations $\sigma$ and $\sigma'$.  We say $\sigma$ is a mean-preserving spread of $\sigma'$, and write 
\begin{align*}
\sigma \underset{\text{MPS}}\succeq \sigma',
\end{align*}
if there exists a joint distribution over pairs of markets $\nu \in \Delta(\Delta \Theta \times \Delta \Theta)$ that induces marginals $\sigma$ and $\sigma'$, i.e., $\nu(\cdot,\Delta\Theta) = \sigma(\cdot)$ and $\nu(\Delta\Theta,\cdot) = \sigma'(\cdot)$,  and random markets $(\mu,\mu')$ drawn from $\nu$ satisfy $\mathbb{E}[\mu | \mu'] = \mu'$ almost surely.  We call $\sigma$ a ``refinement'' of $\sigma'$ because it can be obtained by splitting each market in segmentation $\sigma'$ into possibly multiple markets in a mean-preserving way.  
The \textit{full-information segmentation}, in which each market $\mu$ in the support of the segmentation has only a single type in its support, is finer than any segmentation, and any segmentation is finer than the \textit{no-information segmentation} that assigns probability 1 to the prior market $\mu_0$.

Given this definition of refinement, we define welfare-monotonicity:

\begin{defn}[{\bf Welfare-monotonicity in information}] Consider a given $(\mathcal{D},\mu_0)$. 
\begin{enumerate}
\item Information is monotonically bad for $\alpha$-surplus, ``$\alpha$-IMB holds'', if
	\begin{align*}
		V^\alpha(\sigma) \leq V^\alpha(\sigma'),\qquad \forall \sigma,\sigma' \in S(\mu_0) \text{ such that } \sigma \underset{\text{MPS}}\succeq \sigma'.
	\end{align*}
	
\item Information is monotonically good for $\alpha$-surplus, ``$\alpha$-IMG holds'', if
\begin{align*}
	V^\alpha(\sigma) \geq V^\alpha(\sigma'),\qquad  \forall \sigma,\sigma' \in S(\mu_0) \text{ such that } \sigma \underset{\text{MPS}}\succeq \sigma'.
\end{align*}
\end{enumerate}
We refer to these properties jointly as ``welfare-monotonicity'' properties.
\end{defn}


As stated in the introduction, our motivation for studying these properties is to delineate when the act of information usage by the seller is ``per se'' beneficial or harmful versus when the answer depends on the context (i.e., what information is collected).  This classification is useful because it is often costly to monitor and control what data the firms collect and how they use it, so it is desirable to know when such monitoring and control is necessary.  When welfare-monotonicity holds, depending on its direction, price discrimination should be encouraged or discouraged without the need for any additional context.  When welfare-monotonicity does not hold, further analysis is needed, to which we return later.

Our first main result completely delineates the three possible cases.\footnote{\citet{blackwell1953equivalent} implies that information is monotonically good for producer surplus, which is why we focus on $\alpha > 0$ throughout the paper.}  The result starts by a simplifying step: It shows that the welfare-monotonicity properties are prior-free, allowing us to drop the prior distribution through the rest of the statement. Formally, prior-freeness means that for any two full-support distributions $\mu,\mu'$, $\alpha$-IMG ($\alpha$-IMB) holds for $(\mathcal{D},\mu)$ if and only if $\alpha$-IMG ($\alpha$-IMB) holds for $(\mathcal{D},\mu')$.  

The main content of the result are statements (i) and (ii).  Statement (i) gives a reduction of the problem from any number of demand curves to \emph{binary} families of demand curves.  It gives conditions under which we only need to verify welfare-monotonicity for the binary family $\{D(\cdot,\theta)\}_{\theta \in \{\theta_L,\theta_H\}}$ that consists of the two demand curves with the lowest and the highest optimal monopoly price, $p(\theta_L) = \min\limits_{\theta}p(\theta)$ and $p(\theta_H) = \max\limits_{\theta}p(\theta)$ (ties can be broken arbitrarily).  Statement (ii) provides the characterization for binary families.  

A condition used in the characterization is the \textit{partial inclusion} condition, which means $\underline{p}(\theta') \leq p(\theta) \leq \overline{p}(\theta')$ for all $\theta,\theta' \in \Theta$.  We call this the partial inclusion condition because it says that if type $\theta'$ consumers are offered the optimal price for type $\theta$, some but not all of them will be served.

We now state the result and then go over its two main statements in more detail.

\begin{thm}[{\bf Welfare-monotonicity}] \label{thm:NSC} The welfare-monotonicity properties are prior-free, and we can therefore refer to them as properties of $\mathcal{D}$.  Let $\theta_L,\theta_H \in \Theta$ be the two types with the lowest and the highest optimal monopoly prices in $\mathcal{D}$.  The following two statements hold:
	\begin{enumerate}[label=(\roman*)]
        \item $\alpha$-IMB ($\alpha$-IMG) holds for $\mathcal{D}$ if and only if
		\begin{enumerate}[label=({\Alph*})]
			\item there is partial inclusion, and 
			\item there exist two functions $f_1, f_2: \Theta \rightarrow \mathbb{R}_+$ such that 
            \begin{align*}
            D(p,\theta) = f_1(\theta)D(p,\theta_L) + f_2(\theta) D(p,\theta_H)    
            \end{align*}
            for all $\theta$ and $p \in [p(\theta_L),p(\theta_H)]$, and
        \item $\alpha$-IMB ($\alpha$-IMG) holds for the binary family $\{D(\cdot,\theta)\}_{\theta \in \{\theta_L,\theta_H\}}$.
		\end{enumerate} 
			\item $\alpha$-IMB ($\alpha$-IMG) holds for the binary family $\{D(\cdot,\theta)\}_{\theta \in \{\theta_L,\theta_H\}}$ if and only if there is partial inclusion and 
			\begin{align} \hspace{-7mm}
				V^{\alpha}(p,\theta_H) - V^{\alpha}(p,\theta_L) + \frac{V^{\alpha}_p(p,\theta_L)-\frac{R_p(p,\theta_L)}{R_p(p,\theta_H)}V^{\alpha}_p(p,\theta_H)}{R_{pp}(p,\theta_L)-\frac{R_p(p,\theta_L)}{R_p(p,\theta_H)}R_{pp}(p,\theta_H)}(R_p(p,\theta_L) - R_p(p,\theta_H)) \label{eq:main}
			\end{align}
			is decreasing (increasing) on $(p(\theta_L),p(\theta_H))$.
	\end{enumerate}
\end{thm}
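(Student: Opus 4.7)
My plan is to exploit the standard fact that information is monotonically good (bad) for $\alpha$-surplus if and only if the indirect value function
\[
W^\alpha(\mu) \;:=\; \int_\Theta V^\alpha(p(\mu),\theta)\,d\mu(\theta)
\]
is convex (concave) on $\Delta\Theta$; this is the concavification argument of \citet{KaG11} applied to the mean-preserving-spread order. Because convexity/concavity of $W^\alpha$ is a property of $\mathcal{D}$ alone, asserted at every posterior rather than at a particular $\mu_0$, prior-freeness is immediate. The remainder of the argument translates global convexity/concavity of $W^\alpha$ into conditions (A)--(C) in (i) and into the monotonicity of \eqref{eq:main} in (ii).

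I would handle the binary case (ii) first. With $\Theta=\{\theta_L,\theta_H\}$, parametrize by the weight $\mu$ on $\theta_L$. Under partial inclusion, $p(\mu)$ lies in the interior of $I(\theta_L)\cap I(\theta_H)$ for every $\mu\in(0,1)$, and strict concavity of $R$ pins it down by the FOC $\mu R_p(p,\theta_L)+(1-\mu)R_p(p,\theta_H)=0$. Solving gives $\mu(p)=R_p(p,\theta_H)/\bigl(R_p(p,\theta_H)-R_p(p,\theta_L)\bigr)$, which is strictly decreasing on $(p(\theta_L),p(\theta_H))$, so I can equivalently treat $W^\alpha$ as a function of $p$. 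Differentiating $W^\alpha(\mu)=\mu V^\alpha(p(\mu),\theta_L)+(1-\mu)V^\alpha(p(\mu),\theta_H)$ and using implicit differentiation of the FOC for $p'(\mu)$ (the $R_p$-terms cancel via envelope, the $V^\alpha_p$-terms do not, since $V^\alpha\neq R$), a direct computation yields
\[
-\frac{dW^\alpha}{d\mu}\bigl(\mu(p)\bigr) \;=\; V^\alpha(p,\theta_H)-V^\alpha(p,\theta_L)+\frac{V^\alpha_p(p,\theta_L)-\tfrac{R_p(p,\theta_L)}{R_p(p,\theta_H)}V^\alpha_p(p,\theta_H)}{R_{pp}(p,\theta_L)-\tfrac{R_p(p,\theta_L)}{R_p(p,\theta_H)}R_{pp}(p,\theta_H)}\bigl(R_p(p,\theta_L)-R_p(p,\theta_H)\bigr),
\]
which is exactly \eqref{eq:main}. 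Since convexity of $W^\alpha$ in $\mu$ is the monotonicity of $dW^\alpha/d\mu$ in $\mu$, and $\mu(p)$ is strictly decreasing, this is equivalent to \eqref{eq:main} being increasing in $p$, with the reverse for concavity. Necessity of partial inclusion in the binary case I would establish by counterexample: if, say, $p(\theta_H)>\overline{p}(\theta_L)$, the seller optimally excludes $\theta_L$ for $\mu$ near $0$, introducing a kink in $W^\alpha$ that rules out both global convexity and global concavity.

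For statement (i), the spanning condition (B) enters as a sufficient-statistic reduction. When $D(p,\theta)=f_1(\theta)D(p,\theta_L)+f_2(\theta)D(p,\theta_H)$ on $[p(\theta_L),p(\theta_H)]$, both $R$ and $R_p$ inherit the same decomposition, so the seller's FOC at any market $\mu$ depends on $\mu$ only through the aggregates $\int f_i(\theta)\,d\mu(\theta)$; consumer surplus decomposes on this interval up to an additive type-dependent constant that contributes only a linear term to $W^\alpha$ and hence does not affect convexity. Thus $W^\alpha$ descends to a function of these two sufficient statistics whose convexity/concavity reduces to (ii). The main obstacle I foresee is the necessity direction for (B): if some $D(\cdot,\theta^*)$ lies outside $\mathrm{span}\{D(\cdot,\theta_L),D(\cdot,\theta_H)\}$ on the interval, I must explicitly construct a prior perturbation that reverses welfare-monotonicity. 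My plan is to deploy the Bayesian persuasion duality of \citet{Kol18} and \citet{KCW24}, which characterizes when no-information is optimal across all priors: a direction along which $D(\cdot,\theta^*)$ departs from the two-dimensional span becomes a direction along which the Hessian of $W^\alpha$ picks up a sign inconsistent with the binary prediction, delivering the required local violation. This duality-based step has no analog in (ii) and is where I expect most of the technical work to live.
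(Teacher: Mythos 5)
Your overall architecture matches the paper's: prior-freeness and statement (ii) via the concavification equivalence (welfare-monotonicity iff $W^\alpha$ is convex/concave), the derivative computation with the change of variables $\mu \mapsto p$, and duality for the reduction in statement (i). Your derivation of \eqref{eq:main} is correct (parametrizing by the weight on $\theta_L$ rather than $\theta_H$ only flips a sign that you account for). Your sufficiency argument for (i) is genuinely different from the paper's: the paper extends the dual optimality condition ${p} \in \arg\max_{p'} U(p',\theta) - \zeta({p},p')R_p(p',\theta)$ from $\{\theta_L,\theta_H\}$ to all $\theta$ by linearity, whereas you push $W^\alpha$ down to the two sufficient statistics $F_i(\mu)=\int f_i\,d\mu$. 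That route works, but the step ``reduces to (ii)'' hides a real observation you should make explicit: $W^\alpha$ as a function of $(F_1,F_2)$ is positively homogeneous of degree one (it is the perspective function of the binary value function, since the seller's first-order condition depends only on $F_1/(F_1+F_2)$), and it is this homogeneity, composed with the linearity of $\mu\mapsto(F_1,F_2)$, that transfers convexity/concavity from the one-dimensional binary problem to $\Delta\Theta$. You also need the remark that the consumer-surplus tail $\int_{p(\theta_H)}^{\overline{p}(\theta)}D(z,\theta)\,dz$ contributes only a $\mu$-linear term, which you do note.

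There are two genuine gaps. First, necessity of partial inclusion: the condition $\underline{p}(\theta')\leq p(\theta)\leq\overline{p}(\theta')$ fails in two distinct ways, full exclusion ($\overline{p}(\theta_L)<p(\theta_H)$) and full inclusion ($p(\theta_L)<\underline{p}(\theta_H)$), and you address only the first. The second case is not a price jump: the revenue curve of $\theta_H$ has a kink at $\underline{p}(\theta_H)$, so the optimal price is \emph{constant} at $\underline{p}(\theta_H)$ over an interval of markets $(\mu_1,\mu_2)$ and then strictly increasing; the paper compares the one-sided derivatives of $W^\alpha$ at $\mu_2$ to show local concavity (killing IMG) and separately constructs a split straddling $(\mu_1,\mu_2)$ that raises consumer surplus (killing IMB). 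Moreover, even in the exclusion case, ``a kink rules out both global convexity and global concavity'' is not a proof — convex functions can have kinks — and the relevant object is a \emph{discontinuous} drop in consumer surplus against a continuous revenue (by the Maximum theorem), which kills IMG only for small enough splits; IMB requires a separate construction. Second, necessity of the spanning condition (B) is where the paper's real technical work lives, and your sketch stops short of it: the paper's Lemma~\ref{lem: spanning consequence} produces a price $p$ and scalars $a,b$ at which $R_p$ and $R_{pp}$ are spanned by the basis pair but $U_p$ is not, which over-determines the dual multiplier $\zeta(p,p)$ in two inconsistent ways. Your alternative heuristic — that departure from the span makes the Hessian of $W^\alpha$ indefinite — is plausible (it amounts to showing the alignment $x(\mu)\propto-\nabla p(\mu)$ in the paper's Theorem~\ref{thm:bounds} machinery must fail), but establishing that the sign actually flips requires essentially the same computation as the paper's lemma, and you have not supplied it.
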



Let us now explain in more detail what the theorem says.  Consider statement (i).  Condition (A) of the statement, partial inclusion, says that the monopoly price $p(\theta)$ for any type $\theta$  cannot be less than the lowest price $\underline{p}(\theta')$ in the support of another type $\theta'$ or higher than the largest price $\overline{p}(\theta')$ of type $\theta'$.  Roughly speaking, this partial-inclusion condition means that the demand curves cannot be too far from each other.\footnote{This partial inclusion condition is related to a non-exclusion assumption that is typically made in the classical price discrimination literature.  That assumption only entails one part of our partial inclusion condition, namely that $p(\theta) \leq \overline{p}(\theta')$ for all $\theta,\theta' \in \Theta$.  We review the connection between our finding and those in that literature in \autoref{sec: discussion}.} 

Condition (B) of statement (i)  is a spanning condition that says that all demand curves in the family must be decomposable into a linear combination of at most two ``basis'' demand curves $D(\cdot,\theta_L),D(\cdot,\theta_H)$ with possibly varying weights, where these basis demand curves are those with the lowest and the highest optimal monopoly price.  The only possible source of heterogeneity among the demand curves is the pair of weights $f_1(\theta)$ and $f_2(\theta)$, so heterogeneity must be reducible to a two-dimensional sufficient statistic.  Notice that this condition must also hold regardless of what the weight $\alpha$ is or which one of the welfare-monotonicity properties we are characterizing.  

Condition (C) of statement (i)  puts additional constraints on the basis demand curves.  The binary family of demand curves that consists only of the two basis demand curves $D(\cdot,\theta_L),D(\cdot,\theta_H)$ must itself satisfy the corresponding welfare-monotonicity property.  This condition therefore reduces the problem of characterizing the welfare-monotonicity properties for an arbitrary class of distributions $\mathcal{D}$ to characterizing them with binary distributions.

Statement (ii) of the theorem characterizes the welfare-monotonicity properties for a binary family of demand curves in terms of partial inclusion and monotonicity of an expression that depends on the weight $\alpha$. The direction of the monotonicity condition depends on which one of the welfare-monotonicity properties we are characterizing.  We re-visit and interpret this expression shortly.


One intuitive implication of the result, that we use throughout the paper, is that information gets more beneficial as $\alpha$ decreases, that is, more weight is put on the seller's payoff.\footnote{This result can also be directly shown using \citet{blackwell1953equivalent}'s characterization without using ours.  We call it a corollary as a ``sanity check'' of our characterization and to document it for later use.}  

\begin{corol}\label{corol: monotonicity in alpha}
	If $\alpha$-IMG ($\alpha$-IMB) holds, then $\alpha'$-IMG ($\alpha'$-IMB) holds for any $\alpha' \leq \alpha$ ($\alpha' \geq \alpha$).
\end{corol}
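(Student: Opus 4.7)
The plan is to sidestep \autoref{thm:NSC} entirely and prove the corollary directly using only Blackwell's result, which is the approach hinted at in the footnote. The key observation is that $\alpha$-surplus is a linear function of $\alpha$, so I can reshuffle weight between the two building blocks (consumer surplus and producer surplus / revenue) to relate $V^{\alpha'}$ to $V^{\alpha}$ plus a correction term proportional to expected revenue.

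First I would note that the seller's expected revenue under any segmentation $\sigma$, namely $R(\sigma) := \int\int R(p(\mu),\theta)\,d\mu(\theta)\,d\sigma(\mu) = \int \pi(\mu)\,d\sigma(\mu)$ with $\pi(\mu) = \max_p \int R(p,\theta)\,d\mu(\theta)$, is convex in $\mu$ as a maximum of linear functions. Hence $\sigma \succeq_{\mathrm{MPS}} \sigma'$ implies $R(\sigma) \geq R(\sigma')$; producer surplus is unconditionally monotonically good. Next, since $V^{\alpha}(\sigma) = \alpha\,CS(\sigma) + (1-\alpha) R(\sigma)$ and analogously for $\alpha'$, I can solve the former for $CS(\sigma)$ and substitute to obtain the clean identity
\begin{equation*}
V^{\alpha'}(\sigma) - V^{\alpha'}(\sigma') = \frac{\alpha'}{\alpha}\bigl(V^{\alpha}(\sigma) - V^{\alpha}(\sigma')\bigr) + \frac{\alpha - \alpha'}{\alpha}\bigl(R(\sigma) - R(\sigma')\bigr),
\end{equation*}
valid for any two segmentations $\sigma,\sigma'$ and any $\alpha, \alpha' \in (0,1]$.

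From here the result follows by reading off signs. For the IMG half, suppose $\alpha$-IMG holds and take $\alpha' \leq \alpha$ and $\sigma \succeq_{\mathrm{MPS}} \sigma'$. Both coefficients $\alpha'/\alpha$ and $(\alpha-\alpha')/\alpha$ are nonnegative, the first term is nonnegative by $\alpha$-IMG, and the second is nonnegative because $R$ is monotonically good; thus $V^{\alpha'}(\sigma) \geq V^{\alpha'}(\sigma')$. For the IMB half, suppose $\alpha$-IMB holds and take $\alpha' \geq \alpha$. Now $\alpha'/\alpha > 0$ multiplies a nonpositive quantity, while $(\alpha-\alpha')/\alpha \leq 0$ multiplies a nonnegative quantity, so both terms are nonpositive and $V^{\alpha'}(\sigma) \leq V^{\alpha'}(\sigma')$.

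There is no real obstacle here; the only step that requires any thought is spotting the affine identity above, and once one recognizes that $V^{\alpha'}$ lies on the line through $V^{\alpha}$ and $R$ in the $(CS,R)$ plane, the proof is just bookkeeping on signs. I would therefore expect the write-up to be under half a page.
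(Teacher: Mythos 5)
Your proof is correct, but it takes a genuinely different route from the paper's. The paper proves the corollary \emph{through} its own characterization: it takes the expression in \cref{eq:main}, splits it into a consumer-surplus piece and a revenue piece weighted by $\alpha$ and $1-\alpha$, observes that the revenue piece $R(p,\theta_H)-R(p,\theta_L)$ is increasing on $(p(\theta_L),p(\theta_H))$ (marginal revenue of $\theta_H$ is positive there, that of $\theta_L$ negative), and then argues that shifting weight toward the consumer-surplus piece preserves the required monotonicity of the combination. You instead bypass \autoref{thm:NSC} entirely: the affine identity $V^{\alpha'}=\tfrac{\alpha'}{\alpha}V^{\alpha}+\tfrac{\alpha-\alpha'}{\alpha}R$ is an algebraic fact holding pointwise in $(p,\theta)$ (so it survives any tie-breaking in the pricing rule), and combining it with the convexity of $\mu\mapsto\max_p\int R(p,\theta)\,\diff\mu(\theta)$ gives the sign bookkeeping you describe. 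This is precisely the ``direct via Blackwell'' argument the paper alludes to in its footnote but does not carry out. Your version is more elementary and more robust --- it does not lean on \autoref{ass1}, partial inclusion, or the duality machinery --- whereas the paper's version is deliberately routed through \cref{eq:main} as a consistency check on the characterization. Both are valid; yours is self-contained and arguably the cleaner standalone proof.
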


In the rest of this section, we visit each of the two main statements of \autoref{thm:NSC} and explain the logic behind them, starting by the case of two demand curves because its analysis reveals a novel insight on the effects of price discrimination.  We then relate our findings to those in the classical and modern literature on price discrimination.

\subsection{Two Demand Curves}\label{sec: two demands}
Let us start with two demand curves and establish and interpret statement (ii) of \autoref{thm:NSC}.  In order to focus on how welfare-monotonicity is related to the monotonicity of \cref{eq:main}, which allows us to provide intuition for the effects of price discrimination, we assume that there is partial inclusion.  In \autoref{sec:method} we illustrate why partial inclusion is in fact necessary for both of the welfare-monotonicity properties.

\subsubsection{Proof Sketch}
It is useful to begin by describing how the result in the binary case is proved.  The proof is based on the concavification approach of  \citet{KaG11}.  To state the result, let us denote by $\mu$ a market in which the probability of type $\theta_H$, the type with the higher monopoly price, is $\mu$.  Consider the weighted-surplus function $W^{\alpha}: [0,1] \rightarrow R$ that specifies the expected $\alpha$-surplus of a market $\mu$,
\begin{align}
	W^\alpha(\mu) &= \mathbb{E}_{\theta \sim \mu}[V^{\alpha}(p(\mu),\theta)],\label{eqn:val2}
\end{align}
where $p(\mu) \in [p(\theta_L),p(\theta_H)]$ is the profit-maximizing price for this  market and is uniquely identified by the first-order condition
\begin{align}
	\mathbb{E}[R_p(p(\mu),\theta)] = 0.\label{eq: two types FOC}
\end{align}

Information is monotonically bad for $\alpha$-surplus \emph{if and only if} $W^\alpha$ is concave.  Indeed, if $W^\alpha$ is concave, then splitting any market $\mu$ in the support of a segmentation into multiple markets with the same mean would only (weakly) decrease weighted surplus.  And if $W^\alpha$ is not concave, that is, if $W^\alpha(\mu)$ is below the concavification of $W^{\alpha}$ for some market $\mu$, then we can take any segmentation with market $\mu$ in its support and split $\mu$ into two markets in a way that increases weighted surplus.  A similar argument shows that information is monotonically good for $\alpha$-surplus if and only if $W^\alpha$ is convex.

To see how the concavity (convexity) of $W^\alpha$ relates to the monotonicity condition in \cref{eq:main}, consider the derivative of $W^\alpha$ with respect to $\mu$,
\begin{align*}
	W_\mu^\alpha(\mu) = & \ - V^{\alpha}(p(\mu),\theta_L)  + V^{\alpha}(p(\mu),\theta_H) + p_\mu(\mu) \mathbb{E}[V_p^{\alpha}(p(\mu),\theta)].
\end{align*}
The weighted surplus function $W^\alpha$ is concave if and only if its derivative is decreasing.  The expression above contains $p(\mu)$ and $p_{\mu}(\mu)$, both of which are defined implicitly given the seller's profit-maximization condition in \cref{eq: two types FOC}. In the proof of the result we eliminate this implicit dependence by observing that because  $p(\mu)$ is increasing in $\mu$, $W_\mu^\alpha$ is a monotone function of $\mu$ if and only if $W_\mu^\alpha(\mu(p))$ is a monotone function of $p$, where $\mu(p)$ is the inverse of the price function, that is, $\mu(p(\mu)) = \mu$.  Indeed, the expression in \cref{eq:main} is exactly the above expression evaluated at $\mu(p)$.

\subsubsection{Interpretation: The Three Effects of Information}\label{sec: the three effects}
To interpret the expression in \cref{eq:main}, let us keep the optimal price function $p(\mu)$ (and its derivatives) implicitly defined and take a second derivative of $W^{\alpha}$.  This second derivative $W_{\mu\mu}^\alpha(\mu)$ represents the value of providing a small amount of information starting from a market $\mu$.  Examining this second derivative reveals that providing information has three effects on weighted surplus that are combined into one formula.  The second derivative of $W^{\alpha}$ is
\begin{align}
	W_{\mu\mu}^\alpha(\mu) = 	(p_{\mu}(\mu))^2 &\ \mathbb{E}\Big[V_{pp}^{\alpha}(p(\mu),\theta)\Big] \nonumber\\
	+  2 p_{\mu}(\mu) & \hspace{4mm} \Big[V^{\alpha}_{p}(p(\mu),\theta_H)  - V^{\alpha}_{p}(p(\mu),\theta_L)\Big] \nonumber\\
	+ p_{\mu\mu}(\mu) &\ \mathbb{E}\Big[V_p^{\alpha}(p(\mu),\theta)\Big].	\label{eq: the three effects}
\end{align}

To understand the three effects, consider what happens if we take a market $\mu$ with optimal price $p$ and split it into two markets $\mu_1 = \mu - \delta$ and $\mu_2 = \mu + \delta$ with optimal prices $p_1 < p < p_2$, each with probability one-half.  \autoref{tab:the three effects} shows, for each type and each price, the fraction of consumers of that type that face that price, before and after information is provided.
\begin{table}[h]
	\def\arraystretch{1.8}
	\centering
			\begin{tabular}{lcccc}
				 & & $p_1$ & $p$ & $p_2$\\				
				\cline{3-5}
				\multirow{2}{*}{before} & $\theta_L$ & 0 & 1 & 0 \\\cline{3-5}
				& $\theta_H$ & 0 & 1 & 0 \\\hline \hline
				\multirow{2}{*}{after} & $\theta_L$ & $\frac{1}{2}(1+\frac{\delta}{1-\mu})$  & 0 & $\frac{1}{2}(1-\frac{\delta}{1-\mu})$ \\\cline{3-5}
				& $\theta_H$ &$\frac{1}{2}(1-\frac{\delta}{\mu})$  & 0 & $\frac{1}{2}(1+\frac{\delta}{\mu})$ \\	\cline{3-5}
			\end{tabular}
		\caption{The fraction of consumers of each type facing each price before and after information is provided.}
		\label{tab:the three effects}
	\end{table}

The three effects of information are reflected in \autoref{tab:the three effects}, each represented in one of the terms in \cref{eq: the three effects}.
\begin{enumerate}
	\item The \emph{within-type price change} effect.  Price discrimination disperses prices within each type, where some consumers face a lower price and some others a higher price.  The sign of this effect depends on how $V^{\alpha}$ is affected by a price dispersion, which depends on the curvature of $V^{\alpha}$.  If the function is convex, then this effect is positive, and if it is concave, the effect is negative.  This effect corresponds to the first term in \cref{eq: the three effects}.
	\item The \emph{cross-types price change} effect.  Relatively more type $\theta_L$ consumers face a price drop than do type $\theta_H$ consumers.  In fact, more than half of type $\theta_L$ consumers but less than half of type $\theta_H$ consumers face a price drop.  In the extreme case of full information, all type $\theta_L$ consumers face a price drop and all type $\theta_H$ consumers face a price increase.  This effect is positive if the marginal benefit of decreasing the price for type $\theta_L$ is larger than increasing it for type $\theta_H$, and is negative otherwise.  This effect corresponds to the second term in \cref{eq: the three effects}.
	\item The \emph{price curvature} effect.  The size of price drop $p - p_1$ might not be equal to the size of the price increase $p_2 - p$.  The comparison depends on the curvature of the price function $p(\mu)$.  This effect corresponds to the third term in \cref{eq: the three effects}.  Notice that the seller's profit-maximization condition means that the expected marginal revenue is zero and therefore
    \begin{align*}
    \mathbb{E}\Big[V_p^{\alpha}(p(\mu),\theta)\Big] = \alpha \mathbb{E}\Big[CS_p(p(\mu),\theta)\Big] \leq 0,
    \end{align*}
    and so the sign of the third effect only depends on the curvature of the price function.  This effect is positive if $p(\mu)$ is concave, in which case the price drop is larger than the price increase, benefiting consumers overall.  Similarly, this effect is negative if the price function is convex, so the price increase is larger than the price drop.
\end{enumerate}


\subsubsection{Sufficient Conditions} \label{subsec:sufCond}
The overall effect of information depends on the \emph{aggregation} of the three effects identified above, and each of the three effects might be positive or negative.  However, if all three terms have the same sign, then their summation has that sign too, and this observation allows us to obtain sufficient conditions for welfare-monotonicity.  The following corollary formalizes this discussion by directly considering each term in \cref{eq: the three effects}.

\begin{corol}\label{cor: sufficient conditions for IMB-IMG}
	$\alpha$-IMB (respectively $\alpha$-IMG) holds if there is partial inclusion and the following hold over the range of prices $(p(\theta_L),p(\theta_H))$.
	\begin{enumerate}
		\item The within-type price change effect is negative (positive):  $V^{\alpha}(p,\theta)$ is concave (convex) for each type $\theta \in \{\theta_L,\theta_H\}$.
		\item The cross-types price change effect is negative (positive):  $V_{p}^{\alpha}(p,\theta_H) \leq (\geq) V_{p}^{\alpha}(p,\theta_L)$.
		\item The price curvature effect is negative (positive):  $p(\mu)$ is convex (concave).  A sufficient condition for this is $R_{ppp}(p,\theta) \geq (\leq)\ 0$ and $R_{pp}(p,\theta_H) \geq (\leq) R_{pp}(p,\theta_L)$.
	\end{enumerate}
\end{corol}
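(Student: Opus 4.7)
The plan is to work directly with the three-effect decomposition of $W^{\alpha}_{\mu\mu}$ displayed in \cref{eq: the three effects}, since the proof sketch of \autoref{thm:NSC}(ii) already establishes that $\alpha$-IMB is equivalent to concavity of $W^{\alpha}$ (and $\alpha$-IMG to convexity). So it suffices to show that under the stated assumptions each of the three summands in $W^{\alpha}_{\mu\mu}(\mu)$ has the appropriate sign uniformly in $\mu \in (0,1)$, hence so does their sum, and the reverse direction for $\alpha$-IMG follows by flipping every inequality.

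First I will record two structural facts that feed into all three terms. From the implicit definition of $p(\mu)$ via $\mu R_p(p,\theta_H) + (1-\mu) R_p(p,\theta_L) = 0$ and strict concavity of revenue (\autoref{ass1}, part 3), differentiation gives
\begin{align*}
p_{\mu}(\mu) \;=\; -\frac{R_p(p,\theta_H) - R_p(p,\theta_L)}{\mu R_{pp}(p,\theta_H) + (1-\mu) R_{pp}(p,\theta_L)},
\end{align*}
and since $R_p(\cdot,\theta_H) - R_p(\cdot,\theta_L) > 0$ on $(p(\theta_L), p(\theta_H))$ (as $p(\theta_L)$ and $p(\theta_H)$ are the zeros of $R_p(\cdot,\theta_L)$ and $R_p(\cdot,\theta_H)$ respectively and revenue is concave) while the denominator is negative, we obtain $p_{\mu}(\mu) > 0$. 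Second, the seller's FOC $\mathbb{E}[R_p(p(\mu),\theta)] = 0$ together with $V^{\alpha}_p = \alpha CS_p + (1-\alpha) R_p$ and $CS_p(p,\theta) = -D(p,\theta) \le 0$ yields $\mathbb{E}[V^{\alpha}_p(p(\mu),\theta)] = \alpha\,\mathbb{E}[CS_p(p(\mu),\theta)] \le 0$.

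With these in hand, verifying the three signs is immediate. The first term $(p_\mu)^2 \mathbb{E}[V^{\alpha}_{pp}(p(\mu),\theta)]$ is $\le 0$ exactly when $V^{\alpha}(\cdot,\theta_L)$ and $V^{\alpha}(\cdot,\theta_H)$ are both concave on $(p(\theta_L),p(\theta_H))$, which is hypothesis (1); the middle term $2 p_{\mu}\,[V^{\alpha}_p(p,\theta_H) - V^{\alpha}_p(p,\theta_L)]$ is $\le 0$ by hypothesis (2) combined with $p_{\mu} > 0$; and the third term $p_{\mu\mu}\,\mathbb{E}[V^{\alpha}_p(p(\mu),\theta)]$ is $\le 0$ when $p$ is convex in $\mu$ because we just argued $\mathbb{E}[V^{\alpha}_p(p(\mu),\theta)] \le 0$. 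Summing gives $W^{\alpha}_{\mu\mu} \le 0$.

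What remains is the displayed sufficient condition for convexity of $p(\mu)$, which I expect to be the only real calculation in the argument. Differentiating the FOC a second time gives
\begin{align*}
p_{\mu\mu}\bigl[\mu R_{pp}(p,\theta_H) + (1-\mu) R_{pp}(p,\theta_L)\bigr]
\;=\; -2 p_{\mu}\bigl[R_{pp}(p,\theta_H) - R_{pp}(p,\theta_L)\bigr] - (p_{\mu})^{2} \mathbb{E}[R_{ppp}(p,\theta)].
\end{align*}
The bracket multiplying $p_{\mu\mu}$ is negative by strict concavity of revenue, so $p_{\mu\mu} \ge 0$ iff the right-hand side is $\le 0$; since $p_{\mu} > 0$, the conditions $R_{pp}(p,\theta_H) \ge R_{pp}(p,\theta_L)$ and $R_{ppp}(p,\theta) \ge 0$ for both types force the right-hand side to be non-positive, giving $p_{\mu\mu} \ge 0$. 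The analogous reversed inequalities yield $p_{\mu\mu} \le 0$ and hence concavity of $p(\mu)$, completing the $\alpha$-IMG direction.
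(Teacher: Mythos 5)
Your proposal is correct and follows essentially the same route as the paper's proof: sign each of the three summands in the decomposition of $W^{\alpha}_{\mu\mu}$ from \cref{eq: the three effects}, use the seller's first-order condition to reduce $\mathbb{E}[V^{\alpha}_p(p(\mu),\theta)]$ to $\alpha\,\mathbb{E}[CS_p(p(\mu),\theta)] \leq 0$, and differentiate that first-order condition twice to tie the sign of $p_{\mu\mu}$ to $R_{ppp}$ and the ranking of $R_{pp}$ across the two types. The sign bookkeeping in your rearranged second-derivative identity matches the paper's, so no gap remains.
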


To explain these conditions, let us refer to the demand curve with a lower monopoly price as the ``more elastic'' demand curve.  Indeed, over the interval of prices $[p(\theta_L),p(\theta_H)]$, the elasticity of the demand curve $\theta_L$ is more than 1 (because its marginal revenue is negative), but the elasticity of the demand curve $\theta_H$ is less than 1 (because its marginal revenue is positive).

\autoref{cor: sufficient conditions for IMB-IMG} implies that information is monotonically good for consumer surplus (and all $\alpha$) if
\begin{enumerate}
    \item $R_{ppp}(p,\theta) \leq 0$, that is, the marginal revenue curves are concave,
\end{enumerate}
  and the more elastic demand curve, $\theta_L$, has 
 \begin{enumerate}\setcounter{enumi}{1}
 	\item a higher level, $D(p,\theta_L) \geq D(p,\theta_H)$, and,
 	\item a less concave revenue curve, $R_{pp}(p,\theta_H) \leq R_{pp}(p,\theta_L)$, or equivalently a less steep marginal revenue curve.\footnote{Consider $\alpha = 1$.  The within-group price change effect is positive because consumer surplus is convex in price.  The cross-groups price change effect is positive because $D(p,\theta_L) \geq D(p,\theta_H)$.  The remaining conditions ensure the price curvature effect is positive by \autoref{cor: sufficient conditions for IMB-IMG}.}
 \end{enumerate}
These conditions are shown in \autoref{fig: IMG two demands corollary}. A parallel set of conditions, which we give in Appendix~\ref{app: sufficients}, identify monotonically bad information.
\begin{figure}
	\centering
		
		\begin{tikzpicture}[ultra thick, scale=2.2]
			
			\draw[<->] (5.5,0) node[above]{price} -- (0,0) -- (0,3.1) node[left]{quantity};
			\draw (0,2.9) node[left]{and MR};
			

			\draw[dashed,domain=1:0.25]    plot (4*\x,{1.4-0.25*\x-1*\x*\x/3+0.3/\x}) node[left,yshift=0]{$D(\cdot,\theta_L)$};
			\draw[domain=1.52/4:3.26/4]    plot (4*\x,{1.4-0.25*\x-1*\x*\x/3+0.3/\x});			
			
			\draw[dashed,domain=1:0.25]    plot (4*\x,{2.3-1.3*\x-1.6*\x*\x/3})node[left,yshift=4]{$D(\cdot,\theta_H)$};		
			\draw[domain=1.52/4:3.26/4]    plot (4*\x,{2.3-1.3*\x-1.6*\x*\x/3});		

			\draw[dashed,red,domain=0.97:0.25]    plot (4*\x,{0.08+.2*\x-1*\x*\x}) node[left,yshift=6]{$R_p(\cdot,\theta_L)$};			
			\draw[red,domain=1.52/4:3.26/4]    plot (4*\x,{0.08+.2*\x-1*\x*\x});

			\draw[dashed,red,domain=0.97:0.25]   plot (4*\x,{2.18-1.85*\x-1*\x*\x}) node[left]{$R_p(\cdot,\theta_H)$} ;			
			\draw[red,domain=1.52/4:3.26/4]   plot (4*\x,{2.18-1.85*\x-1*\x*\x});			
			
			\draw (1.52,0.05) -- (1.52,-0.05) node[below]{$p(\theta_L)$};
			\draw (3.26,0.05) -- (3.26,-0.05) node[below,xshift=-7,yshift=2]{$p(\theta_H)$};
			
			\draw[loosely dotted] (1.52,0.05) -- (1.52,3);
			\draw[loosely dotted] (3.26,0.05) -- (3.26,3);
			
		\end{tikzpicture}
	\caption{Information is monotonically good when both marginal revenue curves are concave and the more elastic demand $\theta_L$ has a higher level and a less steep marginal revenue curve.}
	\label{fig: IMG two demands corollary}
\end{figure}

To conclude our discussion of two demand curves, we apply \autoref{cor: sufficient conditions for IMB-IMG} to an example.  In this example, there is a range of parameters for which all of the effects are positive, implying information is monotonically good, and another range where they are all negative negative, implying information is monotonically bad.

\begin{example}[{\bf Sufficient Condition for $\alpha$-IMG and $\alpha$-IMB}]\label{ex: linear with shifts}
	Consider two demand curves $D(p,\theta_i) = a_i - p + \frac{c_i}{p}$ for $i \in \{1,2\}$ and $a_i,c_i \geq 0$ with supports $[\delta,a_i]$ for small enough $\delta> 0$ (to ensure demands are bounded). Without loss of generality, assume $a_1 \leq a_2$.  Then  $\alpha$-IMG holds for all $\alpha$ if 
	\begin{align*}
		c_1 - c_2 \geq (a_2 - a_1)\frac{a_2}{2}.
	\end{align*}
	$\alpha$-IMB holds for all $\alpha\geq \frac{1}{2}$ if
	\begin{align*}
		c_1 \leq c_2 \leq \frac{a_1^2}{4}.
	\end{align*}
\end{example}

\subsection{Reduction from Many Demand Curves to Two} \label{sec:method}
We now sketch the proof of the rest of \autoref{thm:NSC} that shows how to reduce a general family of demand curves to two. The reduction has three main steps. We first show how to transform the problem to an optimization problem and establish prior-freeness. We then show why partial inclusion is necessary for welfare-montonicity and how it enables us to use the first-order approach.  Finally, we use strong duality to solve the optimization problem.

\subsubsection{Step 1: Transformation to an Optimization Problem}\label{subsec: tranformation to optimization}
If $\alpha$-IMB holds for $(\mathcal{D},\mu_0)$, then no-information segmentation maximizes $V^{\alpha}$ over all segmentations in $S(\mu_0)$, but not vice versa.
  We observe, however, that $\alpha$-IMB for $(\mathcal{D},\mu_0)$ is \emph{equivalent} to the property that, for \emph{every} $\mu$, the no-information segmentation is a maximizer of $V^{\alpha}$ over all segmentations in $S(\mu)$.  This equivalence follows from the standard observation that both of these two properties are equivalent to the concavity of the value function $W^{\alpha}: \Delta \Theta \rightarrow R$,
\begin{align}
	W^{\alpha}(\mu) = \int_{\Theta} V^{\alpha}(p(\mu),\theta) \diff \mu(\theta), \label{eqn:valGen}
\end{align}
that specifies the expected $\alpha$-surplus of each market $\mu$.
Conversely, $\alpha$-IMG is equivalent to the no-information segmentation being a \emph{minimizer} of $V^{\alpha}$, or a maximizer of $-V^{\alpha}$, \emph{for all} priors.  This observation transform our problem to a class of optimization problems and allows us to give a unified proof for the two welfare-monotonicity properties.

\subsubsection{Step 2: Partial Inclusion and the First-Order Condition} \label{subsec:PI}
The optimization problem derived above nests another one in which the seller chooses an optimal price for each market.  We show that the partial-inclusion assumption allows us to replace the seller's optimal pricing problem with a first-order condition.

To understand the connection between partial inclusion and the first-order condition, consider some market $\mu$, and suppose for simplicity that it has binary support over two demands $\theta_L,\theta_H$.  Suppose there is full exclusion, i.e., $\overline{p}(\theta_L) < p(\theta_H)$, as shown in Panel (a) of \autoref{fig: no exclusion intuition}.  Even though each type has a revenue curve that is concave over its support, the revenue curve associated with market $\mu$ need not be concave, as shown in the figure.  As a result, the first-order condition does not pin down the optimal price.  We therefore proceed with two arguments.
\begin{figure}
	\centering
	\setlength{\tabcolsep}{10pt}
	\begin{tabular}{cc}
		
		\begin{tikzpicture}[ultra thick, scale=7]
			
			\draw[<->] (1,0) node[below]{price} -- (0,0) -- (0,.7) node[right]{revenue};
			
			\draw[domain=0:.3]    plot (\x,{20*\x*(0.3-\x)}) node[below,xshift=-4]{$\overline{p}(\theta_L)$};
			\draw[domain=0:.9]    plot (\x,{\x*(.9-\x)});
			
			\draw[dotted] (.45,.2) -- (.45,0) node[below,xshift=10]{$p(\theta_H)$};

            \draw[black!20,domain=0:.3]    plot (\x,{0.2*20*\x*(0.3-\x) + 0.8*\x*(.9-\x)});
			\draw[black!20,domain=0.3:.9]    plot (\x,{0.8*\x*(.9-\x)});

            \draw (.15,.45) node[above]{$R(p,\theta_L)$};

            \draw (.45,.2) node[above]{$R(p,\theta_H)$};

			\draw (.5,-.15) node[below]{(a)};

		\end{tikzpicture}&

		\begin{tikzpicture}[ultra thick, scale=7]
			
			\draw[dotted] (.3,.43) -- (.3,0) node[below,black,xshift=-3]{$p(\theta_L)$};			
			
			\draw[domain=0:.6]    plot (\x,{5*\x*(0.6-\x)}) node[below,black,xshift=5]{$\overline{p}(\theta_L)$};
			\draw[domain=0:.9]    plot (\x,{\x*(.9-\x)});

			\draw[dotted] (.45,.2) -- (.45,0) node[below,xshift=0,black]{$p(\theta_H)$};
			
			\draw[<->] (1,0) node[below]{price} -- (0,0) -- (0,0.7) node[right]{revenue};			
			
			\draw[black!20,domain=0:.6]    plot (\x,{0.3*5*\x*(0.6-\x) + 0.7*\x*(.9-\x)});
			\draw[black!20,domain=0.6:.9]    plot (\x,{0.7*\x*(.9-\x)});

			\draw (.5,-.15) node[below]{(b)};

		\end{tikzpicture}
	\end{tabular}
	\caption{(a) The gray curve, which represents the revenue curve of market $\mu$, is not concave.  (b) The optimal price for market $\mu$ is the price that satisfies the first-order condition over $[p(\theta_L),p(\theta_H)]$.}
	\label{fig: no exclusion intuition}
\end{figure}

First, we show that if partial inclusion is violated, then welfare-monotonicity does not hold, without relying on the first-order approach.  To do this, we consider the two cases of full exclusion, $\overline p(\theta_L)<p(\theta_H)$, and full inclusion, $p(\theta_L) < \underline p(\theta_H)$, separately and in each case construct an information structure that is good and one that is bad.  One of these four constructions is standard: If there is full exclusion, as is the case in the figure, then information has the potential to be beneficial by ``opening new markets''.
The other three cases require new arguments that we discuss in detail in the proof.

Second, we show that under partial inclusion, the first-order approach is indeed valid.  The idea, as shown in Panel (b) of \autoref{fig: no exclusion intuition}, is that even though the revenue curve for a market is not necessarily concave, it is concave over the range $[\min_{\theta} p(\theta),\max_{\theta} p(\theta)]$ of lowest and highest optimal price of the types in the market, which are the only prices we need to consider to find the seller's profit-maximizing price.

\subsubsection{Step 3: Applying Duality}\label{sec: applying duality}
We next apply the duality framework of \citet{Kol18}, \citet{DM19}, \citet{DK24}, and \citet{KCW24}.  They key is to use those duality results to obtain optimality condition of the no-information segmentation for \emph{every} prior.

\begin{prop}\label{prop: duality converted to our setting}
Suppose there is partial inclusion.  Welfare-monotonicity holds if and only if there exists a function $\zeta$ such that for every ${p} \in I$ and every type $\theta$,
\begin{align}
	{p} \in \arg\max_{p'\in I} U(p',\theta) - \zeta({p},p')R_{p}(p',\theta),\label{eq: rewriting IMB IMG using duality}
\end{align}
where $U = V^{\alpha}$ characterizes $\alpha$-IMB and $U = - V^{\alpha}$ characterizes $\alpha$-IMG.
\end{prop}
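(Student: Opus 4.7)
The plan is to convert welfare-monotonicity into a family of pointwise optimality conditions---one per prior---and then extract a single uniform certificate $\zeta$ via linear-programming duality.

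First, by Step 1 of \autoref{sec:method}, $\alpha$-IMB (respectively $\alpha$-IMG) is equivalent to the no-information segmentation being a maximizer of $V^\alpha$ (respectively $-V^\alpha$) over $S(\mu)$ for every prior $\mu$. Writing $U\in\{V^\alpha,-V^\alpha\}$ to cover both cases, the property reduces to the following: for which $\mathcal{D}$ is no-information simultaneously optimal for every prior?

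Second, under partial inclusion, the first-order approach justified in Step 2 pins down the seller's optimal price at any market $\mu$ via $\int R_p(p,\theta)\,d\mu(\theta)=0$. Following the reformulations of \citet{Kol18}, \citet{DM19}, \citet{DK24}, and \citet{KCW24}, the sender's problem at prior $\mu$ is a linear program: choose a joint distribution $\pi$ on $I\times\Theta$ whose $\Theta$-marginal equals $\mu$ and that satisfies the obedience constraint $\int h(p)R_p(p,\theta)\,d\pi=0$ for every continuous $h:I\to\mathbb{R}$, to maximize $\int U(p,\theta)\,d\pi$. The compactness and continuity hypotheses required by the strong-duality theorems in those papers are supplied by \autoref{ass1} together with partial inclusion, which makes $I=[p(\theta_L),p(\theta_H)]$ a compact interval on which the first-order approach is valid. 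The dual minimizes $\int\lambda(\theta)\,d\mu(\theta)$ over pairs $(\lambda,g)$ with $\lambda(\theta)\geq U(p',\theta)-g(p')R_p(p',\theta)$ for all $(p',\theta)\in I\times\Theta$, and strong duality holds.

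Third, the no-information segmentation achieves primal value $\int U(p(\mu),\theta)\,d\mu(\theta)$, and for any candidate $g$ the dual variable $\lambda(\theta):=U(p(\mu),\theta)-g(p(\mu))R_p(p(\mu),\theta)$ integrates to exactly that value against $\mu$ by the prior's first-order condition. Strong duality therefore implies that no-information is optimal at $\mu$ if and only if this $\lambda$ is dual-feasible for some $g$, i.e., $p(\mu)\in\arg\max_{p'\in I}U(p',\theta)-g(p')R_p(p',\theta)$ for every $\theta\in\Theta$. As $\mu$ varies, $p(\mu)$ sweeps all of $I$: given any $p\in I$, the binary mixture of $\theta_L$ and $\theta_H$ whose weights solve the first-order condition at $p$ has optimal price $p$, and the restriction to full-support priors is handled by continuity of $W^\alpha$ and the dual. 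Requiring optimality at every prior is therefore equivalent to the existence, for each $p\in I$, of a function $g_p:I\to\mathbb{R}$ with $p\in\arg\max_{p'}U(p',\theta)-g_p(p')R_p(p',\theta)$ for every $\theta$; defining $\zeta(p,p'):=g_p(p')$ yields the stated condition.

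The main obstacle is verifying the hypotheses of the cited strong-duality theorems in our formulation. These hypotheses---compactness of $\Theta$ and $I$, continuity of $U$ and $R_p$, and the validity of the first-order characterization of the seller's best response---are exactly what \autoref{ass1} and partial inclusion give us, so this step is handled by direct invocation rather than a new argument. The genuinely new content lies in casting the problem as a joint-distribution LP over price-type pairs and, crucially, in the passage from pointwise-per-prior optimality certificates to a single uniform $\zeta(p,p')$ that works across all priors.
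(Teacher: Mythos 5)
Your proposal follows essentially the same route as the paper's proof: reformulate welfare-monotonicity as optimality of no-information for every prior, cast the sender's problem at each prior as a joint-distribution LP over price--type pairs with an obedience constraint, invoke the strong-duality results of \citet{Kol18} and \citet{KCW24}, use complementary slackness at the degenerate no-information measure to pin down $\lambda(\theta)=U(p,\theta)-\zeta(p)R_p(p,\theta)$, and then index the resulting multiplier by the price $p$ (which sweeps all of $I$ as the prior varies) to obtain the single function $\zeta(p,p')$. The argument is correct and matches the paper's, so no further comparison is needed.
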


Given the above result, the ``sufficiency'' direction of the proof is straightforward.  Suppose welfare-monotonicity holds for the binary family $\{\theta_L,\theta_H\}$, so by \autoref{prop: duality converted to our setting}, \cref{eq: rewriting IMB IMG using duality} holds for $\theta_L$ and $\theta_H$.  Suppose also that the demand curves of $\theta_L,\theta_H$ are bases that span some set $\mathcal{D}$ of demand curves, so every demand curve in the set is a linear combination of that of $\theta_L$ and $\theta_H$.  Then \cref{eq: rewriting IMB IMG using duality} holds for all demand curves in $\mathcal{D}$, and welfare-monotonicity holds for the entire set.

For the ``necessity'' part of the proof, again, one part of the argument is straightforward.  If welfare-monotonicity holds for $\mathcal{D}$, then \cref{eq: rewriting IMB IMG using duality} is satisfied for every type, and in particular it must hold for the two types $\theta_L,\theta_H$ that have the lowest and the highest optimal monopoly price in the family, so welfare-monotonicity must hold for the binary family that contains only the demand curves associated with $\theta_L,\theta_H$ as well.

To argue that the spanning condition of the theorem must be satisfied, we argue first that \emph{each} pair $\theta,\theta'$ of types pin down the function $\zeta$ over the interval $(p(\theta),p(\theta'))$ of prices that are in between their two optimal monopoly prices.  We then argue that if the spanning condition of the theorem is violated, the $\zeta$ functions that are pinned down with different pairs do not agree, and therefore a single function $\zeta$ does not exist.\footnote{The observation that the spanning condition is necessary for welfare-monotonicity is consistent with the finding in Corollary 1 of \citet{KCW24} that no-information is suboptimal for generic utility functions that the receiver might have in their Bayesian persuasion setting.  A difference is that the value functions of the receiver and the sender in their setting are unrelated whereas in our setting the value function $U$ and the seller's objective $R$ are both pinned down given the demand curves.  We use this property to give a complete characterization of our welfare-monotonicity properties.}  \cref{app: main result proof} provides detail of the proof. 

\subsection{Discussion and Connections to Classical and Modern Literature}\label{sec: discussion}
Having seen our characterization and the logic behind it, let us now compare our finding to those in the classical and modern literature on price discrimination.

\subsubsection{The Classical Literature} 
\paragraph{Comparison of conditions.} Our welfare-monotonicity conditions neither imply nor are implied by those in the classical literature that compare, in our language, only full information to no information.  On the one hand, our conditions are stronger because we characterize stronger properties.  On the other hand, the conditions in the literature are stronger because they provide only sufficient conditions.  For instance, as far as we are aware, existing results do not apply to constant-elasticity demand curves for which we show welfare-monotonicity.\footnote{In particular, this example violates the conditions of \citet{ACV10} that give the most general conditions for price discrimination to decrease total surplus.  Namely, their condition that the demand curve with a lower monopoly price has a lower curvature (is ``more concave'') than the demand curve with a higher monopoly price, is violated (it holds the opposite way) in this example.}  

\begin{example}[{\bf Constant-elasticity demand: $\frac{1}{2}$-IMB}]\label{example:CES} 
	Consider two demand curves $D(p,\theta_i) = (c+p)^{-\theta_i}$  for $i \in \{1,2\}$ and constant $c > 0$. Assume $1<\theta_2<\theta_1 \leq 2\theta_2 - 1$ and truncate the demand curves at $\frac{2c}{\theta_2-1}$.\footnote{These conditions ensure that \cref{ass1} holds.  The revenue curve associated with demand $(c+p)^{-\theta}$ is maximized at $\frac{c}{\theta-1}$, is concave below $\frac{2c}{\theta-1}$, and is convex above that threshold.  So by truncating the two demand curves at $\frac{2c}{\theta_2-1}$, both revenue curves are concave over the truncated interval.  The extra assumption $\theta_1 \leq 2\theta_2 - 1$ guarantees that optimal prices are in the interior.}
    Then information is monotonically bad for total surplus if and only if  $\theta_1 \leq\theta_2+\frac{1}{2}$.  Under this condition, information is monotonically bad for $\alpha$-surplus for any $\alpha \geq \frac{1}{2}$.
\end{example}

These demand curves have constant elasticities $\theta_1$ and $\theta_2$.\footnote{To be precise, these demand curves are a \emph{normalization} of constant-elasticity demands curves.  Starting from demand curve $p^{-\theta_i}$, which has constant elasticity $\theta_i$, and constant marginal cost $c>0$, if we normalize costs to zero by shifting prices by $c$, we obtain the demand functions used in this example.} The example shows that providing information to the seller monotonically decreases total  surplus exactly when the two demand elasticities are not too far apart. In other words, unless consumers have widely different elasticities with respect to price, price discrimination is bad for total (and consumer) surplus.



\paragraph{Our three effects versus the two classical ones.} Although our conditions are different from those in the literature, they are in fact related in spirit. 
For this, recall the two effects from the classical literature discussed in the introduction: Price discrimination affects the total quantity produced, the ``output'' effect, and allows selling that total quantity at non-uniform prices, the ``misallocation'' effect.  Because the misallocation effect is a distortion that reduces total surplus, price discrimination decreases total surplus whenever the output effect is also negative. This logic therefore provides a sufficient condition for total surplus to fall, but cannot yield conditions for total surplus to rise because it offers no way to trade off the two effects when the output effect is positive (the misallocation effect is always negative).\footnote{Not all papers in the literature focus on this logic.  \citet{ACV10} give sufficient conditions for total surplus to increase or decrease with price discrimination using a different but related logic.}

Our result addresses precisely this limitation. It identifies three distinct effects of information and combines them into a single expression, giving a complete characterization of welfare-monotonicity.  Our effects are conceptually related to the classical ones but differ in important ways. Our price-curvature effect parallels the classical output effect, as both concern how price discrimination affects the seller’s aggregate decisions (average price versus total quantity). Similarly, our cross-groups price change effect parallels the classical misallocation effect, as both concern how price discrimination affects types non-uniformly. A first difference between our effects and the classical ones, however, is that all of our effects are measured in prices, not quantities.  This is important because we allow for general objectives, and not just total surplus.\footnote{Suppose we want to calculate consumer surplus in a market by taking the average of consumer surplus across all types in the market.  The quantity sold to the market is not a sufficient statistic for doing so because one has to calculate the price in order to determine how that quantity is divided across different types in the market.}   A second difference is that our within-groups price change effect has no counter-part and arises because we study partial segmentations: Even consumers of one type are treated differently by price discrimination.

\paragraph{Flexible welfare measures.} The flexibility of our characterization, which allows for arbitrary weights on consumer and producer surplus, helps paint a more comprehensive picture of the effects of price discrimination than fixing any particular objective would. To elaborate, consider the following corollary of our result. 

\begin{corol} \label{prop:simplifiedFmility}
  Consider the family of demand curves
  \begin{align*}
      \mathcal{D} = \Big\{a(\theta)D(p) + b(\theta)\Big\}_{\theta}
  \end{align*}
  for some demand curve $D$.  Then $\alpha$-IMB ($\alpha$-IMG) holds if and only if
  \begin{align}
        (2\alpha-1) p + \alpha (\frac{pD'(p)}{R''(p)})\label{eq: simplified expression for aD+b}
  \end{align}
  is increasing (decreasing) over $[\min\limits_{\theta}p(\theta),\max\limits_{\theta}p(\theta)]$ and there is partial inclusion.\footnote{We denote the first and second derivative of $R(p)$ by $R'(p)$ and $R''(p)$, and similarly $D'(p)$ is the derivative of $D(p)$.}
\end{corol}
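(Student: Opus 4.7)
My plan is to apply \autoref{thm:NSC}. Partial inclusion is assumed. The spanning condition (B) is immediate: writing $D(\cdot,\theta_L)=a_L D+b_L$ and $D(\cdot,\theta_H)=a_H D+b_H$ with $a_L=a(\theta_L)$ etc., the system $f_1 a_L + f_2 a_H = a(\theta)$, $f_1 b_L + f_2 b_H = b(\theta)$ is solvable (and can be assumed non-negative for a family of this shape) because $\Delta := a_L b_H - a_H b_L \ne 0$; indeed, $\Delta=0$ would force $D(\cdot,\theta_L)$ and $D(\cdot,\theta_H)$ to be proportional and thus to share their monopoly price, contradicting $p(\theta_L)<p(\theta_H)$. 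So the entire content of the corollary is to simplify condition (C), i.e., the monotonicity of the expression \eqref{eq:main} evaluated at the binary family $\{\theta_L,\theta_H\}$.

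\textbf{Simplifying the fraction in \eqref{eq:main}.} I first plug the family in:
\[
R_p(p,\theta)=a(\theta)R'(p)+b(\theta),\qquad R_{pp}(p,\theta)=a(\theta)R''(p),\qquad V^\alpha_p(p,\theta)=a(\theta)u(p)+b(\theta)(1-2\alpha),
\]
where $u(p):=-\alpha D(p)+(1-\alpha)R'(p)$. Writing $\lambda:=R_p(p,\theta_L)/R_p(p,\theta_H)$, direct algebra yields
\[
a_L-\lambda a_H \;=\; \tfrac{\Delta}{a_H R'(p)+b_H},\qquad b_L-\lambda b_H \;=\; -R'(p)\tfrac{\Delta}{a_H R'(p)+b_H}.
\]
Substituting into the numerator of the fraction in \eqref{eq:main} gives $\frac{\Delta}{a_H R'(p)+b_H}\bigl[u(p)-(1-2\alpha)R'(p)\bigr]=\frac{\Delta}{a_H R'(p)+b_H}\,\alpha[R'(p)-D(p)]$; since $R(p)=pD(p)$, $R'(p)-D(p)=pD'(p)$. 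The denominator equals $R''(p)\Delta/(a_H R'(p)+b_H)$, so after cancellation the fraction becomes
\[
E(p)\;:=\;\frac{\alpha\,pD'(p)}{R''(p)},
\]
which depends only on the underlying demand $D$, not on $(a_L,b_L,a_H,b_H)$.

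\textbf{Differentiating the full expression.} Let $A(p)$ denote the expression in \eqref{eq:main}. Using $R_p(p,\theta_L)-R_p(p,\theta_H)=(a_L-a_H)R'(p)+(b_L-b_H)$ and $R_{pp}(p,\theta_L)-R_{pp}(p,\theta_H)=(a_L-a_H)R''(p)$, I compute
\[
A'(p)\;=\;\bigl[V^\alpha_p(p,\theta_H)-V^\alpha_p(p,\theta_L)\bigr]+E'(p)\bigl[R_p(p,\theta_L)-R_p(p,\theta_H)\bigr]+E(p)(a_L-a_H)R''(p).
\]
The last term equals $(a_L-a_H)\alpha pD'(p)$, and combining it with the first bracket using the identity $u(p)-\alpha pD'(p)=(1-2\alpha)R'(p)$ collapses everything to
\[
A'(p)\;=\;\bigl[E'(p)+(2\alpha-1)\bigr]\cdot\bigl[R_p(p,\theta_L)-R_p(p,\theta_H)\bigr].
\]
This is the main structural identity.

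\textbf{Reading off the conclusion.} On the open interval $(p(\theta_L),p(\theta_H))$, concavity of the two revenue functions gives $R_p(p,\theta_L)<0<R_p(p,\theta_H)$, so the second bracket is strictly negative. Hence $A$ is decreasing (the $\alpha$-IMB condition) iff $E'(p)+(2\alpha-1)>0$, i.e., iff $(2\alpha-1)p+E(p)$ is increasing, and $A$ is increasing (the $\alpha$-IMG condition) iff the same expression is decreasing, which is exactly \eqref{eq: simplified expression for aD+b}. Combined with partial inclusion and the automatic spanning property, \autoref{thm:NSC} delivers the corollary.

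\textbf{Main obstacle.} The only delicate step is the algebraic collapse of the fraction in \eqref{eq:main} to the purely demand-side quantity $\alpha pD'(p)/R''(p)$, and the subsequent identity $A'(p)=[E'(p)+(2\alpha-1)][R_p(p,\theta_L)-R_p(p,\theta_H)]$; all the $\theta$-dependence has to cancel, which requires careful bookkeeping of the $(a,b)$-coefficients and the identity $u(p)-\alpha pD'(p)=(1-2\alpha)R'(p)$. Once that collapse is verified, the rest is a sign check on the interval and an invocation of \autoref{thm:NSC}.
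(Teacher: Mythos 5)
Your proof is correct and follows essentially the same route as the paper: invoke \autoref{thm:NSC}, observe that the spanning condition is automatic for this two-parameter family, and reduce condition (C) by differentiating the expression in \cref{eq:main} until it collapses to $(2\alpha-1)p+\alpha\,pD'(p)/R''(p)$ times the sign-definite factor $R_p(p,\theta_L)-R_p(p,\theta_H)$; the paper performs the same algebra after normalizing $D(\cdot,\theta_H)=D$, whereas you keep the determinant $\Delta$ explicit, which is merely a cosmetic difference. The one place you are more casual than the paper is the non-negativity of $f_1,f_2$ (which is needed for the sufficiency direction of statement (i) of \autoref{thm:NSC}), where the paper writes explicit formulas, but your parenthetical claim is correct because every $(a(\theta),b(\theta))$ lies in the convex cone spanned by the two types with extreme ratios $b(\theta)/a(\theta)$, i.e.\ extreme monopoly prices.
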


\autoref{prop:simplifiedFmility} implies that information is monotonically bad for total surplus if and only if $\frac{pD'(p)}{R''(p)}$ is increasing.  This finding is consistent with those of \citet{Cow07} and \citet{ACV10} who show that if this expression is increasing, then total surplus is lower under full segmentation than under uniform pricing. Additionally, our result shows that when the expression is decreasing, then information is monotonically good for total surplus.  In this case, does total surplus increase because consumer surplus also increases, or because any loss in consumer surplus is offset by a larger gain in producer surplus?  The corollary clarifies: Consumer surplus, and therefore total surplus, increases when \cref{eq: simplified expression for aD+b} is decreasing for $\alpha = 1$.  Otherwise, consumer surplus may decrease, but any losses are offset by an increase in producer surplus when \cref{eq: simplified expression for aD+b} is decreasing for $\alpha = \frac{1}{2}$.  In Appendix~\ref{onlineapp: three cases}, we give an example demonstrating each possibility based on the curvature of a parametric demand function, qualitatively shown in \autoref{fig: aD+b example summary IMGIMB}.
\begin{figure}
	\centering	
	\begin{tikzpicture}[ultra thick, scale=2.5]
		\draw[->] (-3.5,0) -- (2.5,0) node[below,xshift=-15]{curvature};

		\draw (0.53,.08) -- (0.53,-.08) node[below]{high};
		\draw (-1.53,.08) -- (-1.53,-.08) node[below]{low};
		
		\draw[dashed] (0.53,0) -- (0.53,.75);
		\draw[dashed] (-1.53,0) -- (-1.53,.75);
		
		\draw (1.5,.4) node{Bad for both TS and CS};

		\draw (-2.65,.4) node{Good for TS but bad for CS};

		\draw (-.5,.4) node{Good for both TS and CS};
	\end{tikzpicture}
	\caption{The three possible cases for how information affects total surplus (TS) and consumer surplus (CS) in a parametric example.}
	\label{fig: aD+b example summary IMGIMB}
\end{figure}

\subsubsection{Approaching Step Functions} 
Having seen the connection between our conditions and those in the classical price discrimination literature, let us compare our findings to those in \citet{BBM15}. Because the demand curves in our setting must be downward-sloping over a non-empty interval, they cannot be step functions, which are considered in \citet{BBM15}.  One can nonetheless ask what happens when we \emph{approach} step functions within our framework.  Because we allow different types to have different supports of prices, it is in fact possible to approach each step function with a strictly decreasing demand curve that has a strictly concave revenue curve over a vanishingly small interval.  But as these intervals get small, the partial inclusion condition will eventually be violated, in which case our result says that welfare-monotonicity must be violated, consistent with what \citet{BBM15} find.\footnote{Even though the findings of \citet{BBM15} are not stated in terms of welfare-monotonicity, they imply that both welfare-monotonicity properties are violated.  In particular, with two types representing step functions where $p(\theta_L) < p(\theta_H)$, any segmentation in which some market has price $p(\theta_H)$ can be refined in a way that increases consumer surplus and therefore weighted surplus.   And any segmentation in which some market has price $p(\theta_L)$ can be refined in a way that reduces consumer surplus while keeping producer surplus fixed, reducing weighted surplus for any weight.} {The next Corollary formalizes this observation.}

\begin{corol}
    Consider a family of demand curves $\mathcal{D}^{\epsilon}$ that uniformly converges to a family of unit-demand curves as $\epsilon$ goes to zero while satisfying our \autoref{ass1} for every $\epsilon > 0$.  Then, for small enough $\epsilon$, the partial-inclusion condition is violated and therefore information is neither monotonically good nor bad.
\end{corol}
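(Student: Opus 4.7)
The plan is to show two things: first, that $p^{\epsilon}(\theta) \to v(\theta)$ for every type $\theta$, where $v(\theta)$ is the valuation at which the limiting unit-demand curve jumps; and second, that the upper endpoint $\overline{p}^{\epsilon}(\theta)$ of $I^{\epsilon}(\theta)$ also converges to $v(\theta)$. Given these two facts, pick any two types $\theta_L, \theta_H$ with $v(\theta_L) < v(\theta_H)$ (which exist in any non-degenerate limiting family); for small $\epsilon$, $p^{\epsilon}(\theta_H)$ is close to $v(\theta_H)$ while $\overline{p}^{\epsilon}(\theta_L)$ is close to $v(\theta_L)$, so $p^{\epsilon}(\theta_H) > \overline{p}^{\epsilon}(\theta_L)$, violating the partial-inclusion requirement $p^{\epsilon}(\theta_H) \leq \overline{p}^{\epsilon}(\theta_L)$. \autoref{thm:NSC}(i) then rules out both $\alpha$-IMG and $\alpha$-IMB.

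Convergence of the monopoly price is a standard argmax argument. The revenue $R^{\epsilon}(p, \theta) = p D^{\epsilon}(p, \theta)$ converges, uniformly on compact subsets of $\mathbb{R}_+$ that avoid $v(\theta)$, to the limit revenue $R^0(p, \theta) = p \mathbf{1}\{p \leq v(\theta)\}$, and the supremum $v(\theta) = \sup R^0(\cdot,\theta)$ is attained only at $v(\theta)$. Along any subsequence with $p^{\epsilon_k}(\theta) \to p^*$, the maximum value $R^{\epsilon_k}(p^{\epsilon_k}(\theta))$ converges to $v(\theta)$; if $p^* \neq v(\theta)$, continuity of $R^0$ at $p^*$ would force $R^0(p^*) = v(\theta)$, contradicting $R^0(p) < v(\theta)$ for all $p \neq v(\theta)$. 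Hence $p^{\epsilon}(\theta) \to v(\theta)$, with $R^{\epsilon}(p^{\epsilon}(\theta), \theta) \to v(\theta)$.

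The main step is to show $\overline{p}^{\epsilon}(\theta) \to v(\theta)$, which I would establish by contradiction. Suppose $\overline{p}^{\epsilon_k}(\theta) \to b > v(\theta)$ along a subsequence. The midpoint $m_k = \tfrac{1}{2}(p^{\epsilon_k}(\theta) + \overline{p}^{\epsilon_k}(\theta))$ lies in $I^{\epsilon_k}(\theta)$ and satisfies $m_k \to \tfrac{1}{2}(v(\theta)+b) > v(\theta)$, so $R^{\epsilon_k}(m_k) \to 0$. Strict concavity of $R^{\epsilon_k}$ on $I^{\epsilon_k}(\theta)$ gives $R^{\epsilon_k}(m_k) \geq \tfrac{1}{2}\bigl(R^{\epsilon_k}(p^{\epsilon_k}(\theta)) + R^{\epsilon_k}(\overline{p}^{\epsilon_k}(\theta))\bigr)$; passing to the limit using $R^{\epsilon_k}(p^{\epsilon_k}(\theta)) \to v(\theta)$ and $R^{\epsilon_k}(\overline{p}^{\epsilon_k}(\theta)) = 0$ yields $0 \geq v(\theta)/2 > 0$, a contradiction. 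The main technical subtlety is reconciling the strict monotonicity and concavity imposed by \autoref{ass1} on the continuous curve $D^{\epsilon}$ with the discontinuity of the limiting step function; the concavity-and-midpoint argument above handles it by preventing $I^{\epsilon}(\theta)$ from extending meaningfully above the jump at $v(\theta)$.
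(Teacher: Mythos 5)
Your overall route is exactly the one the paper intends: the paper offers no formal proof of this corollary beyond the surrounding discussion (``as these intervals get small, the partial-inclusion condition will eventually be violated''), and your argument supplies the missing details by showing $p^{\epsilon}(\theta)\to v(\theta)$ and $\overline{p}^{\epsilon}(\theta)\to v(\theta)$, so that $p^{\epsilon}(\theta_H)>\overline{p}^{\epsilon}(\theta_L)$ for small $\epsilon$ and \autoref{thm:NSC}(i) kills both properties. The concavity-and-midpoint observation is the right key idea: it is precisely the strict concavity of $R^{\epsilon}$ on $I^{\epsilon}(\theta)$ that prevents the support from extending meaningfully above the jump, since uniform convergence alone does not.

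There is, however, one fixable circularity. Your step 2 consumes the conclusion of step 1 ($R^{\epsilon_k}(p^{\epsilon_k}(\theta))\to v(\theta)$), but step 1 as written does not justify the two assertions it rests on: that $p^{\epsilon_k}(\theta)$ has a convergent subsequence, and that the maximum revenue converges to $v(\theta)$. Prices range over an unbounded set, and uniform convergence of $D^{\epsilon}$ to the step function only forces $D^{\epsilon}(p)=o(1)$ for $p>v(\theta)$, which does not bound $pD^{\epsilon}(p)$ when $p$ is allowed to diverge; a demand that decays slowly over a long interval $I^{\epsilon}(\theta)$ could in principle keep tail revenue bounded away from zero, and ruling this out requires the very concavity argument you defer to step 2. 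The clean fix is to reorder: run the chord argument first against a \emph{fixed} reference price rather than against $p^{\epsilon}(\theta)$. For any $\eta,\delta>0$, the point $q_0=\max\{\underline{p}^{\epsilon}(\theta),v(\theta)-\eta\}$ lies in $I^{\epsilon}(\theta)$ with $R^{\epsilon}(q_0,\theta)\to$ a limit of at least $v(\theta)-\eta$, while $R^{\epsilon}(v(\theta)+\delta,\theta)\to 0$ and $R^{\epsilon}(v(\theta)+2\delta,\theta)\to 0$; if $\overline{p}^{\epsilon}(\theta)\geq v(\theta)+2\delta$ along a subsequence, concavity of $R^{\epsilon}$ on $I^{\epsilon}(\theta)$ applied to the triple $q_0<v(\theta)+\delta<v(\theta)+2\delta$ yields $0\geq(v(\theta)-\eta)/2$, a contradiction. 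This gives $\limsup_{\epsilon}\overline{p}^{\epsilon}(\theta)\leq v(\theta)$ directly, after which $p^{\epsilon}(\theta)\in I^{\epsilon}(\theta)$ is bounded, the maximum revenue is sandwiched between $R^{\epsilon}(v(\theta)-\eta,\theta)$ and $(v(\theta)+o(1))(1+o(1))$, and your step 1 goes through verbatim. With that reordering the proof is complete.
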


\subsubsection{Beneficial Information without Opening New Markets} 
Another insight from our result, that relates both to the classical and the modern literature, is that beneficial information generally exists even without opening new markets.  

A folklore wisdom in the classical price discrimination literature is that information has the potential to be beneficial if it opens new markets, which happens if there is exclusion, $\overline{p}(\theta_L) < p(\theta_H)$, because then separating the demand curves might prompt the seller to lower the price for type $\theta_L$ consumers without increasing it for type $\theta_H$ consumers.  

Following this observation, the modern and the classical literature then take different directions.  \citet{BBM15} focus on this force, asking what can happen when information is used to open new markets and demonstrating its power.\footnote{In their setting, simply separating the types does not increase consumer surplus, so more careful constructions are needed.  The main force, however, is the ability to create segmentations that decrease prices for some consumers without increasing it for others.}  The classical literature imposes a non-exclusion assumption in order to examine whether, without opening new markets, price discrimination can improve welfare.  Our result shows that it generally can.  Indeed, when there is partial inclusion (so no new markets are opened) but as long as one of the two other conditions of the characterization is violated (for example, the spanning condition, which is a demanding one), information does not affect welfare monotonically, so there is some information that is beneficial.  We pursue this case further in the next section.

\section{Beyond Monotonicity: Bounds and Improvements} \label{sec:gen}
\autoref{thm:NSC} immediately leads to two followup questions:  In cases where information does not affect welfare monotonically, what are the bounds on the effects, and what are the best and worst ways to provide additional information?  Our second main result answers these two questions.  

To simplify the analysis we assume that there is a finite set of types, where $n$ denotes the number of types.  In parallel to our analysis of the binary case where we represented a market by a single scalar, we represent a market by a $n-1$-dimensional vector $\mu = (\mu_2,\ldots,\mu_{n})$, where $\mu_i$ is the probability of each type $\theta_i$ for $i \neq 1$ and $1 - \sum_{i=2}^{n} \mu_i$ is the probability of type 1.

The first part of the result provides tight bounds on the marginal value of information.  To formalize, for any two segmentation $\sigma, \sigma'$ where $\sigma$ is a mean-preserving spread of $\sigma'$, define 
\begin{equation}
\Delta V^{\alpha}(\sigma,\sigma'):=\frac{V^{\alpha}\left(\sigma\right)-V^{\alpha}\left(\sigma'\right)}{\mathbb{E}_{\sigma}\left[\left\Vert \mu\right\Vert _{2}^{2}\right]-\mathbb{E}_{\sigma'}\left[\left\Vert \mu\right\Vert _{2}^{2}\right]},\label{eq: change in value rate}
\end{equation}
where $||\cdot||$ is the Euclidean norm. The numerator is the difference between the weighted surplus of the two segmentations.  The denominator is a measure of the additional information contained in $\sigma$ compared to $\sigma'$.  As a result, $\Delta V^{\alpha}(\sigma,\sigma')$ measures the \emph{rate} of surplus change with respect to the amount of additional information.  The theorem identifies the lowest and the highest values that $\Delta V(\sigma,\sigma')$ can take over all pairs of segmentation $\sigma \underset{\text{MPS}}\succeq \sigma'$.  This result is therefore a \emph{quantitative} counterpart to our (qualitative) first result, which is concerned with identifying when the bounds are both (weakly) positive ($\alpha$-IMG), both (weakly) negative ($\alpha$-IMB), or have different signs (the non-montone case).

The second part of the result identifies the best and the worst ways to provide additional information.  We particularly focus on the case where a small amount of additional information is provided, and ask what the best and worst ``direction'' for providing information is.  To formalize this, for any segmentation $\sigma$ and $\epsilon \leq 1$, let $\sigma_{\epsilon}$ be 
a segmentation that moves $\sigma$ towards the prior market by replacing each market $\mu$ in the support of $\sigma$ with $\epsilon \mu + (1-\epsilon) \mu_0$.\footnote{Formally the support of $\sigma_{\epsilon}$ is $\{\epsilon \mu + (1-\epsilon) \mu_0\  |\  \mu \in supp(\sigma)\}$, and $\sigma_{\epsilon}(M) = \sigma(\{\frac{\mu - (1-\epsilon)\mu_0}{\epsilon} \ | \ \mu \in M \})$ for every set of markets $M$ in the support of $\sigma_{\epsilon}$.}  When $\epsilon = 1$, the segmentation is equal to $\sigma$, and as $\epsilon$ decreases, more weight is put on the prior market, so less information is provided in some formal sense.   We say that a segmentation $\sigma$ is marginally optimal if for every segmentation $\sigma'$ of the same size, i.e., $\mathbb{E}_{\sigma}\left[\left\Vert \mu\right\Vert _{2}^{2}\right] = \mathbb{E}_{\sigma'}\left[\left\Vert \mu\right\Vert _{2}^{2}\right]$, we have $V^{\alpha}(\sigma_{\epsilon}) \geq V^{\alpha}(\sigma'_{\epsilon})$ for small enough $\epsilon$, and is marginally pessimal if the opposite holds, $V^{\alpha}(\sigma_{\epsilon}) \leq V^{\alpha}(\sigma'_{\epsilon})$.
For a given vector $\nu \in \mathbb{R}^{n-1}$, we say that a segmentation spreads the prior market in the direction $\nu$ if for every market $\mu$ in the segmentation, $\mu - \mu_0$ is proportional to $\nu$.  The result shows that marginally optimal (pessimal) segmentations are those that spreads the prior in a particular direction and identifies that direction.

For both the bounds and the best and worst directions, an important role is played by the lowest and the highest eigenvalues and eigenvectors of the Hessian matrix $\nabla^{2}W^{\alpha}(\mu) = (\partial_{\mu_i}\partial_{\mu_j} W^{\alpha}(\mu))_{i,j=2,\ldots,n}$ of the value function $W^{\alpha}(\mu)$ (defined in \cref{eqn:valGen}).
The largest possible benefit of providing information is proportional to the largest eigenvalue, and the best direction to provide information is in the direction of the associated eigenvector, which we simply call the ``largest" eigenvector (and analogously for the worst benefit and direction).  

These eigenvalues and eigenvectors exist and can be specified in closed form.  Indeed, they directly reflect and generalize the three effects of information from our two-type analysis. Because exploring the connection and giving the expressions explicitly requires some additional notation, we defer doing so to after the statement of the result, and in the statement only note that these values can be identified in closed form.

\begin{thm}[\bf Bounds and Improvements]
\label{thm:bounds} Suppose $\mathcal{D}$ contains a finite number of demand curves and there is partial inclusion.
\begin{enumerate}
    \item[(i)] For any market $\mu$, the Hessian $\nabla^{2}W^{\alpha}(\mu)$ has at most two non-zero eigenvalues. The lowest and the highest eigenvalues of the Hessian, $\underline{\lambda}^{\alpha}\left(\mu\right)$ and $\overline{\lambda}^{\alpha}\left(\mu\right)$, and their eigenvectors $\underline{v}^{\alpha}(\mu)$ and $\overline{v}^{\alpha}(\mu)$, exist and can be expressed in closed form.  The two eigenvalues have weakly opposite signs, $\underline{\lambda}^{\alpha}\left(\mu\right) \leq 0 \leq \overline{\lambda}^{\alpha}\left(\mu\right)$.
    \item[(ii)] The lower and upper bounds on the {marginal value of} information are proportional to the lowest and the highest eigenvalues,
    \begin{align}
\inf_{\sigma \underset{\text{MPS}}\succeq \sigma'}\Delta V^{\alpha}(\sigma,\sigma') = \frac{1}{2}\min_{\mu\in\Delta\Theta}\underline{\lambda}^{\alpha}\left(\mu\right),\\
\sup_{\sigma \underset{\text{MPS}}\succeq \sigma'}\Delta V^{\alpha}(\sigma,\sigma') = \frac{1}{2}\max_{\mu\in\Delta\Theta}\overline{\lambda}^{\alpha}\left(\mu\right).\label{eq: bounds}
\end{align}
\item[(iii)] A segmentation is marginally optimal (pessimal) if and only if it spreads the prior market $\mu_0$ in the direction of the largest (smallest) eigenvector $\overline{v}^{\alpha}(\mu_0)$.
\end{enumerate}
\end{thm}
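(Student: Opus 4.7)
The plan is to derive the Hessian of the value function $W^\alpha$ in closed form, show that it has rank at most two, and then deduce parts (i)--(iii) from this structural result. The partial-inclusion hypothesis ensures, together with \autoref{ass1}, that $p(\mu)$ is smooth and pinned down uniquely by the FOC $\sum_i \mu_i R_p(p(\mu),\theta_i)=0$, so that the computation below is well-defined. Differentiating this FOC implicitly yields
\[
p_{\mu_j} = -\frac{a_j}{r}, \qquad p_{\mu_i\mu_j} = -\frac{1}{r}\bigl[p_{\mu_i} b_j + p_{\mu_j} b_i + r'\,p_{\mu_i} p_{\mu_j}\bigr],
\]
with $a_j:=R_p(p,\theta_j)-R_p(p,\theta_1)$, $b_j:=R_{pp}(p,\theta_j)-R_{pp}(p,\theta_1)$, $r:=\sum_k\mu_k R_{pp}(p,\theta_k)$, and $r':=\sum_k\mu_k R_{ppp}(p,\theta_k)$. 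The first derivative of $W^\alpha$ reduces via the FOC to $\partial_{\mu_i} W^\alpha = V^\alpha(p,\theta_i)-V^\alpha(p,\theta_1) - \alpha\bar D\,p_{\mu_i}$, where $\bar D:=\sum_j\mu_j D(p,\theta_j)$. Differentiating again and substituting these expressions together with $V^\alpha_p(p,\theta_j)-V^\alpha_p(p,\theta_1) = -\alpha d_j + (1-\alpha)a_j$ (with $d_j:=D(p,\theta_j)-D(p,\theta_1)$), so that the $(1-\alpha)a_j$ contributions absorb into $(\nabla p)(\nabla p)^T$, a careful bookkeeping collapses the Hessian to
\[
\nabla^2 W^\alpha(\mu) = (\nabla p)\,w^T + w\,(\nabla p)^T + \kappa\,(\nabla p)(\nabla p)^T,
\]
where $w:=-\alpha d + (\alpha\bar D/r)\,b$ and $\kappa$ is an explicit scalar in $\alpha,r,r',\bar D,\bar D_p$. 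Every column of this matrix lies in $\mathrm{span}\{\nabla p,w\}$, proving the rank-at-most-two claim of part (i).

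Writing $\nabla^2 W^\alpha=PBP^T$ with $P:=[\nabla p\mid w]\in\mathbb{R}^{(n-1)\times 2}$ and $B:=\bigl(\begin{smallmatrix}\kappa & 1\\ 1 & 0\end{smallmatrix}\bigr)$, the non-zero eigenvalues of the Hessian coincide with those of the $2\times 2$ matrix $BG$, where $G:=P^T P$ is the Gram matrix of $\{\nabla p,w\}$. Since $\det B=-1$ and $\det G=\|\nabla p\|^2\|w\|^2-(\nabla p\cdot w)^2\geq 0$ by Cauchy--Schwarz, $\det(BG)\leq 0$, so the two non-zero eigenvalues $\underline\lambda^\alpha(\mu)\leq 0\leq\overline\lambda^\alpha(\mu)$ have weakly opposite signs. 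Their closed forms come from the $2\times 2$ trace--determinant formula, and the corresponding eigenvectors $\underline v^\alpha(\mu),\overline v^\alpha(\mu)\in\mathrm{span}\{\nabla p,w\}$ are recovered by solving the 2D eigenvalue problem. This completes part (i).

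For part (ii), let $\overline\lambda^\ast:=\max_{\mu\in\Delta\Theta}\overline\lambda^\alpha(\mu)$, which is attained by continuity on the compact simplex. The pointwise bound $x^T\nabla^2 W^\alpha(\mu)x\leq\overline\lambda^\ast\|x\|^2$ implies that $W^\alpha-\tfrac{\overline\lambda^\ast}{2}\|\cdot\|^2$ is concave on $\Delta\Theta$, so for any $\sigma\underset{\text{MPS}}\succeq\sigma'$ with coupling $\nu$,
\[
W^\alpha(\mu)-W^\alpha(\mu')\leq\bigl(\nabla W^\alpha(\mu')-\overline\lambda^\ast\mu'\bigr)^T(\mu-\mu') + \tfrac{\overline\lambda^\ast}{2}\bigl(\|\mu\|^2-\|\mu'\|^2\bigr)
\]
$\nu$-almost surely; integrating kills the linear term because $\mathbb{E}[\mu-\mu'\mid\mu']=0$, yielding $\Delta V^\alpha(\sigma,\sigma')\leq\overline\lambda^\ast/2$. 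For tightness, pick $\mu^\ast\in\arg\max\overline\lambda^\alpha$, embed $\mu^\ast$ in some $\sigma'\in S(\mu_0)$ by taking $\sigma'=\tfrac{1}{k}\delta_{\mu^\ast}+(1-\tfrac{1}{k})\delta_{\mu^{\ast\ast}}$ with $\mu^{\ast\ast}:=(\mu_0-\tfrac{1}{k}\mu^\ast)/(1-\tfrac{1}{k})\in\Delta\Theta$ for large $k$, and refine $\sigma'$ by splitting $\mu^\ast$ symmetrically into $\mu^\ast\pm\epsilon\,\overline v^\alpha(\mu^\ast)$. A second-order Taylor expansion shows numerator and denominator of $\Delta V^\alpha$ are both proportional to $\epsilon^2$, with ratio tending to $\overline\lambda^\alpha(\mu^\ast)/2=\overline\lambda^\ast/2$ as $\epsilon\to 0$. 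The lower-bound argument is fully symmetric.

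For part (iii), I Taylor expand $W^\alpha(\mu_0+\epsilon(\mu-\mu_0))$ to second order around $\mu_0$. Integrating over $\sigma$ and using Bayes-plausibility $\mathbb{E}_\sigma[\mu-\mu_0]=0$ kills the linear term, giving
\[
V^\alpha(\sigma_\epsilon)=W^\alpha(\mu_0)+\tfrac{\epsilon^2}{2}\,\mathbb{E}_\sigma\bigl[(\mu-\mu_0)^T H_0(\mu-\mu_0)\bigr]+o(\epsilon^2),
\]
with $H_0:=\nabla^2 W^\alpha(\mu_0)$; an identical computation gives $\mathbb{E}_{\sigma_\epsilon}[\|\mu\|^2]=\|\mu_0\|^2+\epsilon^2\,\mathbb{E}_\sigma[\|\mu-\mu_0\|^2]$. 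Thus two segmentations have the same size iff $\mathbb{E}_\sigma[\|\mu-\mu_0\|^2]=\mathbb{E}_{\sigma'}[\|\mu-\mu_0\|^2]$, and marginal optimality reduces to maximizing $\mathbb{E}_\sigma[(\mu-\mu_0)^T H_0(\mu-\mu_0)]$ under that constraint. The Rayleigh-quotient bound $(\mu-\mu_0)^T H_0(\mu-\mu_0)\leq\overline\lambda^\alpha(\mu_0)\|\mu-\mu_0\|^2$, saturated iff $\mu-\mu_0\parallel\overline v^\alpha(\mu_0)$, then shows that $\sigma$ is marginally optimal iff it spreads $\mu_0$ in the direction $\overline v^\alpha(\mu_0)$. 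The pessimal case is symmetric. The main obstacle in the entire proof lies in part (i): reducing the Hessian from the a priori three-dimensional span $\{\nabla p,d,b\}$ to the two-dimensional span $\{\nabla p,w\}$ is delicate and requires careful tracking of how the $\alpha\bar D\,\nabla^2 p$ term combines with the $-\alpha[d(\nabla p)^T+(\nabla p)d^T]$ term to produce the single vector $w$; once this structural result is in hand, parts (ii) and (iii) follow from standard semi-concavity and Rayleigh-quotient arguments.
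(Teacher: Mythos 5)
Your proposal is correct and follows the same overall strategy as the paper: compute the Hessian of $W^{\alpha}$ in closed form by implicitly differentiating the seller's first-order condition, show it is a symmetric rank-two matrix of separable form, read off the two non-zero eigenvalues and eigenvectors, and then obtain (ii) and (iii) from a second-order expansion combined with the martingale property of mean-preserving spreads and extremal-eigenvalue bounds. Your decomposition $\nabla p\,w^{T}+w\,\nabla p^{T}+\kappa\,\nabla p\,\nabla p^{T}$ is the paper's $x^{T}\nabla p+\nabla p^{T}x$ after setting $x=w+\tfrac{\kappa}{2}\nabla p$, and your observation that the $(1-\alpha)a$ term is proportional to $\nabla p$ (via $a=-r\,\nabla p$) is exactly how the paper's $\mathbb{E}[V^{\alpha}_{p}]=-\alpha\bar D$ simplification works. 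A few micro-steps differ, mostly to your credit. First, you prove $\underline{\lambda}\le 0\le\overline{\lambda}$ via $\det(BG)=\det B\det G\le 0$, whereas the paper reads it off from the explicit formulas $\nabla p\cdot x\pm\Vert\nabla p\Vert_{2}\Vert x\Vert_{2}$ and Cauchy--Schwarz; both work, but the paper's route also hands you the eigenvectors directly. Second, for the upper bound in (ii) you use semi-concavity of $W^{\alpha}-\tfrac{\overline{\lambda}^{*}}{2}\Vert\cdot\Vert^{2}$ rather than the paper's Cauchy-form Taylor expansion with an intermediate point $\tilde\mu$ plus the Courant--Fischer bound; your version is cleaner since it avoids tracking where $\tilde\mu$ lies. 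Third, your tightness construction embeds the maximizing market $\mu^{*}$ into a genuine segmentation of the prior $\mu_{0}$ via the two-point mixture, which is more careful than the paper (which simply switches the prior to $\hat\mu$).

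One small point you should patch: in the tightness argument you split $\mu^{*}\in\arg\max\overline{\lambda}^{\alpha}$ symmetrically into $\mu^{*}\pm\epsilon\,\overline{v}^{\alpha}(\mu^{*})$, but if the maximizer lies on the boundary of the simplex this perturbation may exit $\Delta\Theta$. The fix is the one the paper uses: take a full-support market $\hat\mu$ with $\overline{\lambda}^{\alpha}(\hat\mu)\ge\max_{\mu}\overline{\lambda}^{\alpha}(\mu)-\epsilon'$, run your construction there, and let $\epsilon'\to 0$; since the statement is about a supremum, this suffices. With that adjustment the argument is complete.
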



We prove \autoref{thm:bounds} by using the Taylor expansion of the difference between surplus at $\sigma$ and $\sigma'$, with random markets $\mu$ and $\mu'$, to approximate it in a quadratic form
\begin{align*}
    \Delta:= V^{\alpha}\left(\sigma\right)-V^{\alpha}\left(\sigma'\right) \approx \frac{1}{2} \mathbb{E}\left[\left(\mu'-\mu\right)^{T}\nabla^{2}W^{\alpha}\left(\tilde{\mu}\right)\left(\mu'-\mu\right)\right],
\end{align*}
where $\tilde{\mu}$ is some market that lies on the line that connects $\mu$ and $\mu'$.  We then apply the Courant–Fischer–Weyl min-max principle to bound the above expression using eigenvalues.\footnote{See Theorem 1.3.2 in  \citet{Tao23}.}  In particular, we show
\begin{align*}
    \Delta \leq \frac{1}{2} \mathbb{E}\left[\left(\mu'-\mu\right)^{T}  \left(\mu'-\mu\right) \overline{\lambda}^{\alpha}\left(\tilde\mu\right) \right] \leq \frac{1}{2} \left( \mathbb{E}_{\sigma}\left[\left\Vert \mu\right\Vert _{2}^{2}\right]-\mathbb{E}_{\sigma'}\left[\left\Vert \mu\right\Vert _{2}^{2}\right] \right)\max_{\mu\in\Delta\Theta}\overline{\lambda}^{\alpha}\left(\mu\right).
\end{align*}
A similar application of the Taylor expansion, applied to small perturbations of the prior market, gives the marginally optimal and pessimal directions.

The result has a simple intuition.  Recall that information is monotonically bad if the value function $W^{\alpha}$ is concave, and is monotonically good if it is convex.  So it is natural that how bad or good information is depends on how concave or convex the value function is.  Eigenvalues measure curvature along their eigenvectors:   The largest eigenvector identifies a direction in which the value function is ``most convex'', and the eigenvalue measures how convex it is in that direction.  The best information to provide is along that most convex direction, and the value of providing information in that direction is captured by the eigenvalue, the degree of convexity.  The extreme cases of welfare-monotonicity are when the value function is concave or convex, in which case both the eigenvalues of the Hessian are negative or positive, respectively.

Because \autoref{thm:bounds} is a quantitative result, it allows us to formalize the statement that if a family of demand curves approximately but not exactly satisfies the conditions of \autoref{thm:NSC}, then welfare-monotonicity approximately holds.  This is simply because the bounds of \autoref{thm:bounds}, and in particular the eigenvalues of the matrix $\nabla^{2}W^{\alpha}(\mu)$, change continuously as we perturb a family of demand curves, so if one of the bounds is zero for a family of demand curves, then the corresponding bound approaches zero as we approach that family of demand curves.

\autoref{thm:bounds} can also be used to provide bounds on the magnitude of the change in weighted surplus $V^{\alpha}\left(\sigma\right)-V^{\alpha}\left(\sigma'\right)$, as opposed to the \emph{rate} of surplus change as measured in \cref{eq: change in value rate}.  Indeed, for any two segmentations $\sigma,\sigma' \in S(\mu_0)$ where $\sigma$ is a mean-preserving spread of $\sigma'$, we can bound the amount of additional information that $\sigma$ contains compared to $\sigma'$ as follows,
\begin{align*}
    \mathbb{E}_{\sigma}\left[\left\Vert \mu\right\Vert _{2}^{2}\right]-\mathbb{E}_{\sigma'}\left[\left\Vert \mu\right\Vert _{2}^{2}\right] \leq 1 - \left\Vert \mu_0\right\Vert _{2}^{2},
\end{align*}
because the amount of additional information is highest when $\sigma$ is the full-information segmentation and $\sigma'$ is the no-information segmentation.  As a result, substituting the above bound into that of \autoref{thm:bounds}, we get the following bounds on the magnitude of the change in surplus,
\begin{equation}
\frac{1 - \left\Vert \mu_0\right\Vert _{2}^{2}}{2}\min_{\mu\in\Delta\Theta}\underline{\lambda}^{\alpha}\left(\mu\right)\leq V^{\alpha}\left(\sigma\right)-V^{\alpha}\left(\sigma'\right)\leq\frac{1 - \left\Vert \mu_0\right\Vert _{2}^{2}}{2}\max_{\mu\in\Delta\Theta}\overline{\lambda}^{\alpha}\left(\mu\right).\label{eq: bounds nonratio}
\end{equation}
Whereas the bound in \cref{eq: bounds nonratio} measures how much surplus increases or decreases, the bound in \cref{eq: bounds} is on the rate of change in surplus.  As a result, \cref{eq: bounds nonratio} is useful in situations where we know nothing beyond the fact that the seller is collecting additional information, whereas \cref{eq: bounds} gives tighter bounds when we also know (or can bound) \emph{how much} additional information the seller is collecting. {In particular, the bounds in \cref{eq: bounds nonratio} are loose when the amount of additional information is small, because in such cases $1 - \left\Vert \mu_0\right\Vert _{2}^{2}$ is a loose bound on the amount of additional information that $\sigma$ contains compared to $\sigma'$.}

Even though the definition of what a marginally optimal segmentation is seems to start from the prior distribution, corresponding to no information, it can also be used to identify how to improve any segmentation.  That is, suppose we start from some arbitrary segmentation, and want to optimally provide a small amount of additional information.  The additional information, that spreads each posterior $\mu$ in the support of the segmentation in a mean-preserving way, must do so in the direction that is marginally optimal for that posterior, which is in the direction of the largest eigenvector $\overline{\lambda}^{\alpha}(\mu)$.

Finally, let us point out that appropriate generalization of the theorem applies to Bayesian persuasion problems in which the receiver's action is specified by a first-order condition (for example, \citealp{KCW24}).



\subsection{Closed-form Expressions and the Three Effects of Information}\label{sec: general closed form}
As we mentioned above, the lowest and highest eigenvalues and eigenvectors of the Hessian matrix of the value function, which represent the magnitude and directions of best information, can be expressed in closed form.  We provide these closed-form expressions here. An examination of these expressions reveals that they generalize the three effects of information we discussed in our two-type analysis.

Before discussing the eigenvalues and the eigenvectors of the Hessian of $W^{\alpha}$, let us first examine the Hessian itself. This Hessian can be written as a sum of three terms.  In particular, each element $W^{\alpha}_{i,j}(\mu)$ of the Hessian matrix, which is a cross-partial derivatives of the value function with respect to $\mu_i,\mu_j$ can be written as
\begin{align}
    W^{\alpha}_{i,j}(\mu) & =  p_i(\mu) p_j(\mu) \mathbb{E}\big[V^{\alpha}_{pp}(\theta,p(\mu))\big] \nonumber \\
    & + p_j(\mu) \Big(V^{\alpha}_p(\theta_i,p(\mu)) - V^{\alpha}_p(\theta_1,p(\mu))\Big)  + p_i(\mu) \Big(V^{\alpha}_p(\theta_j,p(\mu)) - V^{\alpha}_p(\theta_1,p(\mu))\Big)\nonumber\\
    & + p_{i,j}(\mu) \mathbb{E}\big[V^{\alpha}_p(\theta,p(\mu))\big].  \label{eq: partial derivarives of W}
\end{align}
This cross-partial derivative {is a generalization of \cref{eq: the three effects} and} represents the three effects of information.  The first term represents the within-group price change effect, the second one the cross-groups price change effect, and the third one the price curvature effect.

The key observation that allows us to obtain closed-form expressions for the eigenvalues and eigenvectors of the Hessian is that the Hessian itself is a low rank matrix with a separable form.  For this, let us first write the Hessian compactly in matrix form using \cref{eq: partial derivarives of W},
\begin{align*}
    \nabla^2 W^{\alpha}(\mu) &=   \nabla p(\mu)^T \nabla p(\mu) \mathbb{E}\left[V^{\alpha}_{pp}(\theta,p(\mu))\right] \\
    & + (\Delta V^{\alpha}_p(\mu))^T \nabla p(\mu) + \nabla p(\mu)^T \Delta V^{\alpha}_p(\mu)\\
    & +\nabla^2 p(\mu) \mathbb{E}\left[V^{\alpha}_p(\theta,p(\mu))\right],
\end{align*}
where $\nabla p(\mu) = (p_2(\mu),\ldots,p_n(\mu))$ is the gradient vector of optimal price, $\nabla^2 p(\mu)$ is the Hessian of the optimal price, the vector 
\begin{align*}\Delta V^{\alpha}_p(\mu) = \left(V^{\alpha}_p(\theta_i,p(\mu)) - V^{\alpha}_p(\theta_1,p(\mu))\right)_{i=2,\ldots,n}\end{align*} 
is one that compares the derivative of the value function at every type $i \neq 1$ to that of type 1, and the superscript $T$ transposes each row vector to a column vector.   

Notice that the {first to third} term on the right hand side of the Hessian are already separated as products of a column vector and a row vector (multiplied possibly by a scalar).  Only the {last} term, involving the Hessian $\nabla^2 p(\mu)$ of the optimal price, is not already separated as products of two vectors.   We show that this term can also be separated, which allows us to write
\begin{align}
    \nabla^2 W^{\alpha}(\mu) = x(\mu)^T \nabla p(\mu) + \nabla p(\mu)^T x(\mu),\label{eq: hessian separated}
\end{align}
for some $x(\mu)$ that is defined as follows,
\begin{align*}
    x(\mu) =  \frac{1}{2} \mathbb{E}[V^{\alpha}_{pp}(\theta,p(\mu))] \nabla p(\mu) +\Delta V^{\alpha}_p(\mu) - \frac{\mathbb{E}[V^{\alpha}_{p}(\theta,p(\mu))]}{\mathbb{E}[R_{pp}(\theta,p(\mu))]} \bigg(\Delta R_{pp}(\mu) + \frac{1}{2}\mathbb{E}[R_{ppp}(\theta,p(\mu))] \nabla p(\mu)\bigg),
\end{align*}
where the vector 
\begin{align*}\Delta R_{pp}(\mu) = \Big(R_{pp}(\theta_i,p(\mu)) - R_{pp}(\theta_1,p(\mu))\Big)_{i=2,\ldots,n}\end{align*} 
is one that compares the second derivative of the revenue function at every type $i \neq 1$ to that of type 1. 

The vector $x$ is a direct generalization of the expression in \cref{eq:main} that characterizes welfare-monotonicity for binary types.  Recall that welfare-monotonicity is equivalent to the monotonicity of \cref{eq:main} as a function of price.  If we take the derivative of \cref{eq:main}, and evaluate it at the optimal price of a market, we obtain the equivalent of vector $x$  in the binary case.

Given the separable form of the Hessian, its two eigenvalues and eigenvectors can be written in closed form using the same two vectors $\nabla p$ and $x$,
\begin{alignat*}{2}
\overline{\lambda}\left(\mu\right)= & \nabla p\left(\mu\right) x\left(\mu\right)^T &&+\left\Vert \nabla p\left(\mu\right)\right\Vert _{2}\left\Vert x\left(\mu\right)\right\Vert _{2}\\
\underline{\lambda}\left(\mu\right)= & \nabla p\left(\mu\right) x\left(\mu\right)^{T}&&-\left\Vert \nabla p\left(\mu\right)\right\Vert _{2}\left\Vert  x\left(\mu\right)\right\Vert _{2}\\
\overline{v}\left(\mu\right)= & \nabla p\left(\mu\right)\left\Vert x\left(\mu\right)\right\Vert _{2}&&+\left\Vert \nabla p\left(\mu\right)\right\Vert _{2}     x\left(\mu\right)\\
\underline{v}\left(\mu\right)= & \nabla p\left(\mu\right)\left\Vert x\left(\mu\right)\right\Vert _{2}&&-\left\Vert \nabla p\left(\mu\right)\right\Vert _{2}     x\left(\mu\right).
\end{alignat*}

\subsection{Two Examples}
To conclude this section, let us apply \autoref{thm:bounds} to two examples.  We will show that even though the first example does not satisfy welfare-monotonicity, we can think of it as an instance where information is essentially bad because the potential benefit is much smaller than the potential harm.  To identify best directions in an example where the possible gains are potentially large, we study a second example.

Recall from \autoref{example:CES} that information is monotonically bad for total surplus for the binary family of constant-elasticity demands $\{(1+p)^{-\theta_1},(1+p)^{-\theta_2}\}$ when $1 < \theta_2 < \theta_1 \leq \theta_2 + \frac{1}{2}$, for example when $\theta_1 = 2$ and $\theta_2 = \frac{3}{2}$.  However, information is \emph{not} monotonically bad for total surplus when we add a third constant-elasticity demand curve, such as when the family is $\left\{ \left(1+p\right)^{-3/2},\left(1+p\right)^{-\theta},\left(1+p\right)^{-2}\right\}$,  $\frac{3}{2} < \theta < 2$, because the spanning condition of \autoref{thm:NSC} is violated for this family.  Let us now apply \autoref{thm:bounds} to this family to obtain tight welfare bounds.

\autoref{tab:Highest-and-lowest} contains
the results of the numerical calculations of the lower and the upper bounds of \autoref{thm:bounds} associated with different values of $\theta$, for the case of total surplus. 
As can be seen in the table, the upper bound on the marginal value of information is at most in the order of magnitude
of $10^{-4}$, whereas the lower bound is much larger (in absolute value), between $-0.28$ and $-0.46$.  Recall also that these bounds are tight in the sense formalized in \autoref{thm:bounds}. We interpret these observations as saying that while information is not monotonically bad for this family, the possible benefit from segmentation is much smaller compared the to potential harm, and information is essentially monotonically bad for total surplus (and therefore also for consumer surplus).
\begin{table}[!h]
	\def\arraystretch{1.5}
\begin{centering}
\begin{tabular}{|c|c|c|}
\hline 
$\theta$ & Lower bound &  Upper bound \tabularnewline
\hline 
\hline 
1.9 & $-0.460$ & $4.6\times10^{-5}$\tabularnewline
\hline 
1.8 & $-0.395$ & $1.47\times10^{-4}$\tabularnewline
\hline 
1.7 & $-0.332$ & $2.19\times10^{-4}$ \tabularnewline
\hline 
1.6 & $-0.282$ & $1.48\times10^{-4}$\tabularnewline
\hline 
\end{tabular}
\par\end{centering}
\caption{The lower and upper bounds of \autoref{thm:bounds} for a parametric family of three constant-elasticity demand curves.}
\label{tab:Highest-and-lowest}
\end{table}

In order to illustrate the second part of the theorem that identifies best directions, let us move on to another example in which the possible benefits of information are significant relative to its potential harms.  \autoref{fig: 1-p best direction} shows the results of numerical calculations for the family of demand curves $\{1-p^\theta\}_{\theta= 0.01,0.3,0.9}$, for the case of consumer surplus.  
\begin{figure}[tbph]
	\centering
	\setlength{\tabcolsep}{2pt}
	\begin{tabular}{cc}
		
		\begin{tikzpicture}[ultra thick, scale=5]
			
				\node {\includegraphics[width=8cm]{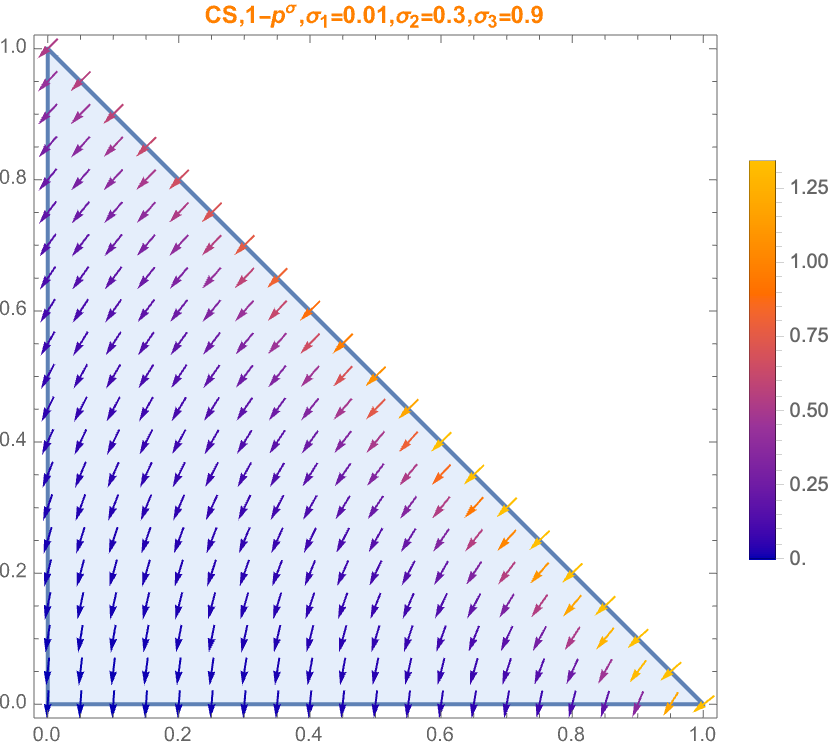}};

			\draw (0,-.75) node[below]{(a)};

		\end{tikzpicture}&

		\begin{tikzpicture}[ultra thick, scale=5]
			
				\node {\includegraphics[width=8cm]{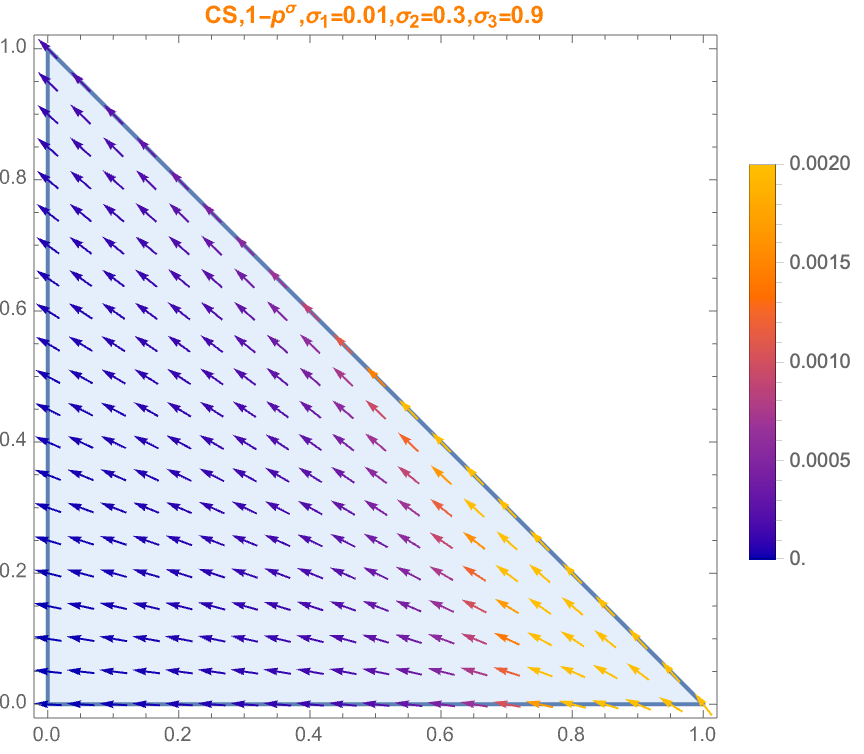}};

			\draw (0,-.75) node[below]{(b)};

		\end{tikzpicture}
	\end{tabular}
	\caption{The (a) best and (b) worst directions for information.}
	\label{fig: 1-p best direction}
\end{figure}

Each triangle in the figure shows the set of all markets, with the market containing only the ``low'' type (the one with $\theta = 0.01$) as the right corner, the one with only the ``middle'' type ($\theta = 0.3$) as the top corner, and the one with only the ``high'' type ($\theta = 0.9$) as the bottom left corner.  For each market in Panel (a), the arrow shows the direction of best information, that is, how the market should be split to maximize the gains from segmentation, and the color shows the magnitude of improvement (from blue for lower to yellow for higher). Similarly, the arrows in Panel (b) represent the worst direction and its magnitude.   Comparing the scales to the right of the two panels shows that the potential gains are significant relative to potential harms for consumer surplus (1.25 compared to 0.002), and therefore also for total surplus.  So let us focus on Panel (a) and discuss the best directions.

Roughly speaking, for markets towards the northeast, that is, those in which the probability of the high type is small, information has a large benefit and the best direction is diagonal.  Such a segmentation keeps the difference between the probabilities of the low and the middle types constant across segments. 
That is, it splits the prior market to two identical segments, and then takes an equal measure of the low and middle type consumers from one segment and moves them to the other one.  
For markets towards the southwest, that is, those in which the probability of the high type is large, information has a small benefit and the best direction is vertical.  Such a segmentation keeps the probability of the low type fixed, providing no information about the low type, but separates the middle type from the high type.

\section{Conclusions} \label{sec:conc}
We develop a model of price discrimination based on information structures that connects the classical literature to the modern one.  Like the classical literature, the primitive of our model is a set of demand curves and a distribution over them.  Like the modern literature, we study the class of all information structures.

Our first main result gives conditions under which welfare increases monotonically, decreases monotonically, or changes non-monotonically when segmentations are refined.  A main insight of the characterization is that information affects welfare in three ways: the within-group price change effect, the cross-groups price change effect, and the price curvature effect.  While two of these effects are related in spirit to the output and misallocation effects identified in the classical literature (but with a main difference that they are concerned with price, not quantity), the third one, the within-group price change effect, has no counter-part and arises because even consumers of a given type are treated non-uniformly by partial information.  Another insight of the first result is that information generally has the potential to increase welfare even without opening new markets.

Our second main result studies the non-monotone case further. It provides bounds on the effects of information and identifies the best direction to provide additional information.  The result is based on a simple yet novel insight that eigenvalues of the Hessian matrix of the value function measure how good and bad information can be in the directions specified by their corresponding eigenvectors.  Notably, in our setting these eigenvalues (and their eigenvectors) can be identified in closed form, reflecting and generalizing the three effects of information from our two-type analysis.

\bibliographystyle{aer}
\bibliography{FHSbib}

\begin{appendix}
\begin{center}
{\Huge \bf Appendix}
\end{center}

\section{Proofs}

\subsection{Proof of \autoref{prop: duality converted to our setting}}\label{app: proof of strong duality starting point}
First notice that the optimal price $p(\mu)$ for any market $\mu$ must be at least $\min_{\theta}{p}(\theta)$ and at most $\max_{\theta}{p}(\theta)$.  Because of partial inclusion, all revenue curves are strictly concave in this range.  
	As a result, the optimal price $p(\mu) \in [\min_{\theta}{p}(\theta), \max_{\theta}{p}(\theta)]$ is unique and is characterized by
	\begin{align*}
		\int_{\Theta} R_p(p(x),\theta) \diff \mu (\theta) = 0.
	\end{align*}

 It is convenient to reformulate the problem as one of choosing a joint distribution $G$ over types and prices subject to the constraint that the marginal of $G$ over the types agrees with $\mu$ and the obedience constraint that conditioned on a given price recommendation, the expected marginal revenue of the seller must be zero.
\begin{align*}
	\max_{G\in\Delta(I\times\Theta)} &\int_{I \times \Theta} U(p,\theta)\diff G(p,\theta)\tag{P}\label{eq:P} \\
	\text{ s.t. } & G(I\times \Theta')  =\mu(\Theta'),\forall \Theta' \subset\Theta\\
	& \int_{I'\times\Theta}R_p(p,\theta) \diff G(p,\theta) =0,\forall I'\subset I
\end{align*}

We use the following complementary slackness conditions that follow from the strong duality results of \citet{Kol18} and \citet{KCW24}, which relates problem~\eqref{eq:P} to the following dual problem.
\begin{align}
	\min_{{\lambda}, {\zeta}} &\int_{\Theta} {\lambda}(\theta) \diff \mu(\theta) \tag{D}\label{eq:D} \\
	\text{ s.t. } & 		{\lambda}(\theta)+{\zeta}(p)R_{p}(p,\theta) \geq U(p,\theta),\forall(p,\theta) \in I \times \Theta.\label{eq: strong duality condition 1}
\end{align}

\begin{lem}[\citealp{KCW24}]\label{lem: KCW24}
	Optimal solutions ${G}$ and ${\lambda},{\zeta}$ to the primal \eqref{eq:P} and the dual \eqref{eq:D} exist and their values are equal,
	\begin{align}
		\int_{I \times \Theta} U(p,\theta) \diff {G}(p,\theta) = \int_{\Theta} {\lambda}(\theta) \diff \mu(\theta).\label{eq: strong duality equivalence}
	\end{align}
\end{lem}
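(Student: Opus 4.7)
The plan is to reduce the lemma to the strong duality theorem of \citet{KCW24}, applied off-the-shelf as the paper itself suggests in \autoref{sec: applying duality}. Our primal \eqref{eq:P} is a Bayesian persuasion problem in which the designer chooses a joint distribution $G$ of (price recommendation, type), the Bayes-plausibility condition is the marginal constraint on $\mu$, and the obedience condition is that the recommended price $p$ be optimal for the seller given the induced posterior over $\theta$. Partial inclusion plus \autoref{ass1} guarantees that every optimal price lies in the compact interval $[\min_\theta p(\theta),\max_\theta p(\theta)]$ and that $R(\cdot,\theta)$ is strictly concave there for each $\theta$, so optimality of $p$ is equivalent to the first-order condition $\int R_p(p,\theta)\,d\mu(\theta)=0$. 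This is precisely the first-order characterization assumed in \citet{KCW24}.

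The first step is weak duality, a one-line verification. For any primal-feasible $G$ and any dual-feasible pair $(\lambda,\zeta)$, integrating the dual inequality $\lambda(\theta)+\zeta(p)R_p(p,\theta)\geq U(p,\theta)$ against $G$ and invoking the two primal constraints gives
\begin{align*}
\int_{I\times\Theta} U(p,\theta)\,dG(p,\theta)\;\leq\;\int_\Theta \lambda(\theta)\,d\mu(\theta),
\end{align*}
which establishes primal $\leq$ dual. Existence of a primal maximizer follows from weak-$*$ compactness of $\Delta(I\times\Theta)$ on the restricted (compact) price interval, weak-$*$ continuity of the objective induced by continuity of $U$, and weak-$*$ closedness of the feasible set (which uses continuity of $R_p$ in $p$).

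The substantive content of the lemma is the absence of a duality gap and the existence of a dual optimizer, which is exactly what \citet{KCW24}'s theorem delivers once its hypotheses are verified: compactness of the action set (granted by partial inclusion), continuity and boundedness of $U$ and $R$ on the relevant product set (from \autoref{ass1} and compactness), and the strict concavity of $R(\cdot,\theta)$ that legitimates the first-order characterization of obedience. The main obstacle is not conceptual but bookkeeping, namely showing that the ``for all $I'\subset I$'' form of our obedience constraint agrees with the pointwise form used in \citet{KCW24}. This is handled by a standard disintegration argument: writing $G$ via its conditional $G_p$ over $\Theta$ given the recommended price $p$, the constraint $\int_{I'\times\Theta}R_p(p,\theta)\,dG(p,\theta)=0$ for every measurable $I'\subset I$ is equivalent to $\int_\Theta R_p(p,\theta)\,dG_p(\theta)=0$ for $G$-almost every $p$, which is the form their theorem treats. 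Once this identification is in place, invoking their theorem simultaneously produces the dual minimizer $(\lambda,\zeta)$ and the equality \eqref{eq: strong duality equivalence}.
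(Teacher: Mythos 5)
Your proposal is correct and takes essentially the same route as the paper: the paper states this lemma as an off-the-shelf import from \citet{Kol18} and \citet{KCW24} and provides no proof of its own, relying on exactly the reduction you describe. The additional detail you supply --- weak duality, existence of a primal maximizer via weak-$*$ compactness, and the disintegration argument reconciling the ``for all $I'\subset I$'' form of the obedience constraint with the pointwise form in \citet{KCW24} --- is hypothesis-checking that the paper leaves implicit rather than a different argument.
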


We now use this strong duality result to establish \autoref{prop: duality converted to our setting}.  By \autoref{lem: KCW24}, a feasible ${G}$ is an optimal solution to the primal problem \eqref{eq:P} if and only if there exists a feasible solution ${\lambda},{\zeta}$ to the dual problem \eqref{eq:D} such that \cref{eq: strong duality equivalence} holds.  For any feasible ${G}$, we have
\begin{align*}
	\int_{I \times \Theta} {\zeta}(p) R_p(p,\theta)\diff {G}(p,\theta) = 0,
\end{align*}
so
\begin{align*}
	\int_{I \times \Theta} U(p,\theta)\diff {G}(p,\theta) = 	\int_{I \times \Theta} [U(p,\theta) - {\zeta}(p) R_p(p,\theta)]\diff {G}(p,\theta),
\end{align*}
and
\begin{align*}
	 \int_{\Theta} {\lambda}(\theta) \diff \mu(\theta) = \int_{I \times \Theta} {\lambda}(\theta) \diff {G}(p,\theta).
\end{align*}
So \cref{eq: strong duality equivalence} is equivalent to
\begin{align*}
	\int_{I \times \Theta} [U(p,\theta) - {\zeta}(p) R_p(p,\theta)]\diff {G}(p,\theta) = \int_{I \times \Theta} {\lambda}(\theta) \diff {G}(p,\theta),
\end{align*}
which can be written as
	\begin{align}
{\lambda}(\theta)+{\zeta}(p)R_{p}(p,\theta) & =U(p,\theta),{G}\text{-almost surely} \label{eq: strong duality condition 2}	\end{align}

To summarize, we have shown that a probability measure ${G}$ is an optimal solution to the problem \eqref{eq:P}
	if and only if there exists functions ${\lambda}(\theta),{\zeta}(p)$
	such that \cref{eq: strong duality condition 1} and \cref{eq: strong duality condition 2} hold.  Recall that welfare-monotonicity is equivalent to the optimality of the no-information segmentation for \emph{all} prior distributions.  The no-information segmentation corresponds to a measure $G$ that assigns probability 1 to the prior market and its optimal price.  The no-information segmentation corresponds to a distribution $G$ that assigns probability 1 to price ${p}$ that is optimal for $\mu$ (and induces marginal $\mu$ over types).  So suppose that such a distribution $G$ is optimal for problem \eqref{eq:P}.  This observation pins down the function ${\lambda}$, because given \cref{eq: strong duality condition 2} for every $\theta$ we must have
\begin{align*}
	{\lambda}(\theta) =U({p},\theta) - {\zeta}({p})R_{p}({p},\theta).
\end{align*}
Substituting the definition of ${\lambda}$ into \cref{eq: strong duality condition 1}, we conclude that the no-information segmentation is optimal for a prior $\mu$ with optimal price ${p}$ if and only if there exists a function ${\zeta}$ such that for every type $\theta$,
	\begin{align*}
U({p},\theta) - {\zeta}({p})R_{p}({p},\theta) \geq U(p',\theta) - {\zeta}(p')R_{p}(p',\theta), \forall p',
\end{align*}
which we can write as
\begin{align}
	{p} \in \arg\max_{p'} U(p',\theta) - {\zeta}(p')R_{p}(p',\theta).\label{eq: rewriting IMB IMG using duality 1}
\end{align}

Recall that welfare-monotonicity holds if the no-information segmentation is optimal for \emph{every} prior $\mu$.  Notice that any price in $I$ is optimal for \emph{some} prior $\mu$.  Welfare-monotonicity therefore holds if and only if for every ${p}$ in $I$, there exists ${\zeta}$ such that \cref{eq: rewriting IMB IMG using duality 1} holds.  This function ${\zeta}$ may depend on ${p}$.  To make this dependence explicit, let us use $\zeta({p},\cdot)$ as a function that is indexed by ${p}$.   Using this notation, welfare-monotonicity holds if and only if there exists a single function $\zeta$ such that for every ${p} \in I$ and every type $\theta$,
\begin{align*}
	{p} \in \arg\max_{p'} U(p',\theta) - \zeta({p},p')R_{p}(p',\theta).
\end{align*}

\subsection{Proof of \autoref{thm:NSC}}\label{app: main result proof}
\subsubsection{Necessity of Partial Inclusion}\label{subsec:partial inclusion necessary}
	A simplifying step is that we can focus on two demands.  In particular, for a family of demands $\mathcal{D}$, and any two types $\theta_L,\theta_H$ in it, if $\alpha$-IMB ($\alpha$-IMG) does not hold for the binary family $\mathcal{D}' = \{D(p,\theta_L),D(p,\theta_H)\}$, then it does not hold for $\mathcal{D}$ either.
	
	Suppose first that there is full exclusion, that is, $\overline{p}(\theta_L) < p(\theta_H)$.  We want to show that both $\alpha$-IMB and $\alpha$-IMG must be violated.  It is sufficient to prove this claim for a certain (not every) prior market $\mu$ on $\theta_L,\theta_H$.  This is because for any prior market $\mu'$ we can first construct a segmentation that contains market $\mu$ in its support, and then show that segmenting $\mu'$ further does not increase or decrease weighted surplus (depending on which one of IMG or IMB we are proving).
	
	Consider the set $p(\mu)$ of optimal prices in a market in which $\theta$ has probability $1-\mu$ and $\theta_H$ has probability $\mu$,
	\begin{align*}
		p(\mu) = \arg \max_p (1-\mu)R(p,\theta_L) + \mu R(p,\theta_H).
	\end{align*}
	Notice that $p(\mu)$ never includes a price in $(\overline{p}(\theta_L),p(\theta_H))$ because any such price, which excludes $\theta_L$ entirely, leads to a lower revenue than $p(\theta_H)$.  Moreover, $p(\mu)$ never includes a price less than $p(\theta_L)$ because any such price is also lower than $p(\theta_H)$ and increasing it increases revenue in both types.  Therefore, $p(\mu)$ may only contain prices in $[p(\theta_L),\overline{p}(\theta_L)]\cup \{p(\theta_H)\}$.  At $\mu=0$, the only optimal price is $p(\theta_L)$, and at $\mu =1$, the only optimal price is $p(\theta_H)$.  Because for each $p$, revenue changes linearly in $\mu$, there exists a threshold $\hat{\mu} \in (0,1)$ such that $p(\theta_H) \in p(\mu)$ if and only if $\mu \geq \hat{\mu}$, and $p(\theta_H)$ is the unique optimal price for all $\mu > \hat{\mu}$.  
	
	To see that $\alpha$-IMB does not hold, consider any segmentation of the prior market that contains a segment in which the probability of $\theta_H$ in the prior market is $\mu' > \hat{\mu}$ (such a segmentation exists because the prior market has full support over $\theta_L,\theta_H$).  The unique optimal price for $\mu'$ is $p(\theta_H)$.  Consider segmenting $\mu'$ further into two segments $\mu_1 < \hat{\mu} < \mu'$ and $\mu_2 > \mu'$.  Because any optimal price for market $\mu_1$ is strictly less than $p(\theta_H)$ and the unique optimal price for market $\mu_2$ is $p(\theta_H)$, this segmentation increases consumer surplus.  Because any segmentation weakly increases producer surplus, it also strictly increases weighted surplus.
	
	To see that $\alpha$-IMG does not hold, consider any segmentation of the prior market that contains a segment in which the probability of $\theta_H$ in the prior market is $\mu' = \hat{\mu} - \epsilon$ for some small $\epsilon$.  Consider segmenting $\mu'$ into two segments $\mu_1 = \mu' - 2\epsilon$ and $\mu_2 = \mu' + 2\epsilon = \hat{\mu} + \epsilon$, with probability $\frac{1}{2}$ each.  As $\epsilon$ goes to zero, the seller's revenue approaches the optimal revenue in market $\mu'$ continuously by the Maximum theorem.  But we argue that the consumer surplus decreases discontinuously, that it, it decreases by at least some $\delta > 0$, and therefore weighted surplus decreases for small enough $\epsilon$.
	
	To see the discontinuity in consumer surplus, consider optimal prices for markets $\mu_1$ and $\mu_2$.  Because $\mu_1 < \hat{\mu}$, optimal prices for $\mu_1$ are those that maximize revenue over $[p(\theta_L),\overline{p}(\theta_L)]$.  Because the revenue curve is strictly concave over this range, the optimal price and therefore consumer surplus changes continuously.  But for market $\mu_2$ the only optimal price is $p(\theta_H)$ and the consumer surplus of this market is bounded away from that of $\mu'$, so the consumer surplus and total surplus decrease for small enough $\epsilon$.
	
	Now suppose there is full inclusion, $p(\theta_L) < \underline{p}(\theta_H)$.  We argue that there exist $\mu_1,\mu_2$ such that $\mu_1< \mu_2$ and price $\underline{p}(\theta_H)$ is the unique optimal price for any market in $(\mu_1,\mu_2)$.  For this, we first show that note that the revenue curve associated with $\theta_H$ has a kink at price $\underline{p}(\theta_H)$.  The revenue of type $\theta_H$ at any price $p < \underline{p}(\theta_H)$ is 
	\begin{align*}
		R(p,\theta_H) = p D(\underline{p}(\theta_H),\theta_H)
	\end{align*}
	so the left derivative of $R(p,\theta_H)$ at $p = \underline{p}(\theta)$ is
	\begin{align*}
		R_p(p,\theta_H) = D(\underline{p}(\theta),\theta_H).
	\end{align*}
	The revenue of type $\theta_H$ at any price $p > \underline{p}(\theta_H)$ is 
	\begin{align*}
			R(p,\theta_H) = p D(p,\theta_H)
	\end{align*}
	so marginal revenue is
	\begin{align*}
		R_p(p,\theta_H) = D(p,\theta_H) + pD_p(p,\theta_H).
	\end{align*}	
	As $p$ converges to $\underline{p}(\theta_H)$ from above, marginal revenue converges to
	\begin{align*}
		\lim_{p \rightarrow^+ \underline{p}(\theta_H)} R_p(p,\theta_H) = D(\underline{p}(\theta_H),\theta_H) + pD_p(\underline{p}(\theta_H),\theta_H) < D(\underline{p}(\theta_H),\theta_H).
	\end{align*}	
	So the right derivative of the revenue curve at $\underline{p}(\theta_H)$ is strictly less than its left derivative.  Let $\delta^- > \delta^+$ be the left and the right derivatives of the revenue curve of type $\theta_H$ at $\underline{p}(\theta_H)$.  Price $\underline{p}(\theta_H)$ is optimal in any market $\mu$ such that
	\begin{align*}
		(1-\mu)R_p(\underline{p}(\theta_H),\theta_L) + \mu \delta^- > 0, \text{ and } (1-\mu)R_p(\underline{p}(\theta_H),\theta_L) + \mu \delta^+ < 0,
	\end{align*}
	that is
	\begin{align*}
		\mu \in (\mu_1,\mu_2) := (\frac{-R_p(\underline{p}(\theta_H),\theta_L)}{\delta^+ - R_p(\underline{p}(\theta_H),\theta_L)},\frac{-R_p(\underline{p}(\theta_H),\theta_L)}{\delta^- -  R_p(\underline{p}(\theta_H),\theta_L)}).
	\end{align*}
	
	Now consider IMB.  consider any segmentation of the prior market that contains a segment in which the probability of $\theta_H$ in the prior market is $\mu' \in (\mu_1,\mu_2)$.  Consider segmenting $\mu'$ further into two segments $\mu'_1$ and $\mu'_2$ such that $\mu'_1 < \mu_1 < \mu'$ and $\mu'< \mu'_2 < \mu_2$.  Any optimal price for $\mu'_1$ is less than $\underline{p}(\theta_H)$ and the optimal price for $\mu'_2$ is $\underline{p}(\theta_H)$.  This segmentation therefore increases consumer surplus.  Because any segmentation weakly increases producer surplus, it also increases weighted surplus.
	
	Finally, consider IMG.  For this, we examine the value function $W^{\alpha}(\mu)$ around $\mu_2$ and show that the value function is locally concave, which means that providing a small amount of information reduces weighted surplus.  For this, consider the left derivative of $W^{\alpha}(\mu)$ at $\mu = \mu_2$.  Recall that price $\underline{p}(\theta_H)$ is optimal for all markets in $(\mu_1,\mu_2)$, which means that the left derivative of $W^{\alpha}(\mu)$ at $\mu < \mu_2$ is
	\begin{align*}
		W^{\alpha}_\mu(\mu) &=  V^\alpha(p(\mu),\theta_H) - V^\alpha(p(\mu),\theta_L) + \bigg((1-\mu) V^\alpha_p(p(\mu),\theta_L) + \mu V^\alpha_p(p(\mu),\theta_H)\bigg)p_{\mu}(\mu), \\
		&= V^\alpha(p(\mu),\theta_H) - V^\alpha(p(\mu),\theta_L),
	\end{align*}
	where the equality follows because $p_{\mu}(\mu) = 0$ for $\mu$ in $(\mu_1,\mu_2)$ as $p(\mu) = \underline{p}(\theta_H)$.  As $\mu$ converges to $\mu_2$ from below, this derivative converges to
	\begin{align*}
		V^\alpha(\underline{p}(\theta_H)),\theta_H) - V^\alpha(\underline{p}(\theta_H),\theta_L).
	\end{align*}
	Now consider what happens as $\mu$ converges to $\mu_2$ from above. The derivative of $W^{\alpha}$ is
	\begin{align*}
		V^\alpha(p(\mu),\theta_H) - V^\alpha(p(\mu),\theta_L)  + \bigg((1-\mu) V^\alpha_p(p(\mu),\theta_L) + \mu V^\alpha_p(p(\mu),\theta_H)\bigg)p_{\mu}(\mu).
	\end{align*}
	Given the first-order condition of the seller's problem, we can write this derivative as
	\begin{align*}
	V^\alpha(p(\mu),\theta_H) - V^\alpha(p(\mu),\theta_L)  + \alpha \bigg((1-\mu) CS_p(p(\mu),\theta_L) + \mu CS_p(p(\mu),\theta_H)\bigg)p_{\mu}(\mu).
\end{align*}	
	which, as $\mu$ goes to $\mu_2$ from above, converges to 
		\begin{align*}
		&V^\alpha(\underline{p}(\theta_H),\theta_H) - V^\alpha(\underline{p}(\theta_H),\theta_L)  + \alpha \bigg((1-\mu) CS_p(\underline{p}(\theta_H),\theta_L) + \mu CS_p(\underline{p}(\theta_H),\theta_H)\bigg)p_{\mu}(\mu_2) \\
		< &\ V^\alpha(\underline{p}(\theta_H),\theta_H) - V^\alpha(\underline{p}(\theta_H),\theta_L),
	\end{align*}
	where the inequality follows because $p(\mu)$ is strictly increasing at $\mu \geq \mu_2$, and $CS_p < 0$.   We conclude that $W^\alpha$ is not concave, which means that IMG does not hold.

\subsubsection{Statement (i)}


Let us start by an implication of the spanning property that will be later used in the proof.

\begin{lem}\label{lem: spanning consequence}
    If the spanning condition of \autoref{thm:NSC} is violated, then there exists some $\theta$, $p\in I$, and $a,b$ such that $R_p(p,\theta) \neq 0$ and
\begin{align*}
R_p(p,\theta) & =a R_p(p,\theta_L)+b R_p(p,\theta_H)\\
R_{pp}(p,\theta) & =a R_{pp}(p,\theta_L)+b R_{pp}(p,\theta_H)\\
U_p(p,\theta) & \neq a U_p(p,\theta_L)+b U_p(p,\theta_H).
\end{align*}
\end{lem}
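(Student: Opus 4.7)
The plan begins with an algebraic reduction. Since $U = \pm V^{\alpha}$ and $V^{\alpha}_{p}(p,\theta) = -\alpha D(p,\theta) + (1-\alpha) R_{p}(p,\theta)$, whenever the first identity $R_{p}(p,\theta) = a R_{p}(p,\theta_L) + b R_{p}(p,\theta_H)$ holds, a direct calculation gives $U_{p}(p,\theta) - a U_{p}(p,\theta_L) - b U_{p}(p,\theta_H) = \mp\alpha\bigl(D(p,\theta) - a D(p,\theta_L) - b D(p,\theta_H)\bigr)$. Because $\alpha>0$, the strict inequality on $U_{p}$ is therefore equivalent to $D(p,\theta) \neq a D(p,\theta_L) + b D(p,\theta_H)$. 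The task thus reduces to finding $\theta$, $p\in I$ with $R_{p}(p,\theta)\neq 0$, and $(a,b)$ matching $R_{p}$ and $R_{pp}$ at $p$ via $\theta_{L},\theta_{H}$ but not matching $D$ at $p$.

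\textbf{Pinning down $(a,b)$ uniquely near $p(\theta_L)$.} Pick a type $\theta$ witnessing the failure of spanning, so $D(\cdot,\theta)$ is not a constant linear combination of $D(\cdot,\theta_{L})$ and $D(\cdot,\theta_{H})$ on $[p(\theta_L),p(\theta_H)]$. Consider $M(p) = \begin{pmatrix} R_{p}(p,\theta_{L}) & R_{p}(p,\theta_{H}) \\ R_{pp}(p,\theta_{L}) & R_{pp}(p,\theta_{H}) \end{pmatrix}$. At $p=p(\theta_{L})$ the first-order condition gives $R_{p}(p(\theta_L),\theta_{L})=0$, strict concavity gives $R_{pp}(p(\theta_L),\theta_{L})<0$, and since $p(\theta_L)<p(\theta_H)$ lies strictly below the monopoly price of $\theta_{H}$ within $I(\theta_H)$ (by partial inclusion), $R_{p}(p(\theta_L),\theta_{H})>0$; hence $\det M(p(\theta_L))>0$. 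By continuity, $M$ is invertible on an open neighborhood $\mathcal{N}$ of $p(\theta_{L})$, and $(a(p),b(p)) := M(p)^{-1}(R_{p}(p,\theta), R_{pp}(p,\theta))^{T}$ is well-defined and smooth there.

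\textbf{Case analysis on $\phi(p) := D(p,\theta) - a(p) D(p,\theta_L) - b(p) D(p,\theta_H)$.} If $\phi(p_0)\neq 0$ for some $p_0 \in \mathcal{N}$ with $p_0 \neq p(\theta)$ (an exclusion of a single point at most), then $(\theta,p_0,a(p_0),b(p_0))$ is the desired counterexample. Suppose instead $\phi \equiv 0$ on $\mathcal{N}$. Combining this with $R_{p}$-matching and $R_{p} = D + pD_{p}$ yields, for $p > 0$, that $D_{p}(p,\theta) = a(p) D_{p}(p,\theta_L) + b(p) D_{p}(p,\theta_H)$ on $\mathcal{N}$; the analogous step using $R_{pp}$-matching then gives $D_{pp}(p,\theta) = a(p) D_{pp}(p,\theta_L) + b(p) D_{pp}(p,\theta_H)$. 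Differentiating $\phi \equiv 0$ and subtracting the $D_{p}$-matching identity yields $a'(p) D(p,\theta_L) + b'(p) D(p,\theta_H) = 0$; differentiating the $D_{p}$-matching and subtracting the $D_{pp}$-matching yields $a'(p) D_{p}(p,\theta_L) + b'(p) D_{p}(p,\theta_H) = 0$. Evaluating the Wronskian $W(p) := D(p,\theta_L) D_{p}(p,\theta_H) - D(p,\theta_H) D_{p}(p,\theta_L)$ at $p(\theta_{L})$ and substituting $D_{p}(p(\theta_L),\theta_L) = -D(p(\theta_L),\theta_L)/p(\theta_L)$ (from the first-order condition) simplifies it to $D(p(\theta_L),\theta_L)\, R_{p}(p(\theta_L),\theta_H)/p(\theta_L) > 0$. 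On a possibly smaller neighborhood, $W \neq 0$ everywhere, forcing $a' \equiv b' \equiv 0$, so $(a,b)$ are constants $(\bar a, \bar b)$ there.

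\textbf{Concluding and the main obstacle.} Therefore $D(p,\theta) = \bar a D(p,\theta_L) + \bar b D(p,\theta_H)$ on a neighborhood of $p(\theta_{L})$. Under analyticity of the demand curves (or the identity principle implicit in the class of demand specifications the paper treats), this local identity extends to all of $[p(\theta_L),p(\theta_H)]$, contradicting the choice of $\theta$ as a witness of the failure of spanning. Hence $\phi \not\equiv 0$ on $\mathcal{N}$ and the desired triple exists. The main technical obstacle is precisely this global extension step: without analyticity one would need to propagate the local identity using that $D$, $D_{p}$, and $D_{pp}$ for $\theta$ are all matched by the same constants $(\bar a, \bar b)$ on $\mathcal{N}$, together with a uniqueness argument for the implicit linear second-order system that these three demand curves jointly satisfy; the clean version of the argument will therefore be stated under the regularity already required by \autoref{ass1} and the strict-concavity structure used to apply the first-order approach.
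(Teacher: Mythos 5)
Your reduction of the $U_p$ mismatch to a mismatch in $D$ (equivalently in $R$, since $R=pD$) is exactly the paper's first move, and your construction of the pointwise coefficients $(a(p),b(p))$ from the invertible matrix of $(R_p,R_{pp})$ values matches the paper's as well. The problem is the final step, which you correctly flag as the ``main obstacle'' but do not resolve: having shown that $\phi\equiv 0$ on a neighborhood $\mathcal{N}$ of $p(\theta_L)$ forces $(a,b)$ to be constants $(\bar a,\bar b)$ there, you need the identity $D(p,\theta)=\bar a D(p,\theta_L)+\bar b D(p,\theta_H)$ to propagate from $\mathcal{N}$ to all of $[p(\theta_L),p(\theta_H)]$ in order to contradict the failure of spanning. \autoref{ass1} only imposes differentiability and strict concavity of revenue, not analyticity, so a demand curve can agree with a fixed linear combination of the two bases near $p(\theta_L)$ and deviate elsewhere; in that case your case analysis lands in the ``$\phi\equiv 0$ on $\mathcal{N}$'' branch with no contradiction available, and the argument stalls. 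This is a genuine gap, not a technicality: the whole content of the lemma is that a \emph{global} failure of spanning produces a \emph{pointwise} witness, and anchoring the search at the single point $p(\theta_L)$ throws away the information about where the failure actually occurs.

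The paper sidesteps this by localizing at the failure rather than at $p(\theta_L)$. It introduces a second pair of coefficient functions $c(p),d(p)$ defined by matching the \emph{levels} $R$ and the first derivatives $R_p$ (rather than $R_p$ and $R_{pp}$); violation of spanning forces $c,d$ to be non-constant, hence $c'(p)\neq 0$ on some open subinterval $I'$. Differentiating the defining relations of $(c,d)$ and comparing with the defining relations of $(a,b)$ yields the explicit discrepancy
\begin{align*}
R(p,\theta)-a(p)R(p,\theta_L)-b(p)R(p,\theta_H) \;=\; -\Delta(p)\,\frac{R_{p}(p,\theta_H)R(p,\theta_L)-R_{p}(p,\theta_L)R(p,\theta_H)}{R_{p}(p,\theta_H)R_{pp}(p,\theta_L)-R_{p}(p,\theta_L)R_{pp}(p,\theta_H)},
\end{align*}
where $\Delta(p)$ is a non-zero multiple of $c'(p)$ on $I'$. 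Every $p\in I'$ (with at most one exception where $R_p(p,\theta)=0$) is then a witness, with no extension argument and no analyticity needed. If you want to salvage your version, you would have to replace the neighborhood of $p(\theta_L)$ by an interval on which your own case analysis cannot collapse into the degenerate branch, which in effect reproduces the paper's $c'(p)\neq 0$ device.
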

\begin{proof}
    We will show that there is an open interval of prices $I' \subset (p(\theta_L),p(\theta_H))$ such that the above three equations hold for all $p \in I'$.  Because marginal revenue is decreasing in $p$, there exists such a price that furthermore satisfies $R_p(p,\theta) \neq 0$.

    For each $p$ there exists a pair $a,b$ of constants for which the first two equalities hold, that is, writing $a,b$ as functions of $p$ to make this dependence explicit,
 \begin{align}
R_p(p,\theta) & =a(p) R_p(p,\theta_L)+b(p) R_p(p,\theta_H)\label{eq: spanning lemma WTS neq 0 1}\\
R_{pp}(p,\theta) & =a(p) R_{pp}(p,\theta_L)+b(p) R_{pp}(p,\theta_H),\label{eq: spanning lemma WTS neq 0 2}
\end{align}   
which is obtained by solving the system of two equations and two variables,
\begin{align*}
a(p) & =\frac{R_{pp}(p,\theta)R_{p}(p,\theta_H)-R_{p}(p,\theta)R_{pp}(p,\theta_H)}{R_{p}(p,\theta_H)R_{pp}(p,\theta_H)-R_{p}(p,\theta_H)R_{pp}(p,\theta_H)}\\
b(p) & =\frac{-R_{pp}(p,\theta)R_{p}(p,\theta_H)+R_{p}(p,\theta)R_{pp}(p,\theta_H)}{R_{p}(p,\theta_H)R_{pp}(p,\theta_H)-R_{p}(p,\theta_H)R_{pp}(p,\theta_H)}.
\end{align*}
This solution is well-defined because $R_{pp}<0$ and $R_{p}(p,\theta_H)<0<R_{p}(p,\theta_H)$.
    
    Similarly, let us define functions $c,d$ as follows
\begin{align*}
R(p,\theta) & =c(p) R(p,\theta_H)+d(p) R(p,\theta_H)\\
R_{p}(p,\theta) & =c(p) R_{p}(p,\theta_H)+d(p) R_{p}(p,\theta_H)
\end{align*}
Using a similar argument to above, the pair $c(p),d(p)$ is also unique for each $p$.
Violation of the spanning condition implies that $c,d$
are not constant functions.

Taking a derivative of the above, we have
\begin{align}
R_{p}(p,\theta)= & c'(p)R(p,\theta_H)+d'(p)R_{}(p,\theta_H)+c(p) R_{p}(p,\theta_H)+d(p) R_{p}(p,\theta_H) \nonumber \\
= & c(p) R_{p}(p,\theta_H)+d(p) R_{p}(p,\theta_H),\label{eq: spanning lemma WTS neq 0 3}
\end{align}
which means
\begin{align*}
c'(p)R(p,\theta_H)=-d'(p)R(p,\theta_H),
\end{align*}
and
\begin{align}
R_{pp}(p,\theta)= & c'(p)R_{p}(p,\theta_H)+d'(p)R_{p}(p,\theta_H)+c(p) R_{pp}(p,\theta_H)+d(p) R_{pp}(p,\theta_H)\nonumber \\
= & (\frac{R_{p}(p,\theta_H)}{R(p,\theta_H)}-\frac{R_{p}(p,\theta_H)}{R(p,\theta_H)})c'(p)R(p,\theta_H)+c(p) R_{pp}(p,\theta_H)+d(p) R_{pp}(p,\theta_H) \nonumber \\
= & \Delta(p)+c(p) R_{pp}(p,\theta_H)+d(p) R_{pp}(p,\theta_H),\label{eq: spanning lemma WTS neq 0 4}
\end{align}
where $\Delta(p)$ is defined as
\begin{align*}
\Delta(p) := (\frac{R_{p}(p,\theta_H)}{R(p,\theta_H)}-\frac{R_{p}(p,\theta_H)}{R(p,\theta_H)})c'(p)R(p,\theta_H).
\end{align*}
Let $I'$ be an interval for which $c'(p)\neq0$,
which means that $\Delta(p)\neq0$ for the same interval.

Combining \cref{eq: spanning lemma WTS neq 0 1} and \cref{eq: spanning lemma WTS neq 0 2} with \cref{eq: spanning lemma WTS neq 0 3} and \cref{eq: spanning lemma WTS neq 0 4}, we have
\begin{align}
a(p) & =\frac{R_{pp}(p,\theta)R_{p}(p,\theta_H)-R_{p}(p,\theta)R_{pp}(p,\theta_H)}{R_{p}(p,\theta_H)R_{pp}(p,\theta_H)-R_{p}(p,\theta_H)R_{pp}(p,\theta_H)}\nonumber \\
 & =c(p)+\frac{\Delta(p) R_{p}(p,\theta_H)}{R_{p}(p,\theta_H)R_{pp}(p,\theta_L)-R_{p}(p,\theta_L)R_{pp}(p,\theta_H)} \label{eq: spanning lemma WTS neq 0 5}\\
b(p) & =\frac{-R_{pp}(p,\theta)R_{p}(p,\theta_H)+R_{p}(p,\theta)R_{pp}(p,\theta_H)}{R_{p}(p,\theta_H)R_{pp}(p,\theta_H)-R_{p}(p,\theta_H)R_{pp}(p,\theta_H)} \nonumber \\
 & =d(p)-\frac{\Delta(p) R_{p}(p,\theta_L)}{R_{p}(p,\theta_H)R_{pp}(p,\theta_L)-R_{p}(p,\theta_L)R_{pp}(p,\theta_H)}.\label{eq: spanning lemma WTS neq 0 6}
\end{align}

Now let us evaluate 
\begin{align}
U_p(p,\theta) - a(p) U_p(p,\theta_L) - b(p) U_p(p,\theta_H)\label{eq: spanning lemma WTS neq 0}
\end{align}
using the above two equations and show that it is not equal to zero.  From the definition of $U$, when characterizing $\alpha$-IMG the above expression is
\begin{align*}
U_p(p,\theta) - a U_p(p,\theta_L) - b U_p(p,\theta_H) &= \alpha \Big(CS_p(p,\theta) - a(p) CS_p(p,\theta_L) - b(p) CS_p(p,\theta_H)\Big) \\
&+ (1-\alpha) \Big(R_p(p,\theta) - a(p) R_p(p,\theta_L) - b(p) R_p(p,\theta_H)\Big) \\
&= -\alpha \Big(D(p,\theta) - a(p) D(p,\theta_L) - b(p) D(p,\theta_H)\Big)\\
&= \frac{-\alpha}{p} \Big(R(p,\theta) - a(p) R(p,\theta_L) - b(p) R(p,\theta_H) \Big).
\end{align*}
When characterizing $\alpha$-IMB, we want to show that the negative of the above expression is non-zero.  In either case, because $\alpha > 0$, to show that \cref{eq: spanning lemma WTS neq 0} is not zero, it is sufficient to show that
\begin{align*}
    R(p,\theta) - a(p) R(p,\theta_L) - b(p) R(p,\theta_H) \neq 0.
\end{align*}

For this, use \cref{eq: spanning lemma WTS neq 0 5} and \cref{eq: spanning lemma WTS neq 0 6} to write
\begin{align*}
& R(p,\theta)-a(p)R(p,\theta_L)-b(p)R(p,\theta_H) \\
 = & R(p,\theta)-c(p) R(p,\theta_L)-d(p) R(p,\theta_H)-\Delta(p) \frac{R_{p}(p,\theta_H)R(p,\theta_L)-R_{p}(p,\theta_L)R(p,\theta_H)}{R_{p}(p,\theta_H)R_{pp}(p,\theta_L)-R_{p}(p,\theta_L)R_{pp}(p,\theta_H)}\\
 = & -\Delta(p) \frac{R_{p}(p,\theta_H)R(p,\theta_L)-R_{p}(p,\theta_L)R(p,\theta_H)}{R_{p}(p,\theta_H)R_{pp}(p,\theta_L)-R_{p}(p,\theta_L)R_{pp}(p,\theta_H)}.
\end{align*}
For any $p \in I'$, this expression is non-zero because $\Delta(p) \neq 0$, which proves the claim.
\end{proof}

Given \autoref{lem: spanning consequence}, we now complete the proof of statement (i) of \autoref{thm:NSC}.

\paragraph{Proof of statement (i).} We have already shown in \cref{subsec:partial inclusion necessary} that partial inclusion is necessary for the welfare-monotonicity properties.  We therefore assume partial inclusion here and establish the necessity and sufficiency of the reduction.

The case where $\min_\theta p(\theta) = \max_\theta p(\theta)$ is straightforward.  In this case, both welfare-monotonicity properties as well as the conditions of the statement hold trivially.  So suppose for the rest of the proof that $\min_\theta p(\theta) < \max_\theta p(\theta)$.

\paragraph{Necessity.} Suppose there is partial inclusion.  Recall from \autoref{prop: duality converted to our setting} that welfare-monotonicity holds if and only if there exists a function $\zeta$ such that for every ${p} \in I$ and every type $\theta$,
\begin{align}
	{p} \in \arg\max_{p'\in I} U(p',\theta) - \zeta({p},p')R_{p}(p',\theta).\label{eq: strong duality repeated}
\end{align}

First, notice that if \cref{eq: strong duality repeated} is satisfied for all $\theta \in \Theta$, then it must be satisfied for $\theta_L,\theta_H$ that have the lowest and the highest monopoly price in $\Theta$, and therefore welfare-monotonicity holds for the binary family that consists only of $\theta_L,\theta_H$.

Now consider any pair $\theta,\theta'$ of types for which $p(\theta) < p(\theta')$ and any ${p} \in (p(\theta), p(\theta'))$.  \cref{eq: strong duality repeated} implies
\[
U(p',\theta)-\zeta({p},p')R_p(p',\theta)\leq U({p},\theta)-\zeta({p},{p})R_p({p},\theta),\forall p'\in (p(\theta), p(\theta')).
\]
Because $R_p(p',\theta)<0$, we can divide the above inequality by $R_p(p',\theta)$
and write it as
\[
\zeta({p},p')\leq\frac{U(p',\theta)-U({p},\theta)+\zeta({p},{p})R_p({p},\theta)}{R_p(p',\theta)}.
\]
A similar argument repeated for $\theta'$, but with the difference that $R_p(p',\theta')>0$, implies
\[
\zeta({p},p')\geq\frac{U(p',\theta')-U({p},\theta')+\zeta({p},{p})R_p({p},\theta')}{R_p(p',\theta')}.
\]
Therefore, we must have that for all $p,p'\in (p(\theta), p(\theta'))$:
\[
\frac{U(p',\theta)-U({p},\theta)+\zeta({p},{p})R_p({p},\theta)}{R_p(p',\theta)} \geq \frac{U(p',\theta')-U({p},\theta')+\zeta({p},{p})R_p({p},\theta')}{R_p(p',\theta')}.
\]
Note that evaluated at $p'={p}$, both sides of the above
are equal to $\zeta({p},{p})$. Because both sides are
continuously differentiable, it has to be that they are tangent at
$p'={p}$. Therefore, 
\[
\frac{U_p({p},\theta)}{R_p({p},\theta)}-\zeta({p},{p}) \frac{R_{pp}({p},\theta)}{R_p({p},\theta)} = \frac{U_p({p},\theta')}{R_p({p},\theta')}-\zeta({p},{p}) \frac{R_{pp}({p},\theta')}{R_p({p},\theta')}
\]
which pins down $\zeta({p},{p})$,
\begin{align}
    \zeta({p},{p}) = \frac{\frac{U_p({p},\theta)}{R_p({p},\theta)}-\frac{U_p({p},\theta')}{R_p({p},\theta')}}{\frac{R_{pp}({p},\theta)}{R_p({p},\theta)}-\frac{R_{pp}({p},\theta')}{R_p({p},\theta')}} = \frac{U_p({p},\theta)R_p({p},\theta')-U_p({p},\theta')R_p({p},\theta)}{R_{pp}({p},\theta)R_p({p},\theta')-R_{pp}({p},\theta')R_p({p},\theta)}.
\end{align}

Now suppose the spanning condition of the theorem is violated, so there exists some $\theta$ whose demand curve cannot be written as a linear combination of $D(\cdot,\theta_L)$ and $D(\cdot,\theta_H)$ over the interval $I$.  \autoref{lem: spanning consequence} implies that that there is price ${p}\in I$ and $a,b$ such that $R_p({p},\theta) \neq 0$ and
\begin{align*}
R_p({p},\theta) & =a R_p({p},\theta_L)+b R_p({p},\theta_H)\\
R_{pp}({p},\theta) & =a R_{pp}({p},\theta_L)+b R_{pp}({p},\theta_H)\\
U_p({p},\theta) & \neq a U_p({p},\theta_L)+b U_p({p},\theta_H).
\end{align*}

Because $R_p({p},\theta_L) < 0 < R_p({p},\theta_H)$ and $R_p({p},\theta) \neq 0$, the marginal revenue of $\theta$ has a different sign with either $\theta_L,\theta_H$.  Suppose $0 < R_p({p},\theta)$ (the other case is similar).  Then our discussion above pins down $\zeta({p},{p})$ in two different ways
\begin{align*}
    \zeta({p},{p}) = \frac{U_p({p},\theta_L)R_p({p},\theta_H)-U_p({p},\theta_H)R_p({p},\theta_L)}{R_{pp}({p},\theta_L)R_p({p},\theta_H)-R_{pp}({p},\theta_H)R_p({p},\theta_L)}
\end{align*}
and
\begin{align*}
    \zeta({p},{p}) =\frac{U_p({p},\theta_L)R_p({p},\theta)-U_p({p},\theta)R_p({p},\theta_L)}{R_{pp}({p},\theta_L)R_p({p},\theta)-R_{pp}({p},\theta)R_p({p},\theta_L)}.
\end{align*}

But these two expressions cannot be equal by \autoref{lem: spanning consequence}, implying that $\zeta$ satisfying \cref{eq: strong duality repeated} does not exist.  To see this, let us use \autoref{lem: spanning consequence} and the above two equalities to write
\begin{align*}
    \zeta({p},{p}) &= \frac{U_p({p},\theta_L)R_p({p},\theta)-U_p({p},\theta)R_p({p},\theta_L)}{R_{pp}({p},\theta_L)R_p({p},\theta)-R_{pp}({p},\theta)R_p({p},\theta_L)} \\
    &\neq \frac{U_p({p},\theta_L)\Big(a R_p({p},\theta_L)+b R_p({p},\theta_H)\Big)-\Big(a U_p({p},\theta_L)+b U_p({p},\theta_H)\Big) R_p({p},\theta_L)}{R_{pp}({p},\theta_L)\Big(a R_p({p},\theta_L)+b R_p({p},\theta_H)\Big)-\Big(a R_{pp}({p},\theta_L)+b R_{pp}({p},\theta_H)\Big) R_p({p},\theta_L)} \\
    &= \frac{b\Big(U_p({p},\theta_L)R_p({p},\theta_H)-U_p({p},\theta_H)R_p({p},\theta_L)\Big)}{b\Big(R_{pp}({p},\theta_L)R_p({p},\theta_H)-R_{pp}({p},\theta_H)R_p({p},\theta_L)\Big)} \\
    &= \zeta({p},{p}),
\end{align*}
which is a contradiction.

\paragraph{Sufficiency.}  We now prove the sufficiency part of the reduction.  Suppose a family of demand curves $\mathcal{D}$ can be decomposed into two demands that satisfy $\alpha$-IMG ($\alpha$-IMB),
\begin{align*}
	D(p,\theta) & =f_{1}(\theta)D(p,\theta_L)+f_{2}(\theta)D(p,\theta_H), \forall p,\theta.
\end{align*}
We want to show that $\mathcal{D}$ satisfies $\alpha$-IMG ($\alpha$-IMB).  That is, there exists a function $\zeta$ such that for every price ${p} \in I$, we have
\begin{align*}
	{p} \in \arg\max_{p'\in I} U(p',\theta) - \zeta({p},p')R_{p}(p',\theta),
\end{align*}
for $W = V^{\alpha}$ ($W = -V^{\alpha}$).

Notice that because each demand is a linear combination of the two basis demands, each value, revenue, and marginal revenue function can also be written using a linear combination of the corresponding objects for the basis demand curves. Formally,
\begin{align*}
	U(p,\theta) & =f_{1}(\theta)U(p,\theta_L)+f_{2}(\theta)U(p,\theta_H), \\
	R(p,\theta) & =f_{1}(\theta)R(p,\theta_L)+f_{2}(\theta)R(p,\theta_H),\\
	R_p(p,\theta) & =f_{1}(\theta)R_p(p,\theta_L)+f_{2}(\theta)R_p(p,\theta_H).
\end{align*}

Because the family $\{D(p,\theta_L),D(p,\theta_H)\}$ satisfies $\alpha$-IMG ($\alpha$-IMB), for each ${p}\in I$ we have
\begin{align*}
	{p} \in \arg\max_{p' \in I} U(p',\theta_L) - \zeta({p},p')R_{p}(p',\theta_L),\\
	{p} \in \arg\max_{p' \in I} U(p',\theta_H) - \zeta({p},p')R_{p}(p',\theta_H).
\end{align*}
Because ${p}$ maximizes each of the above two expressions, it also maximizes their linear combination,
\begin{align*}
	{p} &\in \arg\max_{p' \in I} f_1(\theta)\bigg(U(p',\theta_L) - \zeta({p},p')R_{p}(p',\theta_L)\bigg)+f_2(\theta) \bigg(U(p',\theta_H) - \zeta({p},p')R_{p}(p',\theta_H)\bigg) \\
	&= \arg\max_{p' \in I} U(p',\theta) - \zeta({p},p')R_{p}(p',\theta)
\end{align*}
as claimed, completing the proof.

\subsubsection{Statement (ii)}
We have already shown in \cref{subsec:partial inclusion necessary} that partial inclusion is necessary for the welfare-monotonicity properties.  
	So suppose there is partial inclusion.   By the concavification result of \citet{KaG11}, $\alpha$-IMB ($\alpha$-IMG) holds if and only if value as a function of $\mu$,
	\begin{align*}
		W^{\alpha}(\mu) = (1-\mu) V^\alpha(p(\mu),\theta_L) + \mu V^\alpha(p(\mu),\theta_H),
	\end{align*}
	is concave where $p(\mu)$ is the profit-maximizing price for the seller.  The partial inclusion condition implies that this price is uniquely defined and $p(\mu) \in [p(\theta_L),p(\theta_H)]$ is the profit-maximizing price for the seller that solves the seller's first-order condition
	\begin{align*}
		(1-\mu) R_p(p(\mu),\theta_L) + \mu R_p(p(\mu),\theta_H) = 0.
	\end{align*}
	
	We identify three consequences of the first-order condition for future use.  First, re-arranging the first-order condition, we have
	\begin{align}
		\frac{\mu}{1-\mu} = \frac{- R_p(p(\mu),\theta_L)}{R_p(p(\mu),\theta_H)}.\label{eq: two types concavity eq1}
	\end{align}
	Second, taking the derivative of the first-order condition with respect to $\mu$, we have
	\begin{align*}
		- R_p(p(\mu),\theta_L) + R_p(p(\mu),\theta_H) + \bigg((1-\mu) R_{pp}(p(\mu),\theta_L) + \mu R_{pp}(p(\mu),\theta_H)\bigg) p_{\mu}(\mu) = 0,
	\end{align*}
	which, after re-arranging, means
	\begin{align}
		p_{\mu}(\mu) = \frac{R_p(p(\mu),\theta_L) - R_p(p(\mu),\theta_H)}{(1-\mu) R_{pp}(p(\mu),\theta_L) + \mu R_{pp}(p(\mu),\theta_H)}.\label{eq: two types concavity eq2}
	\end{align}
	Third, $p(\mu)$ is increasing in $\mu$.  For this, notice that because $p(\theta_L) \leq p(\mu) \leq p(\theta_H)$ and each revenue curve is concave, the first term in the numerator, $R_p(p(\mu),\theta_L)$, is negative and the second term, $R_p(p(\mu),\theta_H)$, is positive, so the numerator is negative. 
    Further, because both revenue curves are strictly concave, the denominator is negative.
	
	Now $W^{\alpha}(\mu)$ is concave if and only if its derivative $W^{\alpha}_\mu(\mu)$ is decreasing.  From the definition of  $W^{\alpha}(\mu)$, and using \cref{eq: two types concavity eq1} and \cref{eq: two types concavity eq2}, we can write its derivative $W^{\alpha}_\mu(\mu)$ as
	\begin{align*}
		W^{\alpha}_\mu(\mu) = & V^\alpha(p(\mu),\theta_H) - V^\alpha(p(\mu),\theta_L)   \\ &+ \frac{V^\alpha_p(p(\mu),\theta_L) - \frac{ R_p(p(\mu),\theta_L)}{R_{p}(p(\mu),\theta_H)} V^\alpha_p(p(\mu),\theta_H)}{R_{pp}(p(\mu),\theta_L) - \frac{R_p(p(\mu),\theta_L)}{R_p(p(\mu),\theta_H)} R_{pp}(p(\mu),\theta_H)} \bigg(R_p(p(\mu),\theta_L) - R_p(p(\mu),\theta_H) \bigg).
	\end{align*}
	
	Because $p(\mu)$ is increasing in $\mu$, $W^{\alpha}_\mu(\mu)$ is decreasing if and only if $W^{\alpha}_\mu(\mu(p))$ is decreasing, where $\mu$ is the inverse of $p$, that is, $p(\mu(p)) = p$.  So we want 
	\begin{align*}
		& V^\alpha(p,\theta_H) - V^\alpha(p,\theta_L)   + \frac{V^\alpha_p(p,\theta_L) - \frac{ R_p(p,\theta_L)}{R_p(p,\theta_H)} V^\alpha_p(p,\theta_H)}{R_{pp}(p,\theta_L) - \frac{R_p(p,\theta_L)}{R_p(p,\theta_H)} R_{pp}(p,\theta_H)} \Big(R_p(p,\theta_L) - R_p(p(\mu),\theta_H) \Big)
	\end{align*}
	to be decreasing in $p$, as claimed.


\subsection{Proof of \autoref{thm:bounds}}\label{app: main result 2 proof}
\subsubsection{Statement (i)}
In addition to proving statement (i) of the theorem, stating that the eigenvalues and eigenvectors of the Hessian of the value function exist and can be calculated in closed form, we explicitly derive the closed forms as discussed in \autoref{sec: general closed form}.

The value function is
\begin{align*}
    W^{\alpha}(\mu) &= \sum_{k=2}^n V^{\alpha}(\theta_k,p(\mu))\mu_k + (1-\sum_{k=2}^n\mu_k) V^{\alpha}(\theta_L,p(\mu)) \\
    &= \sum_{k=2}^n \Delta V^{\alpha}(\theta_k,p(\mu))\mu_k +  V^{\alpha}(\theta_1,p(\mu)),
\end{align*}
The derivative with respect to $\mu_i$ and $\mu_j$ is
\begin{align*}
    W^{\alpha}_{i,j}(\mu)
    &= \sum_{k=1}^n V^{\alpha}_{pp}(\theta_k,p(\mu))p_j(\mu)p_i(\mu)\mu_k \\
    &+ \sum_{k=1}^n V^{\alpha}_p(\theta_k,p(\mu))p_{i,j}(\mu)\mu_k \\ & + \Delta V^{\alpha}_p(\theta_j,p(\mu))p_i(\mu) +  \Delta V^{\alpha}_p(\theta_i,p(\mu))p_j(\mu).
\end{align*}
In matrix form, the Hessian is
\begin{align}
    \nabla^2 W^{\alpha}(\mu) &= \mathbb{E}[V^{\alpha}_{pp}(\theta,p(\mu))] \nabla p(\mu)^T \nabla p(\mu) \nonumber \\
    &+ \mathbb{E}[V^{\alpha}_{p}(\theta,p(\mu))] \nabla^2 p(\mu) \nonumber \\
    &= (\Delta V^{\alpha}_p(\mu))^T \nabla p(\mu) + \nabla p(\mu)^T \Delta V^{\alpha}_p(\mu).\label{eq: Hessian calculation appendix}
\end{align}

To see that $p_{i,j}(\mu)$ can be written in a separable form, recall the first order condition for the optimal price,
\begin{align*}
    0 = \sum_{k=2}^n R_p(\theta_k,p(\mu))\mu_k + (1-\sum_{k=2}^n\mu_k) R_p(\theta_1,p(\mu)).
\end{align*}
Taking the derivative with respect to $\mu_i$ and then $\mu_j$ gives,
\begin{align*}
    0 &= \sum_{k=1}^n R_{ppp}(\theta_k,p(\mu))p_j(\mu)p_i(\mu)\mu_k \\
    &+ \sum_{k=1}^n R_{pp}(\theta_k,p(\mu))p_{i,j}(\mu)\mu_k \\ & + \Delta R_{pp}(\theta_j,p(\mu))p_i(\mu) +  \Delta R_{pp}(\theta_i,p(\mu))p_j(\mu),
\end{align*}
and therefore, in matrix form,
\begin{align}
    \nabla^2 p(\mu) = -\frac{\mathbb{E}[R_{ppp}(\theta,p(\mu))] \nabla p(\mu)^T \nabla p(\mu) + (\Delta R_{pp}(\mu))^T \nabla p(\mu) + \nabla p(\mu)^T \Delta R_{pp}(\mu)}{\mathbb{E}[R_{pp}(\theta,p(\mu))]}\label{eq: Hessian p is separable}
\end{align}
Combining \cref{eq: Hessian calculation appendix} and \cref{eq: Hessian p is separable} gives the separable form of the Hessian in \cref{eq: hessian separated}.  The Hessian matrix has rank at most 2, which means it has at most two non-zero eigenvalues.  We have given the explicit expression for two eigenvalues, which proves that they are the lowest and the highest eigenvalues.

\subsubsection{Statement  (ii) and (iii)}
Consider segmentations $\sigma,\sigma'$ with random markets $\mu,\mu'$ where  $\sigma$ is a mean-preserving spread of $\sigma'$. By Blackwell's theorem, there
exists a probability distribution function $Q\left(\cdot|\mu'\right)\in\Delta\Delta\Theta$ such that 
\begin{align*}
\sigma\left(A\right) & =\int Q\left(A|\mu'\right)\diff\sigma', \forall \text{ Borel } A\subset\Delta\Theta \\
\mu' & =\int\mu \diff Q\left(\mu|\mu'\right),\forall\mu'\in\Delta\Theta.
\end{align*}
This implies that 
\begin{align*}
V^{\alpha}\left(\sigma\right)-V^{\alpha}\left(\sigma'\right) & =\int W^{\alpha}\left(\mu\right)\diff\sigma-\int W^{\alpha}\left(\mu'\right)\diff\sigma'\\
 & =\int\int W^{\alpha}\left(\mu\right)\diff Q\left(\mu|\mu'\right)\diff\sigma'-\int W^{\alpha}\left(\mu'\right)\diff\sigma'\\
 & =\int\int\left[W^{\alpha}\left(\mu\right)-W^{\alpha}\left(\mu'\right)\right]\diff Q\left(\mu|\mu'\right)\diff \sigma'.
\end{align*}

Since $W^{\alpha}$ is a twice differentiable function, we can use
the Cauchy form of the Taylor expansion of $W$ and write
\[
W^{\alpha}\left(\mu\right)-W^{\alpha}\left(\mu'\right)=\left(\mu-\mu'\right)^{T}\nabla W^{\alpha}\left(\mu\right)+\frac{1}{2}\left(\mu-\mu'\right)^{T}\nabla^{2}W^{\alpha}\left(\tilde{\mu}\right)\left(\mu-\mu\right)
\]
where $\tilde{\mu}$ is some belief on the line that connects $\mu$
to $\mu'$. Integrating over the above for a $\mu'$, we must have that 
\begin{align*}
\int\left[W^{\alpha}\left(\mu\right)-W^{\alpha}\left(\mu'\right)\right]\diff Q\left(\mu|\mu'\right) & =\int\left(\mu-\mu'\right)^{T}\nabla W^{\alpha}\left(\mu\right)\diff Q\left(\mu|\mu'\right)\\
 & +\int\frac{1}{2}\left(\mu-\mu'\right)^{T}\nabla^{2}W^{\alpha}\left(\tilde{\mu}\right)\left(\mu-\mu'\right)\diff Q\left(\mu|\mu'\right)\\
 & =\int\frac{1}{2}\left(\mu-\mu'\right)^{T}\nabla^{2}W^{\alpha}\left(\tilde{\mu}\right)\left(\mu-\mu'\right)\diff Q\left(\mu|\mu'\right)
\end{align*}
where in the above we have used the fact that the average value of
$\mu$ under $Q\left(\cdot|\mu'\right)$ is $\mu'$. Because the Hessian matrix is symmetric, we can use the Courant–Fischer–Weyl min-max principle to bound the integrand using the highest eigenvalue $\overline{\lambda}\left(\tilde{\mu}\right)$ of $\nabla^{2}W\left(\tilde{\mu}\right)$,
\begin{align}
\left(\mu-\mu'\right)^{T}\nabla^{2}W^{\alpha}\left(\tilde{\mu}\right)\left(\mu-\mu'\right) & \leq\overline{\lambda}\left(\tilde{\mu}\right)\left(\mu-\mu'\right)^{T}\left(\mu-\mu'\right) \label{eq: bounds eq1} \\
 & \leq\max_{\mu''\in\Delta\Theta}\overline{\lambda}\left(\mu''\right)\left(\mu-\mu'\right)^{T}\left(\mu-\mu'\right), \label{eq: bounds eq2} 
\end{align}
Using the above inequality, we must have that 
\begin{align*}
\int\left[W^{\alpha}\left(\mu\right)-W^{\alpha}\left(\mu'\right)\right]\diff Q\left(\mu|\mu'\right) & \leq\frac{1}{2}\max_{\mu''\in\Delta\Theta}\overline{\lambda}\left(\mu''\right)\int\left(\mu-\mu'\right)^{T}\left(\mu-\mu'\right)\diff Q\left(\mu|\mu'\right)\\
 & =\frac{1}{2}\max_{\mu''\in\Delta\Theta}\overline{\lambda}\left(\mu''\right)\int\sum_{i\geq2}\left[\left(\mu_{i}\right)^{2}-\left(\mu'_{i}\right)^{2}\right]\diff Q\left(\mu|\mu'\right)
\end{align*}
Integrating over the above establishes the upper bound. The derivation
of the lower bound follows the same logic, inverting the two inequalities \cref{eq: bounds eq1} and \cref{eq: bounds eq2} and replacing max with min.

To see that the bounds are tight, note that we can always
get arbitrarily close to $\max_{\mu''\in\Delta\Theta}\overline{\lambda}\left(\mu''\right)$
with full-support markets. Let $\hat{\mu}$ be a full-support market that achieves that maximum minus $\epsilon$. If we let $\sigma'$
be the no information segmentation associated with $\hat{\mu}$, as
$\sigma$ converges to $\sigma'$, $\tilde{\mu}$ in the Taylor expansion of $W^{\alpha}$
converges to $\hat{\mu}$. Since $\hat{\mu}$ is interior, it can
be perturbed in all directions including the direction that achieves
the maximum value.  This proves that the bound is tight with a gap of at most $\epsilon$ for any arbitrary $\epsilon$.

Finally, we prove statement (iii) regarding the optimal direction. For any segmentation $\sigma_{\epsilon}$, the Cauchy form of the Taylor expansion of $W$ implies that
\begin{align}
    V^{\alpha}(\sigma_{\epsilon}) - W^{\alpha}(\mu_0)  & =\int\left[W^{\alpha}\left(\mu\right)-W^{\alpha}\left(\mu_0\right)\right]\diff \sigma_{\epsilon} \nonumber \\
    &= \int\left[W^{\alpha}\left((1-\epsilon)\mu_0+\epsilon \mu\right)-W^{\alpha}\left(\mu_0\right)\right]\diff \sigma \nonumber \\
    &=\int\frac{\epsilon^2}{2}\left(\mu-\mu_0\right)^{T}\nabla^{2}W^{\alpha}\left(\tilde{\mu}_{\epsilon}\right)\left(\mu-\mu_0\right)\diff \sigma \label{eq: best direction eq1}
\end{align}
where $\tilde{\mu}_{\varepsilon}$ is a market on the line that connects
$(1-\epsilon)\mu_0+\epsilon \mu$ to $\mu_0$. Note that
in the above as $\varepsilon\rightarrow0$, $\tilde{\mu}_{\varepsilon}\rightarrow \mu_0$
and hence, we can write
\begin{align}
\lim_{\varepsilon\rightarrow0}\frac{V^{\alpha}(\sigma_{\epsilon}) - W^{\alpha}(\mu_0)}{\varepsilon^{2}}&=\frac{1}{2}\int\left(\mu-\mu_0\right)^{T}\nabla^{2}W\left(\mu_0\right)\left(\mu-\mu_0\right)\diff \sigma \nonumber \\
&\leq \frac{1}{2}\int \overline{\lambda}\left({\mu_0}\right)\left(\mu-\mu_0\right)^{T}\left(\mu-\mu_0\right) \diff \sigma \nonumber  \\
&= \frac{1}{2}\overline{\lambda}(\mu_0) \left[\mathbb{E}_{\sigma}\left[\left\Vert \mu\right\Vert _{2}^{2} \right]  - \left\Vert \mu_0\right\Vert _{2}^{2} \right] \label{eq: best direction eq2}
\end{align}
where the inequality follows from the Courant–Fischer–Weyl min-max principle, with equality if every market $\mu'$ in the support of $\sigma$ is an eigenvector associated with $\overline{\lambda}(\mu_0)$.

Now consider any segmentation $\sigma$.  From \cref{eq: best direction eq1}, notice that if we divide the difference in values $V^{\alpha}(\sigma_{\epsilon}) - W^{\alpha}(\mu_0)$ by $\varepsilon^2$ instead of $\varepsilon$, the limit is zero,
\begin{align*}
    \lim_{\varepsilon\rightarrow0}\frac{V^{\alpha}(\sigma_{\epsilon}) - W^{\alpha}(\mu_0)}{\varepsilon} = 0.
\end{align*}
As a result, if a segmentation achieves the upper bound \cref{eq: best direction eq2} with equality, there for any other segmentation $\sigma'$ of the same magnitude, $\mathbb{E}_{\sigma}\left[\left\Vert \mu\right\Vert _{2}^{2}\right] = \mathbb{E}_{\sigma'}\left[\left\Vert \mu\right\Vert _{2}^{2}\right]$, there exists $\varepsilon$ small enough such that 
\begin{align*}
    V^{\alpha}(\sigma_{\epsilon}) - W^{\alpha}(\mu_0) \geq V^{\alpha}(\sigma'_{\epsilon}) - W^{\alpha}(\mu_0),
\end{align*}
and therefore, $V^{\alpha}(\sigma_{\epsilon}) \geq V^{\alpha}(\sigma'_{\epsilon})$.


\newpage
	\begin{center}
		{\Huge \bf Online Appendix}
	\end{center}

\section{Proofs of Corollaries} 

\subsection{Proof of \autoref{corol: monotonicity in alpha}}
\begin{proof}

Separating weighted surplus into consumer surplus and revenue, the expression from \cref{eq:main} is equivalent to
\begin{align*}
	&\alpha \left[CS(p,\theta_H) - CS(p,\theta_L) +  \frac{-\frac{R_p(p,\theta_L)}{R_p(p,\theta_H)}CS_p(p,\theta_H) + CS_p(p,\theta_L)}{-\frac{R_p(p,\theta_L)}{R_p(p,\theta_H)}R_{pp}(p,\theta_H) + R_{pp}(p,\theta_L)}\left(R_p(p,\theta_H) - R_p(p,\theta_L)\right)\right] \\
       +& (1-\alpha) \left[R(p,\theta_H) - R(p,\theta_L)\right].
\end{align*}
The above expression is a convex combination of two functions with weights $\alpha$ and $1-\alpha$.  The second function $R(p,\theta_H) - R(p,\theta_L)$ is increasing over $(p(\theta_L),p(\theta_H))$ because $R(p,\theta_H)$ is increasing for any price below the optimal monopoly price $p(\theta_H)$ for type $\theta_H$ and $R(p,\theta_L)$ is decreasing for any price above the optimal monopoly price $p(\theta_L)$ for type $\theta_L$.  Now suppose the convex combination is decreasing, so $\alpha$-IMB holds, and consider a larger weight $\alpha' \geq \alpha$.  Because the convex combination is decreasing but the second term is increasing, the first term must be decreasing.  So if we replace $\alpha$ with the larger $\alpha'$, we are increasing the weight of the decreasing function and decreasing the weight of the increasing function, so the resulting combination must be decreasing as well, implying information is monotonically bad for $\alpha'$.  A similar argument applies for the case of monotonically good information.
\end{proof}


\subsection{Proof of \autoref{cor: sufficient conditions for IMB-IMG}}\label{app: sufficients}
\begin{proof}
	Consider each of the terms in \cref{eq: the three effects}.
	
	The first term corresponds to the within-type price change effect.  It is
	\begin{align*}
		(p_{\mu}(\mu))^2 \bigg[(1-\mu)V_{pp}^{\alpha}(p(\mu),\theta_L)  + \mu V_{pp}^{\alpha}(p(\mu),\theta_H)\bigg],
	\end{align*}
	which is positive (negative) whenever 
	\begin{align}
		(1-\mu)V_{pp}^{\alpha}(p(\mu),\theta_L)  + \mu V_{pp}^{\alpha}(p(\mu),\theta_H) \label{eq: cor 2 first term}
	\end{align}
	is positive (negative).  A sufficient condition is that $V^{\alpha}$ is convex (concave), as claimed.
	
	The second term corresponds to the cross-types price change effect.  It is
	\begin{align*}
		2 p_{\mu}(\mu) & \bigg[V^{\alpha}_{p}(p(\mu),\theta_H)  - V^{\alpha}_{p}(p(\mu),\theta_L)\bigg],
	\end{align*}
	which is positive (negative) whenever 
	\begin{align*}
		V^{\alpha}_{p}(p(\mu),\theta_H)  - V^{\alpha}_{p}(p(\mu),\theta_L)
	\end{align*}
	is positive (negative) because $p(\mu)$ is increasing in $\mu$.
	
	The third term corresponds to the price curvature effect.  It is
	\begin{align*}
		p_{\mu\mu}(\mu) & \bigg[(1-\mu)V_p^{\alpha}(p(\mu),\theta_L)  + \mu V_p^{\alpha}(p(\mu),\theta_H)\bigg].
	\end{align*}
	Notice that from the definition of $V^{\alpha}$,
	\begin{align*}
		(1-\mu)V_p^{\alpha}(p(\mu),\theta_L)  + \mu V_p^{\alpha}(p(\mu),\theta_H) & = \alpha \bigg( (1-\mu)CS_p(p(\mu),\theta_L)  + \mu CS_p(p(\mu),\theta_H) \bigg) \\
		&+ (1-\alpha) \bigg((1-\mu)R_p(p(\mu),\theta_L)  + \mu R_p(p(\mu),\theta_H)\bigg) \\
		&= \alpha \bigg( (1-\mu)CS_p(p(\mu),\theta_L)  + \mu CS_p(p(\mu),\theta_H) \bigg), \\
		&\leq 0,
	\end{align*}
	where the second equality follows because of the first-order condition of the seller's profit-maximization problem, and the inequality follows because consumer surplus is decreasing in price.
	So the third term is positive (negative) whenever $p_{\mu\mu}(\mu)$ is negative (positive).
	
	To complete the proof, we relate the curvature of $p$ to the two sufficient conditions in the third bullet of the corollary.  For this, let us take two derivatives of the seller's profit-maximization condition,
	\begin{align*}
		(1-\mu)R_p(p(\mu),\theta_L)  + \mu R_p(p(\mu),\theta_H) = 0.
	\end{align*}
	The first derivative implies
	\begin{align*}
		- R_p(p(\mu),\theta_L)  + R_p(p(\mu),\theta_H)
		+ p_\mu(\mu) \bigg[(1-\mu)R_{pp}(p(\mu),\theta_L)  + \mu R_{pp}(p(\mu),\theta_H)\bigg]= 0.
	\end{align*}
	The second derivative implies
	\begin{eqnarray}
		& (p_{\mu}(\mu))^2 & \bigg[(1-\mu)R_{ppp}(p(\mu),\theta_L)  + \mu R_{ppp}(p(\mu),\theta_H)\bigg] \nonumber \\
		& +  2 p_{\mu}(\mu) & \bigg[R_{pp}(p(\mu),\theta_H)  - R_{pp}(p(\mu),\theta_L)\bigg] \nonumber\\
		& + p_{\mu\mu}(\mu) & \bigg[(1-\mu)R_{pp}(p(\mu),\theta_L)  + \mu R_{pp}(p(\mu),\theta_H)\bigg] = 0.
	\end{eqnarray}
	Note that $p_{\mu\mu}(\mu)$ is multiplied by a negative term because both revenue curves are concave.  So $p_{\mu\mu}(\mu)$ is positive (negative) if the sum of the other two terms are positive (negative).  The first term is positive (negative) whenever 
	\begin{align*}
		(1-\mu)R_{ppp}(p(\mu),\theta_L)  + \mu R_{ppp}(p(\mu),\theta_H)
	\end{align*}
	is positive (negative), which is the case if $R_{ppp}(p,\theta)$ is positive (negative) for all $p$ and both $\theta_L$ and $\theta_H$.  The second term is positive (negative) whenever 
	\begin{align*}
		R_{pp}(p(\mu),\theta_H)  - R_{pp}(p(\mu),\theta_L)
	\end{align*}
	is positive (negative), completing the proof.
\end{proof}

\paragraph{Sufficient conditions for monotonically bad information.}
\autoref{cor: sufficient conditions for IMB-IMG} implies that assuming $R_{ppp}(p,\theta) \geq 0$, that is, the marginal revenue curves are convex, and assuming an additional condition that each $V^{1/2}$ is concave, $\tfrac{1}{2}$-IMB (and therefore $\alpha$-IMB for any $\alpha \geq \tfrac{1}{2}$) holds if the more elastic demand curve, $\theta_L$, has 
\begin{enumerate}
	\item a higher derivative, $D_p(p,\theta_L) \geq D_p(p,\theta_H)$, and,
	\item a more convex revenue curve, $R_{pp}(p,\theta_H) \geq R_{pp}(p,\theta_L)$, or equivalently a more steep marginal revenue curve.
\end{enumerate}

The cross-groups price change effect is negative because $D_p(p,\theta_L) \geq D_p(p,\theta_H)$.  The assumption that $V^{1/2}$ is concave is concave implies that the within-group price change effect is negative.  The conditions that $R_{pp}(p,\theta_H) \geq R_{pp}(p,\theta_L)$ and $R_{ppp}(p,\theta) \geq 0$ ensure the price curvature effect is negative by \autoref{cor: sufficient conditions for IMB-IMG}.

\subsection{Proof of \autoref{prop:simplifiedFmility}}
\begin{proof}

Let us without loss of generality parameterize the family as $\mathcal{D} = \{ a(\theta)(D(p)+b(\theta))\}_\theta$.  The first step is to show that $\mathcal{D}$ can be separated as in the theorem. 

Let $\theta_L,\theta_H$ be the demands in the family with the lowest and highest $b$, that is, $b(\theta_L) = \underline{b}$, $b(\theta_H) = \overline{b})$ (choosing arbitrarily if there are multiple candidates). Define
\begin{align*}
    f_1(\theta) &= \frac{a(\theta)(\overline{b} - b(\theta))}{{a}(\theta_L)(\overline{b}-\underline{b})} \\
    f_2(\theta) &= \frac{a(\theta)(b(\theta) - \underline{b})}{{a}(\theta_H)(\overline{b}-\underline{b})}.
\end{align*}

We need to check that for all $\theta = (a,b)$,
\begin{align*}
    f_1(\theta)D(p,\theta_L) + f_2(\theta)D(p,\theta_H) = D(p,\theta),
\end{align*}
That is,
\begin{align*}
    &f_1(\theta){a}(\theta_L) + f_2(\theta){a}(\theta_H) = a, \\
    &f_1(\theta)a(\theta_L)b(\theta_L) + f_2(\theta)a(\theta_H)b(\theta_H) = ab.
\end{align*}
These two equations hold because
\begin{align*}
    &f_1(\theta){a}(\theta_L) + f_2(\theta){a}(\theta_H) = \frac{a(\theta)}{\overline{b}-\underline{b}}((\overline{b} - b(\theta)) + ({b}(\theta)-\underline{b})) = a.\\
    &f_1(\theta)a(\theta_L)b(\theta_L) + f_2(\theta)a(\theta_H)b(\theta_H)=\frac{a(\theta)}{\overline{b}-\underline{b}}((\overline{b} - b(\theta))\underline{b} + ({b}(\theta)-\underline{b})\overline{b}) = ab.
\end{align*}
The second step is to characterize the welfare-monotonicity properties for the binary family $\{D(p,\theta_L),D(p,\theta_H)\}$.  In order to save on notation, suppose without loss of generality that $a(\theta_H) = b(\theta_H) = 1$, that is, $D(p,\theta_H) = D(p)$, and let $a = a(\theta_L),b=b(\theta_L)$.  We want to show that the family $\{a(D(p)+b),D(p)\}$, satisfies $\alpha$-IMB ($\alpha$-IMG) if and only if there is partial inclusion and
  \begin{align*}
        (2\alpha-1) p + \alpha (\frac{pD'(p)}{R''(p)})
  \end{align*}
  is increasing (decreasing) over $[(R')^{-1}(-{b}),(R')^{-1}(0)]$.

  We have
    \begin{align*}
        &D_1 = a (D_2 + b) \\
        &R_1 = a R_2 + apb \\
        &R'_1 = a R'_2 + ab \\
        &R''_1 = a R''_2 \\
        &V_1 = a V_2 + ab(\alpha \overline{p} + (1-2\alpha)p) \\
        &V'_1 = a V'_2 + ab(1-2\alpha).
    \end{align*}
  So
  \begin{align*}
    &V_2(p) - V_1(p) + \frac{-\frac{R'_1(p)}{R'_2(p)}V'_2 + V'_1}{-\frac{R'_1(p)}{R'_2(p)}R''_2 + R''_1}(R'_1(p) - R'_2(p)) \\
    = & V_2(p) - V_1(p) + \frac{-\frac{a R'_2(p) + ab}{R'_2(p)}V'_2 + a V'_2 + ab(1-2\alpha)}{-\frac{a R'_2 + ab}{R'_2(p)}R''_2 + a R''_2}(R'_2(p)(a-1) + ab) \\
    = & V_2(p)(1-a) - ab(\alpha \overline{p} + (1-2\alpha)p) + \frac{V'_2(p) - (1-2\alpha) R'_2(p)}{R''_2(p)}(R'_2(p)(a-1) + ab)
    \end{align*}
    The derivative of the expression is
    \begin{align*}
        & V'_2(p)(1-a) - ab(1-2\alpha) + \frac{V'_2(p) - (1-2\alpha) R'_2(p)}{R''_2(p)}R''_2(p)(a-1) \\& + (\frac{V'_2(p) - (1-2\alpha) R'_2(p)}{R''_2(p)})' (R'_2(p)(a-1) + ab) \\
        = & -(1-2\alpha) R'_2(p)(a-1) - ab(1-2\alpha) + (\frac{V'_2(p) - (1-2\alpha) R'_2(p)}{R''_2(p)})' (R'_2(p)(a-1) + ab) \\
        = & (R'_2(p)(a-1) + ab)(-(1-2\alpha) + (\frac{V'_2(p) - (1-2\alpha) R'_2(p)}{R''_2(p)})').
    \end{align*}
    Notice that $R'_2(p)(a-1) + ab = R'_1(p) - R'_2(p) \leq 0$,

     so we want to show that
    \begin{align*}
        2\alpha - 1 + (\frac{V'_2(p) - (1-2\alpha) R'_2(p)}{R''_2(p)})' \geq 0.
    \end{align*}
    Notice also that $V'_2(p) - (1-2\alpha) R'_2(p) = \alpha pD'(p)$.  So we want to show that
    \begin{align*}
        2\alpha - 1 + (\frac{\alpha pD'_2(p)}{R''_2(p)})' \geq 0.
    \end{align*}
    In other words we want to show that
    \begin{align*}
        (2\alpha-1) p + \alpha (\frac{pD'_2(p)}{R''_2(p)})
    \end{align*}
    to be increasing in $p$ over $(p(\theta_L),p(\theta_H)) = [(R^{-1}_2)'(-{b}),(R^{-1}_2)'(0)]$, as claimed.

    To see that log-concavity of $f$ is sufficient when $\alpha = 1/2$, note that it is sufficient for 
    \begin{align*}
        \frac{pD'(p)}{R''_2(p)}
    \end{align*}
    to be increasing everywhere, or for 
    \begin{align*}
        \frac{R''_2(p)}{pD'(p)} = \frac{2D'(p) + pD''(p)}{pD'(p)} = \frac{2}{p} + \frac{-D''(p)}{-D'(p)} = \frac{2}{p} + (log -D'(p))' = \frac{2}{p} + (\log f(p))'
    \end{align*}
    to be decreasing.  It is sufficient for $(\log f(p))'$ to be decreasing, that is, $f$ is log-concave.
\end{proof}

\subsubsection{Three Cases for How Information Affects CS and TS}\label{onlineapp: three cases}
Consider
\begin{align*}
	f(p) = \frac{c_1(c_2+c_3p)^{c_4}}{p^2}.
\end{align*}
This density function is positive if $c_1 \geq 0$ and $c_2+c_3p \geq 0$ for all $p \in [\underline{p},\overline{p}]$.  Then
\begin{align*}
	\log p^2f(p) = \log c_1 + {c_4}\log(c_2+c_3p).
\end{align*}
Because $D'(p) = -f(p)$ and $R''(p) = 2D'(p) + pD''(p) = -2f(p) - pf'(p)$, we have
\begin{align*}
	\frac{pD'(p)}{R''(p)} = \frac{pf(p)}{2f(p) + pf'(p)} = (\frac{2}{p} + \frac{f'(p)}{f(p)})^{-1} = (\frac{d}{dp} \log p^2f(p))^{-1} = \frac{c_2 + c_3p}{c_3c_4} = \frac{c_2}{c_3c_4} + \frac{p}{c_4}.
\end{align*}
Therefore, the expression in \cref{eq: simplified expression for aD+b} is
\begin{align*}
	(2\alpha-1)p + \alpha (\frac{pD'(p)}{R''(p)}) = p(2\alpha - 1 + \frac{\alpha}{c_4}) +  \frac{\alpha c_2}{c_3c_4},
\end{align*}
which is increasing (decreasing) whenever the multiplier of $p$ is positive (negative), that is, when
\begin{align*}
	\alpha (2  +\frac{1}{c_4}) \geq (\leq)\ 1.
\end{align*}

Now consider three cases for $c_4$.  When $2  +\frac{1}{c_4} \leq 1$, that is, when $-1 \leq c_4 \leq 0$, then $\alpha(2  +\frac{1}{c_4}) \leq 1$ for all $\alpha$ and therefore information is monotonically good regardless of $\alpha$.  When $2  +\frac{1}{c_4} \geq 1$, that is, when $c_4 \leq -1$ or $c_4 \geq 0$, $\alpha$-IMB ($\alpha$-IMG) holds whenever
\begin{align}
	\alpha \geq (\leq)\ \hat{\alpha}:= \frac{1}{2  +\frac{1}{c_4}} = \frac{c_4}{2c_4 + 1}.\label{eq: threshold on alpha for aD+b example}
\end{align}
This threshold $\hat{\alpha} \in (0,1]$ is such that information is monotonically bad for all $\alpha$ above the threshold, is monotonically good for all $\alpha$ below the threshold, and has no effect on $\hat{\alpha}$-surplus.  When $c_4 \leq -1$, $\hat{\alpha}$ ranges from $\frac{1}{2}$ to $1$.  This means that information is monotonically good for total surplus but bad for consumer surplus.  When $c_4 \geq 0$, $\hat{\alpha}$ ranges from $0$ to $\frac{1}{2}$.  This means that information is monotonically good for both total surplus and consumer surplus.

\paragraph{Concavity of revenue.} We next verify that this example indeed has a concave revenue function.  For this, let us calculate the derivative of the density,
\begin{align*}
	f'(p) = \frac{c_1c_3c_4(c_2+c_3p)^{c_4-1}}{p^2}-\frac{2c_1(c_2+c_3p)^{c_4}}{p^3}.
\end{align*}
Therefore,
\begin{align*}
	R''(p) = -2f - pf' &= \frac{c_1(c_2+c_3p)^{c_4-1}}{p^2}\bigg(-2(c_2+c_3p) -pc_3c_4 +2(c_2+c_3p) \bigg) \\
	&= \frac{-c_1c_3c_4(c_2+c_3p)^{c_4-1}}{p},
\end{align*}
which is negative because $c_1 \geq 0$, $c_2+c_3p \geq 0$ for all $p \in [\underline{p},\overline{p}]$, and $c_3,c_4$ have the same signs.

    \section{Derivation for Examples} \label{OnlineAapp:examples}

\subsection{\autoref{ex: linear with shifts}}\label{app: linear with shifts}
First consider IMG. In this example, both marginal revenue curves are linear because $R_{ppp}(p,\theta) = 0$.  Additionally, the two marginal revenue curves have the same slope, $-2$.  Therefore, the only condition left to check is $D(p,\theta_1) \geq D(p,\theta_2)$ to ensure that the cross-types price change effect is positive.   This condition is
\begin{align*}
	c_1 - c_2 \geq (a_2 - a_1)p
\end{align*}
for all prices in $[\frac{a_1}{2},\frac{a_2}{2}]$.  Because $a_2 \geq a_1$, the right hand side of the above inequality is increasing in $p$ and takes its highest value at $p = p(\theta_2)=\frac{a_2}{2}$, so the above condition becomes
\begin{align*}
	c_1 - c_2 \geq (a_2 - a_1)\frac{a_2}{2}.
\end{align*}

Now consider IMB. Again, the only conditions to check are regarding the ranking of the derivatives and the technical condition $D_p(p,\theta) + pD_{pp}(p,\theta) \leq 0$ that implies that the within-type price change effect is negative.  The condition $D_p(p,\theta_2) \leq D_p(p,\theta_1)$ becomes
\begin{align*}
	c_1 \leq c_2.
\end{align*} 
The condition $D_p(p,\theta) + pD_{pp}(p,\theta) \leq 0$  becomes
\begin{align*}
	c_i \leq p^2
\end{align*}
for all prices in $[\frac{a_1}{2},\frac{a_2}{2}]$ and each $i\in \{1,2\}$, which given the condition $c_1 \leq c_2$ can be summarized as $c_2 \leq \frac{a_1^2}{4}$.

\subsection{\autoref{example:CES}} \label{app:CES}

For a constant-elasticity (CE) demand curve $D(p, \theta)=(c+p)^{-\theta}$, the optimal monopoly price is given by
\begin{align*}
    p(\theta)=\frac{c}{\theta-1}.
\end{align*}
Thus, we require $c>0$ and $\theta>1$ to ensure that a finite monopoly price exists when the monopolist faces each demand curve. To make the exposition of the proof more intuitive, we define $\theta_1=\theta_H$ and $\theta_2=\theta_L$, which means $\theta_H > \theta_L$, and use $(\theta_H,\theta_L)$ instead of $(\theta_1,\theta_2)$. Thus, the statement of the example is
						\begin{align*}
			|\theta_H-\theta_L| \leq \frac{1}{2} \Rightarrow \alpha-\text{IMB holds for } \alpha\geq \frac{1}{2}  
			\end{align*}

 Let $x=\theta_H-\theta_L$, $0<x\leq \frac{1}{2}$. The derivative of the expression in \cref{{eq:main}} simplifies to
            \begin{align*}
-&\frac{(c+p)^{-\theta_L-x-1} \left((c-\theta_L p+p) (c+p)^x-c+\theta_L p+p x-p\right) }{2 \left(2 c^2-c p (2 \theta_L+x-3)+(\theta_L-1) p^2 (\theta_L+x-1)\right)^2} \times\\
& \Big(2 c^4+8 c^3 p-c^2 p^2 (\theta_L (3 \theta_L+2)+3 \theta_L x+x-11)+c p^3 (2 \theta_L+x-3) (\theta_L (\theta_L+x)-2)\\
& \;\;-(\theta_L-1) p^4 (\theta_L+x-1) (\theta_L (\theta_L+x)-1)\Big)
            \end{align*}
The denominator is positive. Thus, in order to show that the derivative is negative for all $p$, we need to show that the numerator is positive.


\begin{enumerate}
\item $(c+p)^{-(\theta_L+x+1)}\geq 0$: This holds because $c,p \geq 0$.

\item $ (c-\theta_L p+p) (c+p)^x-c+\theta_L p+p x-p\geq 0$.

Simplify the above expression to get
\begin{align*}
&(c-(\theta_L-1)p) (c+p)^x-(c-(\theta_L -1)p)+p x
=& \left((c+p)^x-1\right) (c-(\theta_L-1) p)+px
\end{align*}

As $p(\theta_H)\leq p\leq p(\theta_L)$ and $p(\theta_L)=\frac{c}{\theta_L-1}$, $ c-(\theta_L-1) p\geq 0$. 

If $ \left((c+p)^x-1\right)\geq 0$, we are done. So suppose $\left((c+p)^x-1\right)<0$. In this case, because $c-(\theta_L-1) p$ is maximized at $p=p(\theta_H)=p(\theta_L+x)=\frac{c}{\theta_L+x-1}$, we have
\begin{align}
& \left((c+p)^x-1\right) (c-(\theta_L-1) p)+px \nonumber \\
\geq &
\frac{ cx \left((c+p)^x-1\right)}{\theta_L+x-1}+p x.  \label{eq:intermInt}
\end{align}
The derivative of this expression with respect to $p$ is given by
\begin{align*}
    \frac{x^2 c (c+p)^{x-1}}{\theta_L+x-1}+x,
\end{align*}
which is positive. Thus, \cref{eq:intermInt} is minimized when $p$ takes its minimum value at $p=p(\theta_H)=p(\theta_L+x)$. In this case, \cref{eq:intermInt}  simplifies to
\begin{align*}
\frac{c x \left(c \left(\frac{1}{\theta_L+x-1}+1\right)\right)^x}{\theta_L+x-1} \geq 0.  
\end{align*}
Thus, the minimum value of $ \left((c+p)^x-1\right) (c-(\theta_L-1) p)+p x$ is at least zero, which completes this part of the proof.

\item             
\begin{align}
& \Big(2 c^4+8 c^3 p-c^2 p^2 (\theta_L (3 \theta_L+2)+3 \theta_L x+x-11)+c p^3 (2 \theta_L+x-3) (\theta_L (\theta_L+x)-2)\nonumber\\
& \;\;-(\theta_L-1) p^4 (\theta_L+x-1) (\theta_L (\theta_L+x)-1)\Big)\geq 0 \label{eq:eq3CES}
\end{align}
We show that the derivative of \cref{eq:eq3CES} with respect to $x$ is negative, which means that it takes its lowest value at the highest value of $x$, $x = \frac{1}{2}$. The derivative is given by
\begin{align*}
   p^2 \left(\theta_L^2 p (3 (c+p)-2 p x)+\theta_L (c+p) (2 p x-3 c)-(c+p)^2-2 \theta_L^3 p^2\right).
\end{align*}
Dividing this expression by $p^2$ does not change its sign.   So we want to show that the following expression is negative
\begin{align}
   \theta_L^2 p (3 (c+p)-2 p x)+\theta_L (c+p) (2 p x-3 c)-(c+p)^2-2 \theta_L^3 p^2\leq 0. \label{eq:eq354CES}
\end{align}




For this, let us first evaluate \cref{eq:eq354CES} at the minimum, i.e. $p(\theta_L+x)$. At $p(\theta_L+x)$, the expression is $-\frac{c^2 (\theta_L+x) (2 (\theta_L-1) \theta_L+(\theta_L+1) x)}{(\theta_L+x-1)^2}$, which is negative.

Next, consider the derivative of \cref{eq:eq354CES} with respect to $p$.  This derivative is given by
\begin{align*}
    &c (\theta_L (3 \theta_L+2 x-3)-2)-2 (\theta_L-1) p (\theta_L(2 \theta_L+2 x-1)-1) \\
    \leq & c (\theta_L (3 \theta_L+2 x-3)-2)-2 (\theta_L-1) p(\theta_L+\frac{1}{2}) (\theta_L(2 \theta_L+2 x-1)-1) \\
     = &   -\frac{c \left(2 \theta_L^3+\theta_L +\theta_L^2 (4 x-3)-6 \theta_L  x+2\right)}{2 \theta_L -1}.
\end{align*}
where the inequality follows because the multiplier of $p$ in the first expression is negative, so the expression takes its largest value at the lowest possible $p$, which is $p = p(\theta_L+x) \geq p(\theta_L+\frac{1}{2}) $.

Consider the term in numerator in parenthesis and rewrite as
\begin{align*}
     \left(2 \theta_L^3+\theta_L +\theta_L^2 (4 x-3)-6 \theta_L  x+2\right)=  \left( (2 \theta_L^3- 3\theta_L^2+\theta_L+2) + 2\theta_L x (2 \theta_L-3) \right).
\end{align*}
We want to show that this term is positive when $\theta_L>1$. Both  $ (2 \theta_L^3- 3\theta_L^2+\theta_L+2)$  and $\theta_L (2 \theta_L-3)$ have a positive derivatives when $\theta_L>1$, so they both achieve their minimum value at $\theta_L=1$. In particular, the minimum value of  $\theta_L (2 \theta_L-3)$ is $-1$,  thus $2\theta_L x (2 \theta_L-3)$ is minimized when $x=\frac{1}{2}$. Putting this together, the minimum value of the parenthesis in the numerator is $1>0$.  As such, the lowest value of the derivative is negative.  

 Next, evaluate \cref{eq:eq3CES}  at $x=\frac{1}{2}$ to get
 \begin{align*}
  \frac{1}{4} (c-(\theta_L-1) p)  \left(8 c^3+8 c^2 (\theta_L+3) p+2 c \left(-2 \theta_L^2+\theta_L+9\right) p^2+\left(4 \theta_L^3-5 \theta_L+2\right) p^3\right) 
 \end{align*}
 To show that the above expression is positive, it is sufficient to show that each of its two constituents are positive

\begin{enumerate}
    \item $(c-(\theta_L-1) p)\geq 0$: as we are only interested in $p(\theta_L+\frac{1}{2})\leq p \leq p(\theta_L)$ and $p(\theta_L)=\frac{c}{\theta_L-1}$, this is true.

\item $\left(8 c^3+8 c^2 (\theta_L+3) p+2 c \left(-2 \theta_L^2+\theta_L+9\right) p^2+\left(4 \theta_L^3-5 \theta_L+2\right) p^3\right) \geq 0$

We will show that this expression is positive at minimum value of $p$, $p=p(\theta_L+\frac{1}{2})$, and that it is increasing in $p.$

First, the value of this expression at $p=p(\theta_H)=p(\theta_L+\frac{1}{2})=\frac{2c}{2\theta_L-1}$ is given by
\begin{align*}
    \frac{16 c^3 (2 \theta_L+1)^2}{(2 \theta_L-1)^2} \geq 0.
\end{align*}

Next, take the derivative with respect to $p$ and simplify to get
\begin{align*}
   & 8 c^2 (\theta_L+3)+4 c \left(-2 \theta_L^2+\theta_L+9\right) p+ 3\left(4 \theta_L^3-5 \theta_L+2\right) p^2\\
=&8 c^2 (\theta_L+3)+36 c p+ (2 \theta_L-1)(3 p^2 \theta_L (2 \theta_L+1) -4 c \theta_L p-6p^2 ). 
\end{align*}
The first two terms are positive. So in order to show that the derivative is positive, we show 
\begin{align*}
3 (2 \theta_L+1) \theta_L p^2-4 c \theta_L p-6 p^2 \geq 0.
\end{align*}
Dividing by $p$
\begin{align*}
& 3 (2 \theta_L+1) \theta_L p-4 c \theta_L-6 p = 6 \left(\theta_L^2-1\right) p+\theta_L (3 p-4 c)
\end{align*}
In order to show that the latter expression is positive, it is sufficient to show that it is positive at minimum value of $p$, $p=p(\theta_H)=p(\theta_L+\frac{1}{2})=\frac{2c}{2\theta_L-1}$:
\begin{align*}
  6 \left(\theta_L^2-1\right) p+\theta_L (3 p-4 c)> & 6 \left(\theta_L^2-1\right) p(\theta_L+\frac{1}{2})+\theta_L (3 p(\theta_L+\frac{1}{2})-4 c) \\
  =&2c\bigg(\theta_L+3(1-\frac{1}{2\theta_L-1})\bigg) \geq 0
\end{align*}
The last line is positive as $\theta_L>1$, which implies $\frac{1}{2\theta_L-1}<1$. 
\end{enumerate}


\end{enumerate}

\paragraph{Necessity of $\theta_H \leq \theta_L + \frac{1}{2}$.} For any pair of $(\theta_H, \theta_L$), derivative of the expression in \cref{{eq:main}} at $p(\theta_H)$ is given by
\begin{align*}
    -\frac{(2 \theta_H-2 \theta_L+1) (\theta_H-\theta_L) \left(\frac{c  \theta_H}{\theta_H-1}\right)^{1-\theta_H}}{2 c \theta_H^2}
\end{align*}
As $\theta_H>\theta_L>1$, this expression is always negative.
Furthermore, for $(\theta_H, \theta_L$), derivative of the expression in \cref{{eq:main}} at $p(\theta_L)$ is given by
\begin{align*}
    \frac{(2 \theta_H-2 \theta_L-1) (\theta_H-\theta_L) \left(\frac{c \theta_L}{\theta_L-1}\right)^{1-\theta_H}}{2 c \theta_L^2}
\end{align*}
 As $\theta_L>1$, this expression is strictly increasing in $\theta_H$ and is zero if $\theta_H=\theta_L+\frac{1}{2}$, which completes the proof. 

\paragraph{The concavity comparison of \citet{ACV10}.}  \citet{ACV10} use a condition that is violated in our example.  In particular, the condition requires that the demand curve with a lower monopoly price, $\theta_1$, is less convex in the sense that
\begin{align*}
    \frac{D_{pp}(p,\theta_1)}{D_p(p,\theta_1)} \geq \frac{D_{pp}(p,\theta_2)}{D_p(p,\theta_2)}.
\end{align*}
In our example,
\begin{align*}
    D_p(p,\theta_i) &= -\theta_i (c+p)^{-\theta_i - 1}, \\
    D_{pp}(p,\theta_i) &= \theta_i(\theta_i+1) (c+p)^{-\theta_i - 2},\\
    \frac{D_{pp}(p,\theta_1)}{D_p(p,\theta_1)} &= - \frac{1+\theta_i}{c+p}.
\end{align*}
Therefore, their ranking requires that
\begin{align*}
    \frac{1+\theta_1}{c+p} \leq \frac{1+\theta_2}{c+p},
\end{align*}
which is violated because $\theta_1 > \theta_2$.

\end{appendix}

\end{document}